\newcommand{\set}[1]{\{#1\}}
\DeclareMathOperator*{\argmin}{argmin}
\DeclarePairedDelimiter\ceil{\lceil}{\rceil}
\newtheorem{theorem}{Theorem}
\newtheorem{lemma}{Lemma}
\newtheorem{definition}{Definition}
\newtheorem{problem}{Problem}
\newtheorem{proposition}{Proposition}
\newtheorem{corollary}{Corollary}
\newcommand{\revision}[1]{#1} 
\renewcommand{\check}{\ding{51}}
\newcommand{\norm}[1]{\left\lVert#1\right\rVert}
\newcommand{\eat}[1]{}
\newcommand{\dataset}[1]{\textsc{#1}\xspace}
\newcommand{\adult}{\dataset{adult}}
\newcommand{\msnbc}{\dataset{msnbc}}
\newcommand{\fire}{\dataset{fire}}
\newcommand{\salary}{\dataset{salary}}
\newcommand{\nltcs}{\dataset{nltcs}}
\newcommand{\titanic}{\dataset{titanic}}
\newcommand{\workload}[1]{\textsc{#1}\xspace}
\newcommand{\general}{\workload{all-3way}}
\newcommand{\target}{\workload{target}}
\newcommand{\weighted}{\workload{skewed}}
\newcommand{\mech}[1]{\textsf{#1}\xspace}
\newcommand{\gaussian}{\mech{Gaussian}}
\newcommand{\gaussianpgm}{\mech{Gaussian+PGM}}
\newcommand{\pgm}{\mech{Private-PGM}}
\newcommand{\aim}{\mech{AIM}}
\newcommand{\hdmm}{\mech{HDMM+PGM}}
\newcommand{\mwempgm}{\mech{MWEM+PGM}}
\newcommand{\mwem}{\mech{MWEM}}
\newcommand{\mst}{\mech{MST}}
\newcommand{\privmrf}{\mech{PrivMRF}}
\newcommand{\privsyn}{\mech{PrivSyn}}
\newcommand{\rap}{\mech{RAP}}
\newcommand{\gem}{\mech{GEM}}
\newcommand{\privbayes}{\mech{PrivBayes}}
\newcommand{\privbayespgm}{\mech{PrivBayes+PGM}}
\newcommand{\independent}{\mech{Independent}}
\newcommand{\jtsize}{\textsf{JT-SIZE}\xspace}
\newcommand{\dom}{\Omega}
\def\alg{{\mathcal M}}
\newcommand{\fancyPara}[1]{\vspace{4.5pt} \textit{\textbf{#1}}.}
\newcommand\vldbdoi{10.14778/3551793.3551817}
\newcommand\vldbpages{2599 - 2612}
\newcommand\vldbvolume{15}
\newcommand\vldbissue{11}
\newcommand\vldbyear{2022}
\newcommand\vldbauthors{\authors}
\newcommand\vldbtitle{\aim: An Adaptive and Iterative Mechanism for Differentially Private Synthetic Data}
\newcommand\vldbavailabilityurl{}
\newcommand\vldbpagestyle{empty}
\begin{document}
\title{\aim: An Adaptive and Iterative Mechanism for \\ Differentially Private Synthetic Data}

\author{Ryan McKenna}
\author{Brett Mullins}
\affiliation{%
  \institution{University of Massachusetts}
  \streetaddress{140 Governors Drive}
  \city{Amherst}
  \state{Massachusetts}
  \postcode{01002}
}
\email{{ rmckenna, bmullins }@cs.umass.edu}

\author{Daniel Sheldon}
\author{Gerome Miklau}
\affiliation{%
  \institution{University of Massachusetts}
  \streetaddress{140 Governors Drive}
  \city{Amherst}
  \state{Massachusetts}
  \postcode{01002}
}
\email{{ sheldon, miklau }@cs.umass.edu}

\begin{abstract}
We propose \aim, a new algorithm for differentially private synthetic data generation. \aim is a workload-adaptive algorithm within the paradigm of algorithms that first selects a set of queries, then privately measures those queries, and finally generates synthetic data from the noisy measurements.  It uses a set of innovative features to iteratively select the most useful measurements, reflecting both their relevance to the workload and their value in approximating the input data. We also provide analytic expressions to bound per-query error with high probability which can be used to construct confidence intervals and inform users about the accuracy of generated data. We show empirically that \aim consistently outperforms a wide variety of existing mechanisms across a variety of experimental settings.
\end{abstract}

\maketitle

\pagestyle{\vldbpagestyle}
\begingroup\small\noindent\raggedright\textbf{PVLDB Reference Format:}\\
\vldbauthors. \vldbtitle. PVLDB, \vldbvolume(\vldbissue): \vldbpages, \vldbyear.\\
\href{https://doi.org/\vldbdoi}{doi:\vldbdoi}
\endgroup
\begingroup
\renewcommand\thefootnote{}\footnote{\noindent
This work is licensed under the Creative Commons BY-NC-ND 4.0 International License. Visit \url{https://creativecommons.org/licenses/by-nc-nd/4.0/} to view a copy of this license. For any use beyond those covered by this license, obtain permission by emailing \href{mailto:info@vldb.org}{info@vldb.org}. Copyright is held by the owner/author(s). Publication rights licensed to the VLDB Endowment. \\
\raggedright Proceedings of the VLDB Endowment, Vol. \vldbvolume, No. \vldbissue\ %
ISSN 2150-8097. \\
\href{https://doi.org/\vldbdoi}{doi:\vldbdoi} \\
}\addtocounter{footnote}{-1}\endgroup

\ifdefempty{\vldbavailabilityurl}{}{
\vspace{.3cm}
\begingroup\small\noindent\raggedright\textbf{PVLDB Artifact Availability:}\\
The source code, data, and/or other artifacts have been made available at \url{\vldbavailabilityurl}.
\endgroup
}

\section{Introduction} \label{sec:intro}

Differential privacy \cite{dwork2006calibrating} has grown into the preferred standard for privacy protection, with significant adoption by both commercial and governmental enterprises. Many common computations on data can be performed in a differentially private manner, including aggregates, statistical summaries, and the training of a wide variety predictive models. Yet one of the most appealing uses of differential privacy is the generation of synthetic data, which is a collection of records matching the input schema, intended to be broadly representative of the source data. Differentially private synthetic data is an active area of research~ \cite{zhang2017privbayes,chen2015differentially,zhang2019differentially,xu2017dppro,xie2018differentially,torfi2020differentially,vietri2020new,liu2016model,torkzadehmahani2019dp,charest2011can,ge2020kamino,huang2019psyndb,jordon2018pate,zhang2018differentially,tantipongpipat2019differentially,abay2018privacy,bindschaedler2017plausible,zhang2020privsyn,asghar2019differentially,li2014differentially} and has also been the basis for two competitions, hosted by the U.S. National Institute of Standards and Technology~\cite{ridgeway2021challenge}.

Private synthetic data is appealing because it fits any data processing workflow designed for the original data, and, on its face, the user may believe they can perform {\em any} computation they wish, while still enjoying the benefits of privacy protection. Unfortunately, it is well-known that there are limits to the accuracy that can be provided by synthetic data under differential privacy or any other reasonable notion of privacy~\cite{dinur2003revealing}.

As a consequence, it is important to tailor synthetic data to some class of tasks, and this is commonly done by asking the user to provide a set of queries, called the workload, to which the synthetic data can be tailored. However, as our experiments will show, existing workload-aware techniques often fail to outperform workload-agnostic mechanisms, even when evaluated specifically on their target workloads. Not only do these algorithms fail to produce accurate synthetic data, but they provide no way for end-users to detect the inaccuracy. As a result, in practical terms, differentially private synthetic data generation remains an unsolved problem.

In this work, we advance the state-of-the-art of differentially private synthetic data in two key ways. First, we propose a new workload-aware mechanism that offers lower error than all competing techniques. Second, we derive analytic expressions to bound the per-query error of the mechanism with high probability.

Our mechanism, \aim, follows the select-measure-generate paradigm, which can be used to describe many prior approaches.\footnote{Another common approach is based on GANs~\cite{goodfellow2014generative}. Recent research~\cite{tao2021benchmarking} has shown that published GAN-based approaches rarely outperform simple baselines; therefore, we do not compare with those techniques in this paper.}  Mechanisms following this paradigm first \emph{select} a set of queries, then \emph{measure} those queries in a differentially private way (through noise addition), and finally \emph{generate} synthetic data consistent with the noisy measurements. We leverage \pgm~\cite{mckenna2019graphical} for the generate step, as it provides a robust and efficient method for combining the noisy measurements into a single consistent representation from which records can be sampled.

The low error of \aim is primarily due to innovations in the \emph{select} stage. \aim uses an iterative, greedy selection procedure, inspired by the popular \mwem algorithm for linear query answering. Through careful analysis, we define a low-sensitivity quality score function to determine the best marginal to measure next, which takes into account: (i) how well the candidate marginal is already estimated, (ii) the expected improvement measuring it can offer, (iii) the relevance of the marginal to the workload, and (iv) the available privacy budget\eat{which may discourage the selection of large marginals}. This new quality score is accompanied by a host of other algorithmic techniques including adaptive selection of rounds and budget-per-round, intelligent initialization, and new set of candidates from which to select.

In conjunction with \aim, we develop new techniques to quantify the uncertainty in query answers derived from the generated synthetic data. \revision{The bounds on error are useful in practice to understand which queries the synthetic data supports well, and which it does not, and are therefore critical to avoid the mis-use of the data by downstream users, a danger that could limit the adoption of synthetic data~\cite{kingnoisy}.  To the best of our knowledge, \aim is the first synthetic data mechansim equipped with such guarantees.}
The problem of error quantification for data independent mechanisms like the Laplace or Gaussian mechanism is trivial, as they provide unbiased answers with known variance to all queries. The problem is considerably more challenging for data-dependent mechanisms like \aim, where complex post-processing is performed and only a subset of workload queries have unbiased answers. Some mechanisms, like \mwem, provide theoretical guarantees on their worst-case error, under suitable assumptions. However, this is an \emph{a priori} bound on error obtained from a theoretical analysis of the mechanism under worst-case datasets. Instead, we develop an \emph{a posteriori} error analysis, derived from the intermediate differentially private measurements used to produce the synthetic data. Our error estimates therefore reflect the actual execution of \aim on the input data but do not require any additional privacy budget for their calculation. 

This paper makes the following contributions:

\begin{enumerate}
\item In \cref{sec:prior}, we assess the prior work in the field, characterizing different approaches via key distinguishing elements and limitations, which brings clarity to a complex space.
\item In \cref{sec:aim}, we propose \aim, a new mechanism for synthetic data generation that is workload-aware (for workloads consisting of weighted marginals) as well as data-aware.
\item In \cref{sec:uncertainty}, we derive analytic expressions to bound the per-query error of \aim with high probability.  These expressions can be used to construct confidence bounds.
\item In \cref{sec:experiments}, we conduct a comprehensive empirical evaluation and show that \aim consistently outperforms all prior work, improving error over the next best mechanism by $1.6\times$ on average and up to $5.7\times$ in some cases.
\end{enumerate}



\section{Background} \label{sec:background}
\revision{
\begin{table}
\caption{\label{table:notation} \revision{Table of Notation}}
\begin{tabular}{|c|c|}
\hline
\textbf{Symbol} & \textbf{Meaning} \\\hline
$\epsilon, \delta, \rho$ & Privacy parameters \\
$\Omega$ & Domain \\
$d$ & Number of attributes \\
$x$ & Single record in $\Omega$\\
$D$ & Dataset of records in $\Omega$ \\
$r$ & Subset of attributes \\
$M_r$ & Marginal query \\
$n_r$ & Domain size of attributes $r$ \\
$W$ & Workload (marginal queries + weights \\
$c_r$ & Weight on marginal $r$ in the workload \\
$\Delta$ & Sensitivity \\
$p$ & Distribution over $\Omega$ \\
$\mathcal{S}$ & Set of all distributions over $\Omega$ \\
$T$ & Number of rounds of \mwempgm \\
$\alpha$ & Privacy budget split of \aim \\
$w_r$ & Weighted assigned to marginal $r$ by \aim \\\hline
\end{tabular}
\end{table}}

In this section we provide the requisite background on datasets, marginals, and differential privacy needed to understand this work.
\subsection{Data, Marginals, and Workloads} \label{sec:workload}

\paragraph*{\textbf{Data}}

A dataset $D$ is a multiset of $N$ records, each containing potentially sensitive information about one individual.  Each record $x \in D$ is a $d$-tuple $(x_1, \dots, x_d)$. The domain of possible values for $x_i$ is denoted by $\Omega_i$, which we assume is finite and has size $ | \Omega_i | = n_i$.   The full domain of possible values for $x$ is thus $ \Omega = \Omega_1 \times \dots \times \Omega_d$ which has size $\prod_i n_i = n $.  We use $\mathcal{D}$ to denote the set of all possible datasets, which is equal to $ \cup_{N=0}^{\infty} \Omega^N $. 

\paragraph*{\textbf{Marginals}}

A marginal is a central statistic to the techniques studied in this paper, as it captures low-dimensional structure common in high-dimensional data distributions.
A marginal for a set of attributes $r$ is essentially a histogram over $x_r$: it is a table that counts the number of occurrences of each $t \in \Omega_r$.  

\begin{definition}[Marginal]
Let $r \subseteq [d]$ be a subset of attributes, $\Omega_r = \prod_{i \in r} \Omega_i$, $n_r = | \Omega_r |$, and $x_r = (x_i)_{i \in r}$.  The marginal on $r$ is a vector $\mu \in \mathbb{R}^{n_r}$, indexed by domain elements $t \in \Omega_r$, such that each entry is a count, i.e., $\mu[t] = \sum_{x \in D} \mathbbm{1}[x_r = t]$.  We let $M_r : \mathcal{D} \rightarrow \mathbb{R}^{n_r}$ denote the function that computes the marginal on $r$, i.e., $ \mu = M_r(D) $.
\end{definition}

In this paper, we use the term \emph{marginal query} to denote the function $M_r$, and \emph{marginal} to denote the vector of counts $ \mu = M_r(D) $.  With some abuse of terminology, we will sometimes refer to the attribute subset $r$ as a marginal query as well.

\paragraph*{\textbf{Workload}}

A workload is a collection of queries the synthetic data should preserve well.  It represents the measure by which we will evaluate utility of different mechanisms.  We want our mechanisms to take a workload as input and adapt intelligently to the queries in it, providing synthetic data that is tailored to the queries of interest.  In this work, we focus on the special (but common) case where the workload consists of a collection of weighted marginal queries.  Our utility measure is stated in \cref{def:error}.

\begin{definition}[Workload Error] \label{def:error}
A workload $W$ consists of a list of marginal queries $r_1, \dots, r_k$ where $r_i \subseteq [d] $, together with associated weights $c_i \geq 0 $.  The error of a synthetic dataset $\hat{D}$ is defined as:
$$ \text{Error}(D, \hat{D}) = \frac{1}{k \cdot |D|}\sum_{i=1}^k c_i \norm{M_{r_i}(D) - M_{r_i}(\hat{D})}_1 $$
\end{definition}

\revision{
  We measure error using a normalized $L_1$ distance between the true workload query answers and the synthetic workload query answers.  This $L_1$ error metric is a common choice  \cite{zhang2020privsyn,cai2021data,zhang2017privbayes,mckenna2019graphical}; although, alternatives have been considered in prior work including $L_{\infty}$ error \cite{aydore2021differentially,liu2021iterative,vietri2020new,liu2021leveraging} and $L_2$ (squared) error \cite{chen2015differentially,mckenna2018optimizing}.  The $L_1$ metric is appealing because it captures the overall error better than the $L_{\infty}$ metric, and is easily interpretable. We also provide supplemental evaluations with $L_{\infty}$ and $L_2$ error in \cref{sec:metrics} of the full paper.
}

\subsection{Differential Privacy}
Differential privacy protects individuals by bounding the impact any one individual can have on the output of an algorithm. This is formalized using the notion of neighboring datasets.  Two datasets $D, D' \in \mathcal{D}$ are neighbors (denoted $ D \sim D'$) if $D'$ can be obtained from $D$ by adding or removing a single record.


\begin{definition}[Differential Privacy] \label{def:dp}
A randomized mechanism $\alg : \mathcal{D} \rightarrow \mathcal{R} $ satisfies $(\epsilon, \delta)$-differential privacy (DP) if for any neighboring datasets $D \sim D' \in \mathcal{D}$, and any subset of possible outputs $S \subseteq \mathcal{R}$,
$$ \Pr[\alg(D) \in S] \leq \exp(\epsilon) \Pr[\alg(D') \in S] + \delta.$$
\end{definition}

A key quantity needed to reason about the privacy of common randomized mechanisms is the \emph{sensitivity}, defined below.

\begin{definition}[Sensitivity]
Let $f : \mathcal{D} \rightarrow \mathbb{R}^p $ be a vector-valued function of the input data.  The $L_2$ sensitivity of $f$ is \\ $\Delta(f) = \max_{D \sim D'} \norm{f(D) - f(D')}_2$.
\end{definition}

It is easy to verify that the $L_2$ sensitivity of any marginal query $M_r$ is $1$, regardless of the attributes in $r$. This is because one individual can only contribute a count of one to a single cell of the output vector.  Below we introduce the two building block mechanisms used in this work.

\begin{definition}[Gaussian Mechanism]
Let $f : \mathcal{D} \rightarrow \mathbb{R}^p $ be a vector-valued function of the input data. The Gaussian Mechanism adds i.i.d. Gaussian noise with scale $\sigma \Delta(f)$ to each entry of $f(D)$.  That is,
$$ \alg(D) = f(D) + \sigma \Delta(f) \mathcal{N}(0, \mathbb{I}),$$
where $\mathbb{I}$ is a $p \times p$ identity matrix.
\end{definition}
\begin{definition}[Exponential Mechanism]
Let $q_r : \mathcal{D} \rightarrow \mathbb{R}$ be quality score function defined for all $r \in \mathcal{R}$ and let $\epsilon \geq 0$ be a real number.  Then the exponential mechanism outputs a candidate $r \in \mathcal{R}$ according to the following distribution:
\vspace{-0.5em}
$$ \Pr[\mathcal{M}(D) = r] \propto \exp{\Big( \frac{\epsilon}{2 \Delta} \cdot q_r(D) \Big)}, $$
where $ \Delta = \max_{r \in \mathcal{R}} \Delta(q_r)$.
\end{definition}

Our algorithm is defined using zCDP, an alternate version of differential privacy definition which offers beneficial composition properties.  We convert to $(\epsilon, \delta)$ guarantees when necessary.
\begin{definition}[zero-Concentrated Differential Privacy (zCDP)] \label{df:zcdp}
A randomized mechanism $\alg$ is $\rho$-zCDP if for any two neighboring datasets $D$ and $D'$, and all $ \alpha \in (1, \infty)$, we have:
$$ D_{\alpha}(\alg(D) \mid \mid \alg(D')) \leq \rho \cdot \alpha, $$

where $D_{\alpha}$ is the Rényi divergence of order $\alpha$.
\end{definition}

\begin{proposition}[zCDP of the Gaussian Mechanism \cite{bun2016concentrated}] \label{prop:rdpgauss}
The Gaussian Mechanism satisfies $ \frac{1}{2 \sigma^2}$-zCDP.
\end{proposition}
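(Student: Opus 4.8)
The plan is to reduce the claim to computing the Rényi divergence between two multivariate Gaussians with a common covariance, and then invoke the sensitivity bound. Fix neighboring datasets $D \sim D'$ and write $s = \sigma \Delta(f)$. Since $\alg(D) = f(D) + s\,\mathcal{N}(0, \mathbb{I})$, the output distributions are $P = \mathcal{N}(\mu_0, s^2 \mathbb{I})$ and $Q = \mathcal{N}(\mu_1, s^2 \mathbb{I})$ with $\mu_0 = f(D)$ and $\mu_1 = f(D')$. Thus it suffices to establish the identity $D_\alpha(P \,\|\, Q) = \frac{\alpha \norm{\mu_0 - \mu_1}_2^2}{2 s^2}$ for every $\alpha \in (1,\infty)$, and then to bound $\norm{\mu_0 - \mu_1}_2 \le \Delta(f)$ using the definition of $L_2$ sensitivity.

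First I would handle the one-dimensional case: for $P = \mathcal{N}(a, s^2)$ and $Q = \mathcal{N}(b, s^2)$ with densities $p, q$, start from $D_\alpha(P\,\|\,Q) = \frac{1}{\alpha - 1}\log \int_{\mathbb{R}} p(x)^\alpha q(x)^{1-\alpha}\,dx$. Substituting the Gaussian densities, the exponent is $-\frac{1}{2s^2}\big(\alpha (x-a)^2 + (1-\alpha)(x-b)^2\big)$; expanding and completing the square in $x$ produces a term $(x - m)^2$ with $m = \alpha a + (1-\alpha)b$ — crucially with leading coefficient $\alpha + (1-\alpha) = 1$, so the effective variance is still $s^2$ — plus a constant term equal to $\alpha(1-\alpha)(a-b)^2$. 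The remaining Gaussian integral then evaluates exactly, the normalizing constants cancel, and one gets $\int p^\alpha q^{1-\alpha} = \exp\!\big(\frac{\alpha(\alpha-1)(a-b)^2}{2 s^2}\big)$, hence $D_\alpha(P\,\|\,Q) = \frac{\alpha (a-b)^2}{2 s^2}$.

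Next I would lift this to $\mathbb{R}^p$. Because the added noise is i.i.d.\ across coordinates, $P$ and $Q$ are product measures and Rényi divergence is additive over independent components, so summing the per-coordinate divergences gives $D_\alpha(P\,\|\,Q) = \sum_{i=1}^{p} \frac{\alpha(\mu_{0,i}-\mu_{1,i})^2}{2 s^2} = \frac{\alpha \norm{\mu_0 - \mu_1}_2^2}{2 s^2}$ (equivalently, one may rotate coordinates so that $\mu_0 - \mu_1$ is axis-aligned, which is legitimate since the noise is isotropic). Finally, $\norm{\mu_0 - \mu_1}_2 = \norm{f(D) - f(D')}_2 \le \Delta(f)$, so with $s = \sigma\Delta(f)$ we obtain $D_\alpha(\alg(D)\,\|\,\alg(D')) \le \frac{\alpha \Delta(f)^2}{2\sigma^2 \Delta(f)^2} = \frac{1}{2\sigma^2}\cdot\alpha$, which is exactly the definition of $\frac{1}{2\sigma^2}$-zCDP (the degenerate case $\Delta(f) = 0$ is trivial, as then $\alg(D)$ and $\alg(D')$ have the same law).

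The only genuinely delicate point is convergence of the Gaussian integral in the one-dimensional step: the quadratic in the exponent must have a positive leading coefficient, which holds here precisely because the two Gaussians share the same variance (for unequal variances and large $\alpha$ the Rényi divergence can in fact be infinite). Everything else — expanding the quadratic, completing the square, tracking the constant $\alpha(1-\alpha)(a-b)^2$, and cancelling the normalizers — is routine bookkeeping that I would not belabor.
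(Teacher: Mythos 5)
Your proposal is correct: the paper states this proposition only by citation to Bun and Steinke \cite{bun2016concentrated} and gives no internal proof, and your argument — computing $D_\alpha\big(\mathcal{N}(\mu_0,s^2\mathbb{I})\,\|\,\mathcal{N}(\mu_1,s^2\mathbb{I})\big)=\frac{\alpha\norm{\mu_0-\mu_1}_2^2}{2s^2}$ coordinatewise via completing the square and then applying the $L_2$-sensitivity bound with $s=\sigma\Delta(f)$ — is exactly the standard derivation in that reference. No gaps; the convergence caveat and the $\Delta(f)=0$ degenerate case are handled appropriately.
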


\begin{proposition}[zCDP of the Exponential Mechanism \cite{cesar2021bounding}] \label{prop:expprivacy}
The Exponential Mechanism satisfies $\frac{\epsilon^2}{8}$-zCDP.
\end{proposition}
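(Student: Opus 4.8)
The plan is to show that the exponential mechanism has the \emph{bounded range} property with parameter $\epsilon$ — the privacy loss random variable lies almost surely in an interval of width $\epsilon$ (not merely in $[-\epsilon,\epsilon]$) — and then convert bounded range into a zCDP bound via Hoeffding's lemma, which yields the factor-of-four improvement over the generic ``$\epsilon$-DP $\Rightarrow \tfrac{\epsilon^2}{2}$-zCDP'' conversion. Throughout, fix neighboring datasets $D \sim D'$.

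\emph{Step 1 (bounded range).} Write $Z(D) = \sum_{r' \in \mathcal{R}} \exp\!\big(\tfrac{\epsilon}{2\Delta} q_{r'}(D)\big)$ for the normalizer, so $\Pr[\mathcal{M}(D) = r] = \exp(\tfrac{\epsilon}{2\Delta} q_r(D))/Z(D)$, and let $L(r) = \ln \frac{\Pr[\mathcal{M}(D)=r]}{\Pr[\mathcal{M}(D')=r]}$ denote the privacy loss at outcome $r$. A direct computation gives
$$ L(r) \;=\; \tfrac{\epsilon}{2\Delta}\big(q_r(D) - q_r(D')\big) \;+\; \ln \tfrac{Z(D')}{Z(D)} . $$
The second term is independent of $r$, and the first has magnitude at most $\tfrac{\epsilon}{2\Delta}\Delta(q_r) \le \tfrac{\epsilon}{2}$ by the definition of $\Delta$. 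Hence, as $r$ ranges over $\mathcal{R}$, $L(r)$ stays inside an interval of width at most $\epsilon$.

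\emph{Step 2 (moment identities and Hoeffding).} Since $\mathcal{R}$ is fixed and all scores are finite, $\Pr[\mathcal{M}(D)=r]>0$ and $\Pr[\mathcal{M}(D')=r]>0$ for every $r$, so with $r$ drawn from $\mathcal{M}(D)$ we have the elementary identities $\mathbb{E}[e^{0\cdot L}]=1$ and $\mathbb{E}[e^{-L}] = \sum_r \Pr[\mathcal{M}(D')=r] = 1$, and, unwinding the Rényi divergence, for all $\alpha>1$,
$$ (\alpha-1)\, D_{\alpha}\big(\mathcal{M}(D)\,\|\,\mathcal{M}(D')\big) \;=\; \ln \mathbb{E}_{r \sim \mathcal{M}(D)}\!\big[ e^{(\alpha-1) L(r)} \big]. $$
By Hoeffding's lemma applied to the bounded variable $L$, for every real $t$, $\ln \mathbb{E}[e^{tL}] \le t\,\mathbb{E}[L] + \tfrac{t^2 \epsilon^2}{8}$. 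Taking $t=-1$ and using $\mathbb{E}[e^{-L}]=1$ forces $\mathbb{E}[L] \le \tfrac{\epsilon^2}{8}$. Taking $t=\alpha-1>0$ then gives $\ln \mathbb{E}[e^{(\alpha-1)L}] \le (\alpha-1)\tfrac{\epsilon^2}{8} + (\alpha-1)^2\tfrac{\epsilon^2}{8} = \tfrac{\epsilon^2}{8}\alpha(\alpha-1)$; dividing by $\alpha-1$ and using the identity above yields $D_{\alpha}(\mathcal{M}(D)\,\|\,\mathcal{M}(D')) \le \tfrac{\epsilon^2}{8}\alpha$ for all $\alpha>1$, i.e. $\tfrac{\epsilon^2}{8}$-zCDP.

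\emph{Main obstacle.} The one nonroutine point is the use of $\mathbb{E}[e^{-L}]=1$ in Step 2: from $\epsilon$-DP alone one can only bound $\mathbb{E}[L] = \mathrm{KL}(\mathcal{M}(D)\,\|\,\mathcal{M}(D'))$ by $O(\epsilon)$, which degrades the conclusion to $\tfrac{\epsilon^2}{2}$-zCDP. It is precisely the \emph{range} structure from Step 1 — that $L$ occupies a short interval, whose exact location may depend on $D,D'$ but whose width is uniformly $\le\epsilon$ — combined with that moment identity that pins the mean down to $O(\epsilon^2)$ and delivers the stated constant. The remaining details (finiteness of $\mathbb{E}[L]$ on the finite outcome space, the boundary cases where some $q_r$ differ by exactly $\Delta$, and $\alpha\downarrow 1$) are straightforward.
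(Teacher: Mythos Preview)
Your argument is correct. The paper itself does not prove this proposition; it is stated with a citation to \cite{cesar2021bounding} and used as a black box. Your proof is precisely the bounded-range argument from that reference: the exponential mechanism's privacy loss lies in an interval of width $\epsilon$, and Hoeffding's lemma together with the identity $\mathbb{E}[e^{-L}]=1$ converts this into the $\tfrac{\epsilon^2}{8}$-zCDP bound. So you have supplied the details the paper outsources, via the same route as the cited source.
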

We rely on the following propositions to reason about multiple adaptive invocations of zCDP mechanisms, and the translation from zCDP to $(\epsilon, \delta)$-DP.  The proposition below covers 2-fold adaptive composition of zCDP mechanisms, and it can be inductively applied to obtain analogous k-fold adaptive composition guarantees.

\begin{proposition}[Adaptive Composition of zCDP Mechanisms \cite{bun2016concentrated}] \label{prop:composition}
Let $\alg_1 : \mathcal{D} \rightarrow \mathcal{R}_1$ be $\rho_1$-zCDP and $\alg_2 : \mathcal{D} \times \mathcal{R}_1 \rightarrow \mathcal{R}_2$ be $\rho_2$-zCDP.  Then the mechanism $\alg = \alg_2(D, \alg_1(D))$ is $(\rho_1 + \rho_2)$-zCDP.
\end{proposition}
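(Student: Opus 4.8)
The plan is to reduce the claim to the chain rule (``multiplicativity'') of Rényi divergence under conditioning. First I would observe that $\alg(D) = \alg_2(D, \alg_1(D))$ is a post-processing --- literally, the second coordinate --- of the joint pair $(Y_1, Y_2)$ with $Y_1 = \alg_1(D)$ and $Y_2 = \alg_2(D, Y_1)$, and Rényi divergence of every order is non-increasing under post-processing (data processing inequality). Hence it suffices to bound $D_{\alpha}(P \| Q)$, where for a fixed neighboring pair $D \sim D'$ and fixed $\alpha \in (1,\infty)$ we let $P$ and $Q$ denote the joint laws of $(Y_1, Y_2)$ under $D$ and $D'$ respectively.

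Next I would factor the joint densities as $P(y_1, y_2) = P_1(y_1)\, P_{2\mid 1}(y_2 \mid y_1)$, where $P_1$ is the law of $\alg_1(D)$ and $P_{2\mid 1}(\cdot \mid y_1)$ is the law of $\alg_2(D, y_1)$, and similarly $Q(y_1, y_2) = Q_1(y_1)\, Q_{2\mid 1}(y_2 \mid y_1)$ with $D'$ in place of $D$. Substituting into the identity $\exp\big((\alpha-1) D_{\alpha}(P\|Q)\big) = \int P^{\alpha} Q^{1-\alpha}$ and using Fubini to split the integral over $y_2$ from the integral over $y_1$ gives
$$ \exp\big((\alpha-1) D_{\alpha}(P\|Q)\big) = \int P_1(y_1)^{\alpha} Q_1(y_1)^{1-\alpha} \Big( \int P_{2\mid 1}(y_2\mid y_1)^{\alpha} Q_{2\mid 1}(y_2 \mid y_1)^{1-\alpha}\, dy_2 \Big) dy_1. $$

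The inner integral is exactly $\exp\big((\alpha-1) D_{\alpha}(\alg_2(D, y_1) \| \alg_2(D', y_1))\big)$, which by $\rho_2$-zCDP of $\alg_2$ --- applied for the fixed auxiliary input $y_1$ --- is at most $\exp\big((\alpha-1)\rho_2 \alpha\big)$ for every $y_1$. Pulling this out as a constant, what remains is $\int P_1(y_1)^{\alpha} Q_1(y_1)^{1-\alpha}\, dy_1 = \exp\big((\alpha-1) D_{\alpha}(\alg_1(D) \| \alg_1(D'))\big) \le \exp\big((\alpha-1)\rho_1 \alpha\big)$ by $\rho_1$-zCDP of $\alg_1$. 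Multiplying the two bounds and dividing the exponents by $\alpha - 1 > 0$ yields $D_{\alpha}(P \| Q) \le (\rho_1 + \rho_2)\alpha$, and the post-processing step transfers this bound to $\alg(D)$ versus $\alg(D')$. Since $\alpha$ and the neighboring pair were arbitrary, $\alg$ is $(\rho_1+\rho_2)$-zCDP, and the $k$-fold version follows by induction on this two-fold statement.

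The step I expect to require the most care is making precise the sense in which ``$\alg_2 : \mathcal{D}\times\mathcal{R}_1 \to \mathcal{R}_2$ is $\rho_2$-zCDP'': the definition must be read as requiring $D_{\alpha}(\alg_2(D,r)\|\alg_2(D',r)) \le \rho_2\alpha$ for \emph{every} fixed $r \in \mathcal{R}_1$, uniformly, since $r$ is generated adaptively from $D$ itself and could otherwise be correlated with the sensitive input. Tied to this is the usual measure-theoretic bookkeeping --- choosing a common dominating measure, handling points where one density vanishes via the convention that the integrand is $0$ there, and justifying the interchange of integration order --- which is routine but is where a fully rigorous write-up spends its effort; the algebraic heart of the argument is just the factorization of $\int P^{\alpha} Q^{1-\alpha}$ displayed above.
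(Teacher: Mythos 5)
Your proof is correct: the factorization of $\int P^{\alpha}Q^{1-\alpha}$ over the joint law of $(\alg_1(D),\alg_2(D,\alg_1(D)))$, the uniform bound on the conditional R\'enyi divergence for every fixed auxiliary input $y_1$, and the data-processing step are exactly the canonical argument for adaptive composition of zCDP. The paper itself gives no proof of this proposition --- it is imported verbatim from Bun and Steinke \cite{bun2016concentrated} --- and your write-up reproduces that reference's proof in essence, including the correct reading of what $\rho_2$-zCDP means for a mechanism with an auxiliary input.
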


\begin{proposition}[zCDP to DP \cite{canonne2020discrete}] \label{prop:rdpdp}
If a mechanism $\alg$ satisfies $\rho$-zCDP, it also satisfies
 $(\epsilon, \delta)$-differential privacy for all $\epsilon \geq 0$ and

$$ \delta = \min_{\alpha > 1} \frac{\exp{\big((\alpha-1)(\alpha \rho - \epsilon)\big)}}{\alpha-1} \Big(1 - \frac{1}{\alpha}\Big)^{\alpha}. $$
\end{proposition}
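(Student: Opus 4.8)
The plan is to pass to the equivalent ``hockey-stick'' formulation of $(\epsilon,\delta)$-DP and bound that quantity directly using the R\'enyi-divergence bound that $\rho$-zCDP supplies. I would fix a pair of neighbors $D \sim D'$, set $P = \alg(D)$ and $Q = \alg(D')$, and note that $\rho$-zCDP forces $D_\alpha(P \mid \mid Q) \le \rho\alpha < \infty$ for all $\alpha > 1$; since $\sim$ is symmetric the same holds with $P,Q$ swapped, so $P \ll Q$ and $Q \ll P$ and the likelihood ratio $L := dP/dQ$ is well defined. The first step is the standard equivalence: $\alg$ is $(\epsilon,\delta)$-DP iff $\Pr[\alg(D) \in S] - e^\epsilon \Pr[\alg(D') \in S] \le \delta$ for every ordered neighbor pair and every event $S$, and the left side is maximized at $S^\star = \{L > e^\epsilon\}$, so the goal reduces to showing $\mathbb{E}_Q[(L - e^\epsilon)_+] \le \delta$ (with $(\cdot)_+ = \max(\cdot,0)$) for each ordered neighbor pair; by symmetry of the hypothesis it suffices to handle one direction.

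The heart of the argument is a pointwise inequality: for every fixed $\alpha > 1$ and every $t \ge 0$,
$$ (t - e^\epsilon)_+ \;\le\; \frac{(\alpha-1)^{\alpha-1}}{\alpha^\alpha}\, e^{-(\alpha-1)\epsilon}\, t^\alpha. $$
This is immediate for $t \le e^\epsilon$; for $t > e^\epsilon$ I would substitute $s = t\, e^{-\epsilon} > 1$, reducing it to $(s-1)/s^\alpha \le (\alpha-1)^{\alpha-1}/\alpha^\alpha$, and then check by one-variable calculus that $s \mapsto (s-1)/s^\alpha$ attains its maximum over $[1,\infty)$ at $s = \alpha/(\alpha-1)$, where its value is exactly the right-hand side. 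Applying this with $t = L$, taking expectations under $Q$, and using both the identity $\mathbb{E}_Q[L^\alpha] = \exp\!\big((\alpha-1) D_\alpha(P \mid \mid Q)\big)$ and the zCDP bound $D_\alpha(P \mid \mid Q) \le \rho\alpha$ (\cref{df:zcdp}) yields, for every $\alpha > 1$,
$$ \mathbb{E}_Q\big[(L - e^\epsilon)_+\big] \;\le\; \frac{(\alpha-1)^{\alpha-1}}{\alpha^\alpha}\, e^{-(\alpha-1)\epsilon}\, e^{(\alpha-1)\rho\alpha} \;=\; \frac{\exp\!\big((\alpha-1)(\alpha\rho - \epsilon)\big)}{\alpha-1}\Big(1 - \tfrac{1}{\alpha}\Big)^{\alpha}. $$
Minimizing the right-hand side over $\alpha > 1$ (the objective is continuous on $(1,\infty)$ and diverges as $\alpha \to \infty$, so a minimizer exists) gives exactly the stated $\delta$, and since the bound holds for every ordered neighbor pair, the proof is complete.

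I expect the only genuinely non-routine step to be the pointwise inequality — specifically, spotting the maximizer $s = \alpha/(\alpha-1)$ of $(s-1)/s^\alpha$, which is precisely what conjures the constants $\tfrac{1}{\alpha-1}(1-1/\alpha)^\alpha$ in the final bound. Everything else — the hockey-stick reformulation of $(\epsilon,\delta)$-DP, the moment/R\'enyi identity $\mathbb{E}_Q[L^\alpha] = e^{(\alpha-1)D_\alpha}$, and integrating the pointwise bound — is mechanical. As a sanity check, the bound decays like $e^{-(\alpha-1)\epsilon}$ in $\epsilon$, consistent with the usual R\'enyi-to-DP tradeoff, and substituting $\rho = 1/(2\sigma^2)$ from \cref{prop:rdpgauss} and optimizing over $\alpha$ reproduces the familiar $(\epsilon,\delta)$ curve for the Gaussian mechanism.
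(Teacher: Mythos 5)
The paper itself gives no proof of this proposition—it is imported directly from Canonne, Kasiviswanathan, and Steinke \cite{canonne2020discrete}—so your write-up can only be judged against that cited source, and it is correct and follows essentially the same route. Reducing $(\epsilon,\delta)$-DP to the hockey-stick quantity $\mathbb{E}_Q[(L-e^\epsilon)_+]$, bounding it via the pointwise inequality whose sharp constant $\sup_{s\geq 1}(s-1)/s^{\alpha}=(\alpha-1)^{\alpha-1}/\alpha^{\alpha}$ is attained at $s=\alpha/(\alpha-1)$, and then invoking $\mathbb{E}_Q[L^{\alpha}]=e^{(\alpha-1)D_{\alpha}(P\|Q)}\leq e^{(\alpha-1)\alpha\rho}$ before optimizing over $\alpha>1$ is exactly the argument that produces the $\tfrac{1}{\alpha-1}\bigl(1-\tfrac{1}{\alpha}\bigr)^{\alpha}$ factor in the stated $\delta$; your calculus verification of the maximizer and the absolute-continuity remark (finiteness of $D_{\alpha}$ for $\alpha>1$ forces $P\ll Q$) are both sound.
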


\subsection{Private-PGM}

An important component of our approach is a tool called \pgm \cite{mckenna2019graphical,mckenna2021winning,dporgblog}.  For the purposes of this paper, we will treat \pgm as a black box that exposes an interface for solving subproblems important to our mechanism.  We briefly summarize \pgm and three core utilities it provides.  \pgm consumes as input a collection of noisy marginals of the sensitive data, in the format of a list of tuples $(\tilde{\mu}_i, \sigma_i, r_i)$ for $i = 1, \dots, k$, where $\tilde{\mu}_i = M_{r_i}(D) + \mathcal{N}(0, \sigma_i^2 \mathbb{I})$.\footnote{\pgm is more general than this, but this is the most common setting.}

\paragraph*{\textbf{Distribution Estimation}}

At the heart of \pgm is an optimization problem to find a distribution $\hat{p}$ that ``best explains'' the noisy observations $\tilde{\mu}_i$:
$$ \hat{p} \in \argmin_{p \in \mathcal{S}} \sum_{i=1}^k \frac{1}{\sigma_i} \norm{M_{r_i}(p) - \tilde{\mu}_i}_2^2 $$
Here $\mathcal{S} = \set{p \mid p(x) \geq 0 \text{ and } \sum_{x \in \Omega} p(x) = n}$ is the set of (scaled) probability distributions over the domain $\Omega$.\footnote{When using unbounded DP, $n$ is sensitive and therefore we must estimate it.}   When $\tilde{\mu_i}$ are corrupted with i.i.d. Gaussian noise, this is exactly a maximum likelihood estimation problem \cite{mckenna2019graphical,mckenna2021winning,dporgblog}.  In general, convex optimization over the scaled probability simplex is intractable for the high-dimensional domains we are interested in.  \pgm overcomes this curse of dimensionality by exploiting the fact that the objective only depends on $p$ through its marginals.  The key observation is that one of the minimizers of this problem is a graphical model $\hat{p}_{\theta}$.  The parameters $\theta$ provide a compact representation of the distribution $p$ that we can optimize efficiently.

\paragraph*{\textbf{Junction Tree Size}}

The time and space complexity of \pgm depends on the measured marginal queries in a nuanced way, the main factor being the size of the junction tree implied by the measured marginal queries \cite{mckenna2021winning,mckenna2021relaxed}.  While understanding the junction tree construction is not necessary for this paper, it is important to note that \pgm exposes a callable function $\jtsize(r_1, \dots, r_k)$ that can be invoked to check how large a junction tree is.
\jtsize is measured in megabytes, and the runtime of distribution estimation is roughly proportional to this quantity.
If arbitrary marginals are measured, \jtsize can grow out of control, no longer fitting in memory, and leading to unacceptable runtime.

\paragraph*{\textbf{Synthetic Data Generation}}

Given an estimated model $\hat{p}$, \\\pgm implements a routine for generating synthetic tabular data that approximately matches the given distribution.  It achieves this with a randomized rounding procedure, which is a lower variance alternative to sampling from $\hat{p}$ \cite{mckenna2021winning}.


\section{Prior Work on Synthetic Data} \label{sec:prior}

In this section we survey the state of the field, describing basic elements of a good synthetic data mechanism, along with novelties of more sophisticated mechanisms.  We focus our attention on \emph{marginal-based approaches} to differentially private synthetic data in this section, as these have generally seen the most success in practical applications.  These mechanisms include \privbayes \cite{zhang2017privbayes}, \privbayespgm \cite{mckenna2019graphical}, \mwempgm \cite{mckenna2019graphical}, \mst \cite{mckenna2021winning}, \privsyn \cite{zhang2020privsyn}, \rap \cite{aydore2021differentially}, \gem \cite{liu2021iterative}, and \privmrf \cite{cai2021data}.  
We will begin with a formal problem statement:

\begin{problem}[Workload Error Minimization]
Given a workload $W$, our goal is to design an $(\epsilon, \delta)$-DP synthetic data mechanism $\mathcal{M} : \mathcal{D} \rightarrow \mathcal{D}$ such that the expected error defined in \cref{def:error} is minimized.
\end{problem}

\subsection{The Select-Measure-Generate Paradigm}

\begin{algorithm}[tb]
    \caption{\label{alg:mwem} \mwempgm}
\begin{algorithmic}
    \STATE {\bfseries Input:} Dataset $D$, workload $W$, privacy parameter $\rho$
    \STATE {\bfseries Output:} Synthetic Dataset $\hat{D}$
    \STATE {\bfseries Hyper-Parameters:} rounds $T=d$, budget split $\alpha=0.9$
    \STATE Initialize $\hat{p}_0 = \text{Uniform}[\mathcal{X}] $
    \STATE $\epsilon = \sqrt{8 (1-\alpha) \rho / T}$
    \STATE $\sigma = \sqrt{T / 2 \alpha \rho}$
    \FOR{$t = 1, \dots, T$}
        \STATE \textbf{select} $r_t \in W$ using exponential mechanism with $\epsilon$ budget:
$$ q_r(D) = \norm{ M_r(D) - M_r(\hat{p}_{t-1}) }_1 - n_r $$
        \STATE \textbf{measure} marginal on $C$:
$$\tilde{\mu}_t = M_{r_t}(D) + \mathcal{N}(0, \sigma^2 \mathbb{I})$$
        \STATE \textbf{estimate} data distribution using \pgm:
$$ \hat{p_t} = \argmin_{p \in S} \sum_{i=1}^t \norm{ M_{r_i}(p) - y_i }_2^2 $$
    \ENDFOR
    \STATE \textbf{generate} synthetic data $\hat{D}$ using \pgm:
    \STATE {\bfseries return} $\hat{D}$
\end{algorithmic}
\end{algorithm}

We begin by providing a broad overview of the basic approach employed by many differentially private mechanisms for synthetic data.  These mechanisms all fit naturally into the \emph{select-measure-generate} framework.  This framework represents a class of mechanisms which can naturally be broken up into 3 steps: (1) \emph{select} a set of queries, (2) \emph{measure} those queries using a noise-addition mechanism, and (3) \emph{generate} synthetic data that explains the noisy measurements well.  We consider iterative mechanisms that alternate between the select and measure step to be in this class as well.  Mechanisms within this class differ in their methodology for selecting queries, the noise mechanism used, and the approach to generating synthetic data from the noisy measurements.  


\mwempgm, shown in \cref{alg:mwem}, is one mechanism from this class that serves as a concrete example as well as the starting point for our improved mechanism, \aim.  As the name implies, \mwempgm is a scalable instantiation of the well-known \mwem algorithm \cite{hardt2010simple} for linear query answering, where the multiplicative weights (MW) step is replaced by a call to \pgm.  It is a greedy, iterative mechanism for workload-aware synthetic data generation, and there are several variants.  One variant is shown in \cref{alg:mwem}.  The mechanism begins by initializing an estimate of the joint distribution to be uniform over the data domain.  Then, it runs for $T$ rounds, and in each round it does three things: (1) selects (via the exponential mechanism) a marginal query that is poorly approximated under the current estimate, (2) measures the selected marginal using the Gaussian mechanism, and (3) estimates a new data distribution (using \pgm) that explains the noisy measurements well.  After $T$ rounds, the estimated distribution is used to generate synthetic tabular data.  \revision{
\mwempgm represents one mechansim from this broad class, but many others are very closely related to it.  In fact, \rap and \gem can both be seen as scalable instantiations of \mwem, that use different algorithms to estimate the data distribution instead of \pgm.  \privmrf is also closely related to \mwempgm (and uses \pgm), with some minor differences in design decisions in other parts of the algorithm.  Algorithms like \privbayes, \mst, and \privsyn are also conceptually similar to \mwempgm, as they attempt to select marginal queries that are poorly approximated under a simple model.   While all of these algorithms are conceptually similar, each one makes different design decisions that may have important performance implications in practice.
}
 In the subsequent subsections, we will characterize existing mechanisms in terms of how they approach these different aspects of the problem, \revision{and discuss some of the design decisions made by these mechansism.}

\subsection{Basic Elements of a Good Mechanism} \label{sec:basic}

In this section we outline some basic criteria reasonable mechanisms should satisfy to get good performance.  These recommendations primarily apply to the \emph{measure} step.

\paragraph*{\textbf{Measure Entire Marginals}}

Marginals are an appealing statistic to measure because every individual contributes a count of one to exactly one cell of the marginal.  As a result, we can measure every cell of $M_r(D)$ at the same privacy cost of measuring a single cell.  With a few exceptions \cite{aydore2021differentially,liu2021iterative,vietri2020new}, existing mechanisms utilize this property of marginals or can be extended to use it.  The alternative of measuring a single counting query at a time sacrifices utility unnecessarily.

\paragraph*{\textbf{Use Gaussian Noise.}}

Back of the envelope calculations reveal that if the number of measurements is greater than roughly $ \log{(1/\delta)} \\+ \epsilon$, which is often the case, then the standard deviation of the required Gaussian noise is lower than that of the Laplace noise.  Many newer mechanisms recognize this and use Gaussian noise, while older mechanisms were developed with Laplace noise, but can easily be adapted to use Gaussian noise instead.

\paragraph*{\textbf{Use Unbounded DP}}

For fixed $(\epsilon, \delta)$, the required noise magnitude is lower by a factor of $\sqrt{2}$ when using unbounded DP (add / remove one record) over bounded DP (modify one record).  
This is because the $L_2$ sensitivity of a marginal query $M_r$ is $1$ under unbounded DP, and $\sqrt{2}$ under bounded DP.  
We remark that these two different definitions of DP are qualitatively different, and because of that, the privacy parameters have different interpretations.  The $\sqrt{2}$ difference could be recovered in bounded DP by increasing the privacy budget appropriately. In some cases, the privacy model is imposed externally, in which case it is better if the mechanism naturally supports both bounded and unbounded DP.  When either privacy definition is acceptable, as in recent NIST competitions \cite{ridgeway2021challenge}, unbounded DP should be preferred.

\revision{
\paragraph*{\textbf{Devote more Budget to the Measure Step}}
For mechanisms that select marginal queries based on the data, the privacy budget must be split between the select step and the measure step.   A simple 50/50 split is usually suboptimal, and it is often better to allocate the majority of the privacy budget for the measure step.  Indeed, prior work has reported 10/90 splits to work well empirically in a variety of settings \cite{zhang2020privsyn,cai2021data}.  Intuitively, this uneven split makes sense because the statistics needed to select marginal queries are often coarser grained aggregations than the marginal queries themselves, and as a result are more robust to noise.

\paragraph*{\textbf{Summary}}
The implementation of \mwempgm in \cref{alg:mwem} gets these basic elements right.  This particular implementation of \mwempgm is new --- the original measured a single counting query per round, used Laplace noise, bounded DP, and an even select/measure budget split \cite{hardt2010simple,mckenna2019graphical}.  While the modifications made are simple, as we will show in \cref{sec:ablation}, they have a substantial influence on the performance of the mechanism in practice.
}

\subsection{Distinguishing Elements of Existing Work} \label{sec:distinguishing}

\begin{table}
\caption{\label{table:mechanisms} Taxonomy of select-measure-generate mechanisms.}
\resizebox{0.48\textwidth}{!}{
\begin{tabular}{lc||cccc}
Name & Year & Workload & Data & Budget & Efficiency \\
& & Aware & Aware & Aware & Aware \\\hline
\independent & - & & & & \check \\
\gaussian & - & \check & & &  \\
\privbayes \cite{zhang2017privbayes} & 2014 &  & \check & \check & \check \\
\hdmm \cite{mckenna2019graphical} & 2019 & \check & & & \\
\privbayespgm \cite{mckenna2019graphical} & 2019 &  & \check & \check & \check \\
\mwempgm \cite{mckenna2019graphical} & 2019 & \check & \check &  &  \\
\privsyn \cite{zhang2020privsyn} & 2020 &  & \check & \check & \check \\
\mst \cite{mckenna2021winning} & 2021 &  & \check &  & \check \\
\rap \cite{aydore2021differentially} & 2021 & \check & \check &  & \check \\
\gem \cite{liu2021iterative} & 2021 & \check & \check &  & \check \\
\privmrf \cite{cai2021data} & 2021 &  & \check & \check & \check \\
\rowcolor{lightgray} \aim [\small{This Work}] & 2022 & \check & \check & \check & \check \\
\end{tabular}}
\end{table}

Beyond the basics, different mechanisms exhibit different novelties, and understanding the design considerations underlying the existing work can be enlightening.  We provide a simple taxonomy of this space in \cref{table:mechanisms} in terms of four criteria: workload-, data-, budget-, and efficiency-awareness.  These characteristics primarily pertain to the \emph{select} step of each mechanism.

\paragraph*{\textbf{Workload-awareness}}

Different mechanisms select from a different set of candidate marginal queries.  \privbayes and \privmrf, for example, select from a particular subset of $k$-way marginals, determined from the data.  
Other mechanisms, like \mst and \privsyn, restrict the set of candidates to 2-way marginal queries.  
On the other end of the spectrum, the candidates considered by \mwempgm, \rap, and \gem, are exactly the marginal queries in the workload.  This is appealing, since these mechanisms will not waste the privacy budget to measure marginals that are not relevant to the workload. 



\paragraph*{\textbf{Data-awareness}}
Many mechanisms select marginal queries from a set of candidates based on the data, and are thus data-aware.  For example, \mwempgm selects marginal queries using the exponential mechanism with a quality score function that depends on the data.  \independent, \gaussian, and \hdmm are the exceptions, as they always select the same marginal queries no matter what the underlying data distribution is.  

\paragraph*{\textbf{Budget-awareness}}

Another aspect of different mechanisms is how well do they adapt to the privacy budget available.  Some mechanisms, like \privbayes, \privsyn, and \privmrf recognize that we can afford to measure more (or larger) marginals when the privacy budget is sufficiently large.  When the privacy budget is limited, these mechanisms recognize that fewer (and smaller) marginals should be measured instead.  In contrast, the number and size of the marginals selected by mechanisms like \mst, \mwempgm, \rap, and \gem does not depend on the privacy budget available.\footnote{The number of rounds to run \mwempgm, \rap, and \gem is a hyper-parameter, and the best setting of this hyper-parameter depends on the privacy budget available.}

\paragraph*{\textbf{Efficiency-awareness}}

Mechanisms that build on top of \pgm must take care when selecting measurements to ensure \jtsize remains sufficiently small to ensure computational tractability.   Among these, \privbayespgm, \mst, and \privmrf all have built-in heuristics in the selection criteria to ensure the selected marginal queries give rise to a tractable model.  \gaussian, \hdmm and \mwempgm have no such safeguards, and they can sometimes select marginal queries that lead to intractable models.  In the extreme case, when the workload is all 2-way marginals, \gaussian selects all 2-way marginals, the model required for \pgm explodes to the size of the entire domain, which is often intractable. 

Mechanisms that utilize different techniques for post-processing noisy marginals into synthetic data, like \privsyn, \rap, and \gem, do not have this limitation, and are free to select from a wider collection of marginals.  While these methods do not suffer from this particular limitation of \pgm, they have other pros and cons which were surveyed in a recent article~\cite{dporgblog}.  

\paragraph*{\textbf{Summary}}

With the exception of our new mechanism \aim, no mechanism listed in \cref{table:mechanisms} is aware of all four factors we discussed.  Mechanisms that do not have four checkmarks in \cref{table:mechanisms} are not necessarily bad, but there are clear ways in which they can be improved.  Conversely, mechanisms that have more checkmarks than other mechanisms are not necessarily better.  For example, \independent only has one checkmark, but as we show in \cref{sec:experiments}, it sometimes outperforms mechanisms with three checkmarks.

\subsection{Other Design Considerations} \label{sec:otherconsiderations}

Beyond these four characteristics summarized in the previous section, different methods make different design decisions that are relevant to mechanism performance, but do not correspond to the four criteria discussed in the previous section.  In this section, we summarize some of those additional design considerations.

\paragraph*{\textbf{Selection method}}

Some mechanisms select marginals to measure in a \emph{batch}, while other mechanisms select them \emph{iteratively}.  Generally speaking, iterative methods like \mwempgm, \rap, \gem, and \privmrf are preferable to batch methods, because the selected marginals will capture important information about the distribution that was not effectively captured by the previously measured marginals.  On the other hand, \privbayes, \mst, and \privsyn select all the marginals before measuring any of them.  It is not difficult to construct examples where a batch method like \privsyn has suboptimal behavior.  For example, suppose the data contains three perfectly correlated attributes.  We can expect iterative methods to capture the distribution after measuring any two 2-way marginals.  On the other hand, a batch method like \privsyn will determine that all three 2-way marginals need to be measured.

\paragraph*{\textbf{Budget split}}

Every mechanism in this discussion, except for \privsyn, splits the privacy budget equally among selected marginals. This is a simple and natural thing to do, but it does not account for the fact that larger marginals have smaller counts that are less robust to noise, requiring a larger fraction of the privacy budget to answer accurately.  \privsyn provides a simple formula for dividing privacy budget among marginals of different sizes, but this approach is inherently tied to their batch selection methodology. 
It is much less clear how to divide the privacy budget within a mechanism that uses an iterative selection procedure.

\paragraph*{\textbf{Hyperparameters}}

All mechanisms have some hyperparameters than can be tuned to affect the behavior of the mechanism.  Mechanisms like \privbayes, \mst, \privsyn, and \privmrf have reasonable default values for these hyperparameters, and these mechanisms can be expected to work well out of the box.  
On the other hand, \mwempgm, \rap, and \gem have to tune the number of rounds to run, and it is not obvious how to select this a priori.  While the open source implementations may include a default value, the experiments conducted in the respective papers did not use these default values, in favor of non-privately optimizing over this hyper-parameter for each dataset and privacy level considered \cite{aydore2021differentially,liu2021iterative}.


\begin{algorithm}[tb]
    \caption{\label{alg:aim} \aim: An Adaptive and Iterative Mechanism}
\begin{algorithmic}[1]
    \STATE {\bfseries Input:} Dataset $D$, workload $W$, privacy parameter $\rho$
    \STATE {\bfseries Output:} Synthetic Dataset $\hat{D}$
    \STATE {\bfseries Hyper-Parameters:} \textsf{MAX-SIZE}=80MB, $T=16d$, $\alpha=0.9$
    \STATE $\sigma_0 = \sqrt{T / (2 \: \alpha \: \rho)}$
    \STATE $\rho_{used} = 0$
    \STATE $t = 0$
    \STATE Initialize $\hat{p}_t$ using \cref{alg:init} \label{line:init}
    \STATE $w_r = \sum_{s \in W} c_s \mid r \cap s \mid$ 
    \STATE $\sigma_{t+1} \leftarrow \sigma_0$ \:\:\: $\epsilon_{t+1} \leftarrow \sqrt{8 (1 - \alpha) \rho / T}$
    \WHILE{$\rho_{used} < \rho$} \label{line:loop}
        \STATE $t = t+1$
        \STATE $\rho_{used} \leftarrow \rho_{used} + \frac{1}{8} \epsilon_t^2 + \frac{1}{2\sigma_t^2}$ \label{line:budget}
        \STATE $C_t =\set{ r_t \in W_+ \mid \textsf{JT-SIZE}(r_1, \dots, r_t)) \leq \frac{\rho_{used}}{\rho} \cdot \textsf{MAX-SIZE} }$ \label{line:candidates}
        \STATE \textbf{select} $r_t \in C_t $ using the exponential mechanism with:
$$ q_r(D) = w_r \Big( \norm{ M_r(D) - M_r(\hat{p}_{t-1}) }_1 - \sqrt{2/\pi} \cdot \sigma_t \cdot n_r \Big) $$ \label{line:quality}
        \STATE \textbf{measure} marginal on $r_t$:
$$\tilde{y}_t = M_{r_t}(D) + \mathcal{N}(0, \sigma_t^2 \mathbb{I})$$
        \STATE \textbf{estimate} data distribution using \pgm:
$$ \hat{p}_t = \argmin_{p \in S} \sum_{i=1}^t \frac{1}{\sigma_i} \norm{ M_{r_i}(p) - \tilde{y}_i }_2^2 $$
        \STATE anneal $\epsilon_{t+1}$ and $\sigma_{t+1}$ using \cref{alg:anneal} \label{line:anneal}
    \ENDWHILE
    \STATE \textbf{generate} synthetic data $\hat{D}$ from $\hat{p}_t$ using \pgm
    \STATE {\bfseries return} $\hat{D}$
\end{algorithmic}
\end{algorithm}

\begin{algorithm}[tb]
    \caption{\label{alg:init} Initialize $p_t$ (subroutine of \cref{alg:aim})}
\begin{algorithmic}[1]
\FOR{$r \in \set{r \in W_+ \mid |r| = 1}$}
    \STATE $t = t+1$ \:\:\: $\sigma_t \leftarrow \sigma_0 $ \:\:\: $r_t \leftarrow r$
    \STATE $\tilde{y}_t = M_r(D) + \mathcal{N}(0, \sigma_t^2 \mathbb{I})$
    \STATE $\rho_{used} \leftarrow \rho_{used} + \frac{1}{2 \sigma_t^2}$
\ENDFOR
\STATE $ \hat{p_t} = \argmin_{p \in S} \sum_{i=1}^t \frac{1}{\sigma_i} \norm{ M_{r_i}(p) - \tilde{y}_i }_2^2 $
\end{algorithmic}
\end{algorithm}

\begin{algorithm}[tb]
    \caption{\label{alg:anneal} Budget annealing (subroutine of \cref{alg:aim})}
\begin{algorithmic}[1]
    \IF{$\norm{M_{r_t}(\hat{p}_t) - M_{r_t}(\hat{p}_{t-1})}_1 \leq \sqrt{2/\pi} \cdot \sigma_t \cdot n_{r_t}$}
        \STATE $\epsilon_{t+1} \leftarrow 2 \cdot \epsilon_t$ 
        \STATE $\sigma_{t+1} \leftarrow \sigma_t / 2$
    \ELSE
        \STATE $\epsilon_{t+1} \leftarrow \epsilon_t$
        \STATE $\sigma_{t+1} \leftarrow \sigma_t$
    \ENDIF

    \IF{$(\rho - \rho_{used}) \leq 2 \big( \frac{1}{2\sigma_{t+1}^2} + \frac{1}{8} \epsilon_{t+1}^2 \big)$} \label{line:endcond}
        \STATE $\epsilon_{t+1} = \sqrt{8 \cdot (1-\alpha) \cdot (\rho - \rho_{used})}$
        \STATE $\sigma_{t+1} = \sqrt{1 / (2 \cdot \alpha \cdot (\rho - \rho_{used}))}$
    \ENDIF
\end{algorithmic}
\end{algorithm}

\section{AIM: An Adaptive and Iterative Mechanism for Synthetic Data} \label{sec:aim}



While \mwempgm is a simple and intuitive algorithm, it leaves significant room for improvement, \revision{even after getting the basic elements right}.  Our new mechanism, \aim, is presented in \cref{alg:aim}.  In this section, we describe the differences between \mwempgm and \aim, the justifications for the relevant design decisions, as well as prove the privacy of \aim. 

\paragraph*{\textbf{Intelligent Initialization.}}

In Line \ref{line:init} of \aim, we spend a small fraction of the privacy budget to measure $1$-way marginals in the set of candidates.  Estimating $\hat{p}$ from these noisy marginals gives rise to an \emph{independent} model where all $1$-way marginals are preserved well, and higher-order marginals can be estimated under an independence assumption.  
\revision{Intuitively, this feature of \aim is justified by the fact that \mwempgm tends to select marginal queries covering disjoint attribute subsets in the first few rounds in an attempt to correctly preserve the 1-way marginal distributions.  By measuring all 1-way marginals immediately instead, we are saving the privacy budget that would otherwise be spent to select these marginal queries.}


\paragraph*{\textbf{New Candidates.}}

In Line \ref{line:candidates} of \aim, we make two notable modifications to the candidate set that serve different purposes.  Specifically, the set of candidates is a carefully chosen subset of the marginal queries in the \emph{downward closure} of the workload.  The downward closure of the workload is the set of marginal queries whose attribute sets are subsets of some marginal query in the workload, i.e., $W_+ = \set{r \mid r \subseteq s, s \in W}$.  

Using the downward closure is based on the observation that marginals with many attributes have low counts, and answering them directly with a noise addition mechanism may not provide an acceptable signal to noise ratio.  In these situations, it may be better to answer lower-dimensional marginals, as these tend to exhibit a better signal to noise ratio, while still being useful to estimate the higher-dimensional marginals in the workload.

We filter candidates from this set that do not meet a specific model capacity requirement.  Specifically, the set will only consist of candidates that, if selected, will lead to a \jtsize below a prespecified limit (the default is 80 \textsf{MB}).  This ensures that \aim will never select candidates that lead to an intractable model, and hence allows the mechanism to execute consistently with a predictable memory footprint and runtime.  

\paragraph*{\textbf{Better Selection Criteria.}}

In Line \ref{line:quality} of \aim, we make two modifications to the quality score function for marginal query selection to better reflect the utility we expect from measuring the selected marginal.  In particular, our new quality score function is
\begin{equation} \label{eq:quality}
q_r(D) = w_r \big( \norm{M_r(D) - M_r(p_{t-1})}_1 - \sqrt{2/\pi} \cdot \sigma_t \cdot n_r \big),
\end{equation}
which differs from \mwempgm's quality score function $q_r(D) =\norm{M_r(D) - M_r(p_{t-1})} - n_r$ in two ways.  

First, the expression inside parentheses can be interpreted as the \emph{expected improvement} in $L_1$ error we can expect by measuring that marginal.  It consists of two terms: the $L_1$ error under the current model minus the expected $L_1$ error if it is measured at the current noise level (\cref{thm:halfnorm} in the full paper \cite{mckenna2022aim}).   
Compared to the quality score function in \mwempgm, this quality score function penalizes larger marginals to a much more significant degree, since $\sigma_t \gg 1$ in most cases.  Moreover, this modification makes the selection criteria ``budget-adaptive'', since it recognizes that we can afford to measure larger marginals when $\sigma_t$ is smaller, and we should prefer smaller marginals when $\sigma_t$ is larger.

Second, we give different marginal queries different weights to capture how relevant they are to the workload.  
In particular, we weight the quality score function for a marginal query $r$ using the formula $ w_r = \sum_{s \in W} c_s \mid r \cap s \mid $, as this captures the degree to which the marginal queries in the workload overlap with $r$.  In general, this weighting scheme places more weight on marginals involving more attributes.  
Note that now the sensitivity of $q_r$ is $w_r$ rather than $1$.  \revision{ Thus, to apply the exponential mechanism to select a candidate, we use $\Delta_t = \max_{r \in C_t} w_r$.  
A nice property of using $w_r$ as a multiplicative weight is a certain invariance to how the workload is represented: in particular, the behavior of \aim is identical in the  two cases where (1) two copies of a marginal query are included in the workload, (2) the marginal query appears once with a weight of two. This is not true of \mwempgm, which generally has different behavior based on how the workload is represented. }

This quality score function exhibits an interesting trade-off: the penalty term $\sqrt{2/\pi} \sigma_t n_r$ discourages marginals with more cells, while the weight $w_r$ favors marginals with more attributes.  However, if the inner expression is negative, then the larger weight will make it more negative, and much less likely to be selected.  

\paragraph*{\textbf{Adaptive Rounds and Budget Split.}}

In Lines \ref{line:budget} and \ref{line:anneal} of \aim, we introduce logic to modify the per-round privacy budget as execution progresses, and as a result, eliminate the need to provide the number of rounds up front.  This makes \aim hyper-parameter free, relieving practitioners from that often overlooked burden. 

Specifically, we use a simple annealing procedure (\cref{alg:anneal}) that gradually increases the budget per round when an insufficient amount of information is learned at the current per-round budget.  The annealing condition is activated if the difference between  $M_{r_t}(\hat{p}_t)$ and $M_{r_t}(\hat{p}_{t-1})$ is small, which indicates that not much information was learned in the previous round.  If it is satisfied, then $\epsilon_t$ for is doubled, while $\sigma_t$ is cut in half.  

This check can pass for two reasons: (1) there were no good candidates (all scores are low in \cref{eq:quality}) in which case increasing $\sigma_t$ will make more candidates good, and (2) there were good candidates, but they were not selected because there was too much noise in the select step, which can be remedied by increasing $\epsilon_t$.
The precise annealing threshold used is $\sqrt{2/\pi} \cdot \sigma_t \cdot n_{r_t}$, which is the expected error of the noisy marginal, and an approximation for the expected error of $\hat{p}_t$ on marginal $r$.  When the available privacy budget is small, this condition will be activated more frequently, and as a result, \aim will run for fewer rounds.  Conversely, when the available privacy budget is large, \aim will run for many rounds before this condition activates.

As $\sigma_t$ decreases throughout execution, quality scores generally increase, and it has the effect of ``unlocking'' new candidates that previously had negative quality scores.   We initialize $\sigma_t$ and $\epsilon_t$ conservatively, assuming the mechanism will be run for $T=16d$ rounds.  This is an upper bound on the number of rounds that \aim will run, but in practice the number of rounds will be much less.  


\paragraph*{\textbf{Privacy Analysis.}}

The privacy analysis of \aim utilizes the notion of a \emph{privacy filter} \cite{rogers2016privacy,cesar2021bounding,feldman2021individual}, and the algorithm runs until the realized privacy budget spent matches the total privacy budget available, $\rho$.  To ensure that the budget is not over-spent, there is a special condition (Line \ref{line:endcond} in \cref{alg:anneal}) that checks if the remaining budget is insufficient for two rounds at the current $\epsilon_t$ and $\sigma_t$ parameters.  If this condition is satisfied, $\epsilon_t$ and $\sigma_t$ are set to use up all of the remaining budget in one final round of execution.  

\begin{theorem}
For any $T \geq d$, $\alpha \in (0,1)$, $\rho \geq 0$, \aim satisfies $\rho$-zCDP.
\end{theorem}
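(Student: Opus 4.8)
The plan is to view \aim as an adaptive composition of a handful of elementary private primitives — one Gaussian mechanism per attribute in the initialization of \cref{alg:init}, and one exponential mechanism (the \textbf{select} step) together with one Gaussian mechanism (the \textbf{measure} step) in each iteration of the main loop of \cref{alg:aim} — while treating everything else (the \pgm calls producing $\hat p_t$, the candidate-set construction of \cref{line:candidates}, the annealing updates of \cref{alg:anneal}, and the final data generation) as post-processing, which contributes nothing to the privacy cost by post-processing invariance of zCDP. The one genuinely non-routine feature is that both the number of loop iterations and the per-round parameters $(\epsilon_t,\sigma_t)$ are data-dependent (they are driven by the annealing test, which inspects the noisy measurements through $\hat p_{t-1},\hat p_t$), so plain $k$-fold composition (\cref{prop:composition}) does not apply; instead I would invoke a \emph{privacy filter} \cite{rogers2016privacy} in its zCDP form.

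First I would pin down the per-primitive cost, conditioned on everything preceding it. Each $1$-way marginal in \cref{alg:init} is a Gaussian mechanism on a sensitivity-$1$ query at scale $\sigma_0$, hence $\tfrac{1}{2\sigma_0^2}$-zCDP by \cref{prop:rdpgauss}; identically, the \textbf{measure} step of round $t$ is $\tfrac{1}{2\sigma_t^2}$-zCDP. For \textbf{select}, I would observe that with $\hat p_{t-1}$ and $\sigma_t$ held fixed the quality score of \cref{line:quality} has $L_1$-sensitivity $w_r$ rather than $1$, so the exponential mechanism must be run with normalization $\Delta_t=\max_{r\in C_t}w_r$ (equivalently, via the generalized exponential mechanism of \cite{raskhodnikova2016lipschitz}); with that normalization it is $\tfrac{\epsilon_t^2}{8}$-zCDP by \cref{prop:expprivacy}. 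Listing the run of \aim as a sequence of sub-mechanisms $\mathcal M_1,\mathcal M_2,\dots$, the zCDP cost $\rho_i$ of $\mathcal M_i$ is then a function of the outputs of $\mathcal M_1,\dots,\mathcal M_{i-1}$ only, and the running sum $\sum_{j\le i}\rho_j$ is exactly the bookkeeping variable $\rho_{used}$ maintained in \cref{alg:init} and at \cref{line:budget}.

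Next I would verify that the realized budget never exceeds $\rho$, i.e.\ $\rho_{used}\le\rho$ on every execution path. The initialization costs at most $d\cdot\tfrac{1}{2\sigma_0^2}=\tfrac{d}{T}\,\alpha\rho\le\alpha\rho$, using $T\ge d$ and $0<\alpha<1$; so after \cref{line:init} we have $\rho_{used}\le\alpha\rho<\rho$. For the loop I would induct on the iteration using the end-condition at \cref{line:endcond}: at the close of round $t$ the annealing step compares the remaining budget $\rho-\rho_{used}$ against twice the planned next-round cost $\tfrac{1}{8}\epsilon_{t+1}^2+\tfrac{1}{2\sigma_{t+1}^2}$; if the remainder is smaller it overrides $(\epsilon_{t+1},\sigma_{t+1})$ with $\epsilon_{t+1}=\sqrt{8(1-\alpha)(\rho-\rho_{used})}$ and $\sigma_{t+1}=\sqrt{1/(2\alpha(\rho-\rho_{used}))}$, whose combined cost is exactly $(1-\alpha)(\rho-\rho_{used})+\alpha(\rho-\rho_{used})=\rho-\rho_{used}$, so the next visit to \cref{line:budget} makes $\rho_{used}=\rho$ and the loop at \cref{line:loop} terminates; otherwise the remainder exceeds twice the next round's cost, so after that round a strictly positive budget survives and the invariant persists. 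Hence $\rho_{used}\le\rho$ at termination, and $\sum_i\rho_i\le\rho$ surely.

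Finally I would cite the zCDP privacy-filter theorem: composing sub-mechanisms $\mathcal M_i$, each $\rho_i$-zCDP given the past with $\rho_i$ a function of the past, under a stopping rule that guarantees $\sum_i\rho_i\le\rho$ with probability $1$, yields a $\rho$-zCDP mechanism; together with post-processing invariance for the \pgm and generation steps, this is exactly the statement. The main obstacle I expect is not any single calculation but assembling the filter argument cleanly — fixing the adaptive sequence $\mathcal M_i$ and its filtration, and checking that the interaction of the \texttt{while} guard, \cref{line:budget}, and \cref{line:endcond} really does force $\rho_{used}\le\rho$ deterministically on every branch (the first iteration in particular needs $T\ge d$ and $\alpha<1$ to bound the initialization) — plus the bookkeeping for the varying-sensitivity exponential mechanism.
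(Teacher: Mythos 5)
Your proposal is correct and follows essentially the same route as the paper's own proof: decompose \aim into the initialization Gaussians, the per-round exponential-mechanism select (normalized by $\Delta_t=\max_{r\in C_t}w_r$) and Gaussian measure, bound the initialization cost by $\alpha\rho$ using $T\ge d$, track $\rho_{used}$ as the running sum of per-round costs, and use the end-condition in \cref{alg:anneal} together with a privacy filter \cite{rogers2016privacy} to show the realized budget never exceeds $\rho$, treating everything else as post-processing. Your handling of the two branches of \cref{line:endcond} (override consumes exactly $\rho-\rho_{used}$; otherwise enough budget remains for the next round) matches the paper's argument, so no further changes are needed.
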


\begin{proof}
There are three steps in \aim that depend on the sensitive data: initialization, selection, and measurement.  The initialization step satisfies $\rho_0$-zCDP for
$\rho_0 = | \set{r \in W_+ \mid |r|=1} | / 2 \sigma_0^2
\leq d / 2 \sigma_0^2
= 2 \alpha d \rho / 2 T
\leq \rho
$.
For this step, all we need is that the privacy budget is not over-spent.  The remainder of \aim runs until the budget is consumed.  Each step of \aim involves one invocation of the exponential mechanism, and one invocation of the Gaussian mechanism.  By \cref{prop:expprivacy,prop:rdpgauss,prop:composition}, round $t$ of \aim is $\rho_t$-zCDP for $ \rho_t = \frac{1}{8} \epsilon_t^2/8 + 1/2 \sigma_t^2 $.
Note that at round $t$, $\rho_{used} = \sum_{i=0}^t \rho_i $, and by Theorem 3.1 of \cite{feldman2021individual}, it suffices to show that $\rho_{used}$ never exceeds $\rho$.  There are two cases to consider: the condition in Line \ref{line:endcond} of \cref{alg:anneal} is either true or false.  If it is true, then we know after round $t$ that $ \rho - \rho_{used} \geq 2 \rho_{t+1} $, i.e., the remaining budget is enough to run round $t+1$ without over-spending the budget.  If it is false, then we modify $\epsilon_{t+1}$ and $ \rho_{t+1}$ to exactly use up the remaining budget.  Specifically,
$\rho_{t+1} = 8 (1-\alpha) (\rho - \rho_{used}) / 8 + 2 \alpha (\rho - \rho_{used}) / 2 = \rho - \rho_{used}$.
As a result, when the condition is true, $\rho_{used}$ at time $t+1$ is exactly $\rho$, and after that iteration, the main loop of \aim terminates.  The remainder of the mechanism does not access the data.
\end{proof}


\section{Uncertainty Quantification} \label{sec:uncertainty}

We now propose a solution to the uncertainty quantification problem for \aim.  Our method uses information from \emph{both} the noisy marginals, measured with Gaussian noise, and the marginal queries selected by the exponential mechanism.  The method does not require additional privacy budget, as it quantifies uncertainty only by analyzing the private outputs of \aim. We give guarantees for marginals in the (downward closure of the) workload, which is exactly the set of marginals the analyst cares about. Providing guarantees for marginals outside this set is an area for future work. 

We break our analysis up into two cases: the ``easy'' case, where we have access to unbiased answers for a particular marginal, and the ``hard'' case, where we do not.  In both cases, we identify an \emph{estimator} for a marginal whose error we can bound with high probability.  Then, we connect the error of this estimator to the error of the synthetic data by invoking the triangle inequality.  
Proofs of all statements in this section appear in the full paper \cite{mckenna2022aim}.

\paragraph*{\textbf{The Easy Case: Supported Marginal Queries}}

A marginal query r is ``supported'' whenever $r \subseteq r_t$ for some $t$.  In this case, we can readily obtain an unbiased estimate of $M_r(D)$ from $y_t$, and analytically derive the variance of that estimate.  If there are multiple $t$ satisfying the condition above, we have multiple estimates we can use to reduce the variance.  We can combine these independent estimates to obtain a \emph{weighted average estimator}:

\begin{restatable}[Weighted Average Estimator]{theorem}{estimator} \label{thm:estimator1}
Let $r_1, \dots, r_t$ and $y_1, \dots, y_t$ be as defined in \cref{alg:aim}, and let $ R = \set{ r_1, \dots, r_t } $.  For any $r \in R_+$, there is an (unbiased) estimator $\bar{y}_r = f_r(y_1, \dots, y_t)$ such that:
$$ \bar{y}_r \sim \mathcal{N}(M_r(D), \bar{\sigma}_r^2 \mathbb{I}) \:\: \text{ where } \:\: \bar{\sigma}_r^2 = \Big[\sum_{\substack{i=1 \\ r \subseteq r_i}}^t \frac{n_r}{n_{r_i} \sigma_i^2}\Big]^{-1}, $$
\end{restatable}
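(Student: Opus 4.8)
The plan is to construct $\bar y_r$ explicitly as an inverse-variance weighted average of ``marginalized'' noisy measurements, and then read off its mean and covariance. Fix $r \in R_+$ and let $I = \set{i \leq t : r \subseteq r_i}$; this set is nonempty by definition of $R_+$, and it is exactly the index set appearing in the claimed formula. For $i \in I$, since $r \subseteq r_i$ the marginal on $r$ is obtained from the marginal on $r_i$ by summing out the attributes in $r_i \setminus r$: I would define the linear ``marginalization'' operator $S_i : \mathbb{R}^{n_{r_i}} \to \mathbb{R}^{n_r}$ by $(S_i v)[s] = \sum_{u \in \Omega_{r_i} :\, u_r = s} v[u]$, so that $M_r(D) = S_i M_{r_i}(D)$, and, writing $y_i = M_{r_i}(D) + z_i$ with $z_i \sim \mathcal{N}(0, \sigma_i^2\mathbb{I})$, the vector $\hat y_i := S_i y_i = M_r(D) + S_i z_i$ is an unbiased estimate of $M_r(D)$.

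Next I would determine the law of each $\hat y_i$ and argue that the $\hat y_i$ are independent across $i \in I$. Every coordinate $s \in \Omega_r$ of $S_i z_i$ is a sum of exactly $n_{r_i}/n_r$ coordinates of $z_i$, namely those $u$ with $u_r = s$; these are i.i.d.\ $\mathcal{N}(0, \sigma_i^2)$ and the index sets for distinct $s$ are disjoint, so $S_i z_i \sim \mathcal{N}(0, v_i \mathbb{I})$ with $v_i = (n_{r_i}/n_r)\sigma_i^2$, hence $\hat y_i \sim \mathcal{N}(M_r(D), v_i \mathbb{I})$. For independence I would condition on the realized trace $(r_1,\sigma_1),\dots,(r_t,\sigma_t)$ — which the statement implicitly fixes, since $n_{r_i}$ and $\sigma_i$ occur in the asserted variance — and invoke the standard fact that in an adaptively-composed sequence the fresh Gaussian noise injected in round $i$ is independent of the history that determines round $i$'s parameters; thus conditionally $z_1,\dots,z_t$, and a fortiori the $\set{\hat y_i}_{i \in I}$, are mutually independent.

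Finally I would combine these independent unbiased Gaussian estimates by inverse-variance weighting:
$$ \bar y_r = f_r(y_1,\dots,y_t) := \Big(\sum_{i \in I} v_i^{-1}\Big)^{-1} \sum_{i \in I} v_i^{-1}\,\hat y_i. $$
As a convex combination of unbiased estimators, $\bar y_r$ is unbiased; as a fixed linear combination of independent Gaussian vectors with scalar covariances it is Gaussian with covariance $\big(\sum_{i\in I} v_i^{-1}\big)^{-2}\big(\sum_{i\in I} v_i^{-2} v_i\big)\mathbb{I} = \big(\sum_{i\in I} v_i^{-1}\big)^{-1}\mathbb{I}$; and substituting $v_i^{-1} = n_r/(n_{r_i}\sigma_i^2)$ gives exactly $\bar\sigma_r^2 = \big[\sum_{i\in I} n_r/(n_{r_i}\sigma_i^2)\big]^{-1}$. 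By construction $f_r$ depends only on $y_1,\dots,y_t$ (and the fixed $r_i,\sigma_i$), as required. The only genuinely delicate step is the independence claim in the second paragraph: one must check that the adaptivity of the selection (the $r_i$ and $\sigma_i$ depend on the data and on earlier noisy measurements) does not destroy the conditional independence or Gaussianity of the residuals; the rest is elementary linear algebra together with the covariance of a linear combination of independent normals.
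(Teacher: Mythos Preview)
Your proposal is correct and follows essentially the same approach as the paper: marginalize each $y_i$ with $r \subseteq r_i$ to obtain independent unbiased Gaussian estimates with per-cell variance $n_{r_i}\sigma_i^2/n_r$, then combine by inverse-variance weighting. If anything, you are more careful than the paper, which simply asserts that the marginalized estimates are ``independent'' without addressing the adaptivity of the $r_i$ and $\sigma_i$; your remark that one should condition on the realized trace is exactly the right way to make that step precise.
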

While this is not the only (or best) estimator to use \cite{ding2011differentially}, 
the simplicity allows us to easily bound its error, as we show in \cref{thm:supported1}. 
%
%
\begin{restatable}[Confidence Bound]{theorem}{supported} \label{thm:supported1}
Let $\bar{y}_r$ be the estimator from \cref{thm:estimator1}.  Then,
for any $\lambda \geq 0$, with probability at least $1 - \exp{(-\lambda^2)}$:
$$\norm{M_r(D) - \bar{y}_r}_1 \leq \sqrt{2 \log{2}} \bar{\sigma}_r n_r + \lambda \bar{\sigma}_r \sqrt{2 n_r}$$
\end{restatable}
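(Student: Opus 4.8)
The plan is to bound the $L_1$ norm of a Gaussian vector with known variance. By Theorem \ref{thm:estimator1}, $z := M_r(D) - \bar{y}_r \sim \mathcal{N}(0, \bar{\sigma}_r^2 \mathbb{I})$ is a mean-zero isotropic Gaussian in $\mathbb{R}^{n_r}$, so the statement reduces to a purely probabilistic tail bound on $\norm{z}_1$. I would split this into two pieces: (i) control the \emph{expectation} $\mathbb{E}\norm{z}_1$, and (ii) control the \emph{deviation} of $\norm{z}_1$ above its mean.

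For (i), each coordinate $z[t]$ is $\mathcal{N}(0,\bar{\sigma}_r^2)$, so $\mathbb{E}|z[t]| = \bar{\sigma}_r \sqrt{2/\pi}$ (the half-normal mean, already invoked elsewhere in the paper), giving $\mathbb{E}\norm{z}_1 = \bar{\sigma}_r n_r \sqrt{2/\pi}$. However, the target bound has the constant $\sqrt{2\log 2}$ in front of $\bar{\sigma}_r n_r$, which is slightly larger than $\sqrt{2/\pi}$, so I would actually route through a sub-Gaussian / MGF bound rather than the exact mean: namely, $\mathbb{E}\,e^{s|z[t]|} \le 2\,e^{s^2\bar\sigma_r^2/2}$ for all $s$, hence $\mathbb{E}\,e^{s\norm{z}_1} \le 2^{n_r} e^{n_r s^2 \bar\sigma_r^2/2}$ by independence. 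A Chernoff bound then yields $\Pr[\norm{z}_1 \ge u] \le \exp(-su + n_r\log 2 + n_r s^2\bar\sigma_r^2/2)$; optimizing over $s$ gives a bound of the form $\Pr[\norm{z}_1 \ge \sqrt{2\log 2}\,\bar\sigma_r n_r + v] \le \exp(-v^2/(2 n_r \bar\sigma_r^2))$ after completing the square. Setting $v = \lambda\bar\sigma_r\sqrt{2 n_r}$ makes the exponent $-\lambda^2$, matching the claimed probability $1-\exp(-\lambda^2)$.

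Alternatively, and perhaps cleaner, for step (ii) I would note that $x \mapsto \norm{x}_1$ is $\sqrt{n_r}$-Lipschitz with respect to the Euclidean norm (by Cauchy--Schwarz), so Gaussian concentration (the Tsirelson--Ibragimov--Sudakov inequality) gives $\Pr[\norm{z}_1 \ge \mathbb{E}\norm{z}_1 + v] \le \exp(-v^2/(2 n_r \bar\sigma_r^2))$ directly; combined with a clean upper bound $\mathbb{E}\norm{z}_1 \le \sqrt{2\log 2}\,\bar\sigma_r n_r$ (which holds since $\sqrt{2/\pi} < \sqrt{2\log 2}$), and again taking $v = \lambda\bar\sigma_r\sqrt{2 n_r}$, the result follows. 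I would present whichever of these two routes gives the tidiest constants; both hinge on the same two facts.

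The only mild obstacle is bookkeeping with the constant $\sqrt{2\log 2}$: it arises naturally from the $2^{n_r}$ factor in the MGF approach (the $\log 2$ per coordinate), so I would make sure to take the MGF route precisely because it produces exactly that constant rather than $\sqrt{2/\pi}$, avoiding any need to patch the expectation bound by hand. Everything else — independence of coordinates, the half-normal MGF, completing the square — is routine, so the proof should be short.
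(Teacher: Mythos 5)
Your main (MGF/Chernoff) route is essentially the paper's own proof: the paper reduces the statement to a tail bound for the $L_1$ norm of an i.i.d.\ Gaussian vector (its Theorem~\ref{thm:halfnorm}), which it proves by bounding the half-normal moment generating function by $2e^{\sigma^2 t^2/2}$, applying a Chernoff bound with the same optimizing choice of $t$, and completing the square to produce exactly the $\sqrt{2\log 2}\,\sigma n + \lambda\sigma\sqrt{2n}$ split with failure probability $e^{-\lambda^2}$. Your alternative via the $\sqrt{n_r}$-Lipschitzness of $\|\cdot\|_1$ and Gaussian concentration is also correct and in fact slightly tighter (it would let you keep the true mean constant $\sqrt{2/\pi}$ rather than $\sqrt{2\log 2}$), but it is not the route the paper takes.
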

Note that \cref{thm:supported1} gives a guarantee on the error of $\bar{y}_r$, but we are ultimately interested in the error of $\hat{D}$.  Fortunately, it easy easy to relate the two by using the triangle inequality: 
\begin{corollary} \label{cor:supported1}
Let $\hat{D}$ be any synthetic dataset, and let $\bar{y}_r$ be the estimator from \cref{thm:estimator1}.  Then with probability at least $1 - \exp{(-\lambda^2)}$:
$$\norm{M_r(D) - M_r(\hat{D})}_1 \leq \norm{M_r(\hat{D}) - \bar{y}_r}_1 + \sqrt{2 \log{2}} \bar{\sigma}_r n_r + \lambda \bar{\sigma}_r \sqrt{2 n_r}$$
\end{corollary}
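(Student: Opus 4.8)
\textbf{Proof proposal for \cref{cor:supported1}.}
The plan is to decompose the quantity of interest through the estimator $\bar{y}_r$ using the triangle inequality for the $L_1$ norm, and then invoke the high-probability bound already established for the estimation error. Concretely, I would first write
$$ \norm{M_r(D) - M_r(\hat{D})}_1 \leq \norm{M_r(D) - \bar{y}_r}_1 + \norm{\bar{y}_r - M_r(\hat{D})}_1. $$
The second term on the right is a fixed, data-independent quantity once the outputs $y_1,\dots,y_t$ of \aim and the synthetic dataset $\hat{D}$ are fixed: it involves only the released noisy measurements (through $\bar y_r = f_r(y_1,\dots,y_t)$) and $\hat D$, so it contributes no randomness and no additional privacy cost. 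Hence the only stochastic term to control is $\norm{M_r(D) - \bar{y}_r}_1$.

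Next I would apply \cref{thm:supported1} directly to the first term: with probability at least $1 - \exp(-\lambda^2)$ we have $\norm{M_r(D) - \bar{y}_r}_1 \leq \sqrt{2\log 2}\,\bar{\sigma}_r n_r + \lambda \bar{\sigma}_r \sqrt{2 n_r}$. Substituting this bound into the triangle-inequality decomposition on the same event yields exactly the claimed inequality, since the event on which \cref{thm:supported1} holds is the event on which the corollary's bound holds, and it has probability at least $1 - \exp(-\lambda^2)$.

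There is essentially no obstacle here: the argument is a one-line application of the triangle inequality combined with \cref{thm:supported1}. The only point worth stating explicitly — and which I would emphasize in the write-up — is that $\norm{M_r(\hat D) - \bar y_r}_1$ is a post-processing function of the private release and of $\hat D$ (which is itself produced by \pgm from the private measurements), so the right-hand side is fully computable by the analyst without consuming further budget; this is what makes the corollary usable as an \emph{a posteriori} confidence bound rather than a purely theoretical statement.
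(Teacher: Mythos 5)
Your proof is correct and follows exactly the paper's intended argument: the corollary is obtained by the triangle inequality through $\bar y_r$, bounding the stochastic term $\norm{M_r(D)-\bar y_r}_1$ with \cref{thm:supported1} on the same probability-$1-\exp(-\lambda^2)$ event. Nothing further is needed.
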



The LHS is what we are interested in bounding, and we can readily compute the RHS from the output of \aim.  The RHS is a random quantity that, with the stated probability, upper bounds the error.
When we plug in the realized values we get a concrete numerical bound that can be interpreted as a (one-sided) confidence interval.  
In general, we expect $M_r(\hat{D})$ to be close to $\bar{y}_r$, so the error bound for $\hat{D}$ will not be that much larger than that of $\bar{y}_r$.\footnote{\revision{From prior experience, we might expect the error of $\hat{D}$ to be \emph{lower} than the error of $\bar{y}_r$ \cite{nikolov2013geometry,mckenna2019graphical}, so we are paying for this difference by increasing the error bound when we might hope to save instead.  Unfortunately, this intuition does not lend itself to a clear analysis that provides better guarantees.}}

\paragraph*{\textbf{The Hard Case: Unsupported Marginal Queries}}

We now shift our attention to the hard case, providing guarantees about the error of different marginals even for unsupported marginal queries (those not selected during execution of \aim).  This problem is significantly more challenging.  Our key insight is that marginal queries \emph{not selected} have relatively low error compared to the marginal queries that were selected.  We can easily bound the error of selected queries and relate that to non-selected queries by utilizing the guarantees of the exponential mechanism.  In \cref{thm:unsupported1} below, we provide expressions that capture the uncertainty of these marginals with respect to $\hat{p}_{t-1}$, the iterates of \aim.

\begin{restatable}[Confidence Bound]{theorem}{unsupported} \label{thm:unsupported1}
Let $\sigma_t, \epsilon_t, r_t, \tilde{y_t}, C_t, \hat{p}_t$ be as defined in \cref{alg:aim}, and let $\Delta_t = \max_{r \in C_t} w_r$.  For all $r \in C_t$,
 with probability at least $1 - e^{-\lambda_1^2/2} - e^{-\lambda_2}$:
\begin{align*}
\norm{M_r(D) - M_r(\hat{p}_{t-1})}_1 \leq w_r^{-1} \big(B_r + \lambda_1 \sigma_t \sqrt{n_{r_t}} + \lambda_2 \frac{2 \Delta_t}{\epsilon_t}\big)
\end{align*}
where $B_r$ is equal to:
$$w_{r_t} \underbrace{\norm{M_{r_t}(\hat{p}_{t-1})-y_t}_1}_{\text{estimated error on } r_t} + \underbrace{\sqrt{2/\pi} \sigma_t \big(w_r n_r - w_{r_t} n_{r_t} \big)}_{\substack{\text{relationship to} \\ \text{non-selected candidates}}} + \underbrace{\frac{2\Delta_t}{\epsilon_t} \log{(|C_t|)}}_{\substack{\text{uncertainty from} \\ \text{exponential mech.}}}  $$
\end{restatable}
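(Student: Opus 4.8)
The plan is to exploit the fact that, for every $r\in C_t$, the target quantity $\norm{M_r(D)-M_r(\hat p_{t-1})}_1$ is just an affine rearrangement of the quality score $q_r(D)=w_r\big(\norm{M_r(D)-M_r(\hat p_{t-1})}_1-\sqrt{2/\pi}\,\sigma_t n_r\big)$ that the exponential mechanism used to pick $r_t$ in round $t$, the subtracted penalty $\sqrt{2/\pi}\,\sigma_t n_r$ being data-independent once the execution through the start of round $t$ is fixed. The argument then has three moving parts: (i) the utility guarantee of the exponential mechanism turns a bound on $q_r(D)$ into one on $q_{r_t}(D)$ for the candidate actually selected; (ii) since $r_t$ was then measured, the unobservable error $\norm{M_{r_t}(D)-M_{r_t}(\hat p_{t-1})}_1$ sitting inside $q_{r_t}$ is replaced, via the triangle inequality, by the \emph{observable} residual $\norm{M_{r_t}(\hat p_{t-1})-y_t}_1$ plus the Gaussian measurement error $\norm{Z}_1$, $Z\sim\mathcal N(0,\sigma_t^2\mathbb I_{n_{r_t}})$; and (iii) a Gaussian tail bound controls $\norm{Z}_1$, whose mean $\sqrt{2/\pi}\,\sigma_t n_{r_t}$ is precisely the penalty subtracted in $q_{r_t}$ — which is exactly why the annealing threshold in \cref{alg:aim} is $\sqrt{2/\pi}\,\sigma_t n_{r_t}$ — so the penalty and the Gaussian mean absorb each other, leaving the penalty difference that shows up in $B_r$.

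First I would invoke the exponential mechanism's utility bound, using $\Delta_t=\max_{r\in C_t}w_r$ as the common sensitivity: conditioning on the execution through the start of round $t$, except with probability $e^{-\lambda_2}$ the selected $r_t$ obeys $q_{r_t}(D)\ge \max_{r\in C_t}q_r(D)-\tfrac{2\Delta_t}{\epsilon_t}(\log|C_t|+\lambda_2)$. Because this controls the maximum, it holds for all $r\in C_t$ at once, so no additional union bound over $C_t$ is needed. (As the paper notes, one may instead run the \emph{generalized} exponential mechanism of \cite{raskhodnikova2016lipschitz} to handle the per-candidate sensitivities $w_r$; carrying the same step through with a candidate-dependent sensitivity is what produces the clean $w_r^{-1}$-scaled form of the bound rather than a looser one.) Substituting the definitions of $q_r$ and $q_{r_t}$ rewrites this as a bound on $w_r\norm{M_r(D)-M_r(\hat p_{t-1})}_1$ in terms of $w_{r_t}\norm{M_{r_t}(D)-M_{r_t}(\hat p_{t-1})}_1$, the penalties $w_r\sqrt{2/\pi}\,\sigma_t n_r$ and $-w_{r_t}\sqrt{2/\pi}\,\sigma_t n_{r_t}$, and the exponential-mechanism slack $\tfrac{2\Delta_t}{\epsilon_t}(\log|C_t|+\lambda_2)$.

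Next I would make the right-hand side depend only on released quantities. Writing $y_t=M_{r_t}(D)+Z$ and applying the triangle inequality gives $\norm{M_{r_t}(D)-M_{r_t}(\hat p_{t-1})}_1\le \norm{M_{r_t}(\hat p_{t-1})-y_t}_1+\norm{Z}_1$, where the residual is deterministic given the output of \aim, and conditionally on $r_t$ the quantity $\norm{Z}_1$ is a sum of $n_{r_t}$ i.i.d.\ half-normals with mean $\sqrt{2/\pi}\,\sigma_t n_{r_t}$ (\cref{thm:halfnorm}). Since $\norm{\cdot}_1$ is $\sqrt{n_{r_t}}$-Lipschitz in $\ell_2$, Gaussian concentration (the same half-normal tail argument used for \cref{thm:supported1}) gives $\norm{Z}_1\le \sqrt{2/\pi}\,\sigma_t n_{r_t}+\lambda_1\sigma_t\sqrt{n_{r_t}}$ except with probability $e^{-\lambda_1^2/2}$, and this probability does not depend on the realized value of $r_t$, so union-bounding with the exponential-mechanism event gives failure probability $e^{-\lambda_1^2/2}+e^{-\lambda_2}$ and the bound holds for all $r\in C_t$ simultaneously. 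Feeding the Gaussian bound into the inequality from the previous step, so that the $\sqrt{2/\pi}\,\sigma_t n_{r_t}$ mean absorbs into the penalty structure of $q_{r_t}$, then dividing by $w_r$ and collecting terms — the deterministic residual $w_{r_t}\norm{M_{r_t}(\hat p_{t-1})-y_t}_1$ and the data-independent penalty difference $\sqrt{2/\pi}\,\sigma_t(w_r n_r-w_{r_t}n_{r_t})$ and the $\tfrac{2\Delta_t}{\epsilon_t}\log|C_t|$ term into $B_r$, and the $\lambda_1\sigma_t\sqrt{n_{r_t}}$ and $\lambda_2\tfrac{2\Delta_t}{\epsilon_t}$ terms outside — yields the stated bound.

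The step I expect to be the main obstacle is the bookkeeping: the weights $w_r$ and $w_{r_t}$ enter the exponential-mechanism inequality asymmetrically, and one must be careful that every term surviving in $B_r$ is either data-independent (the penalties, $\log|C_t|$) or computable from the released outputs of \aim (the residual $\norm{M_{r_t}(\hat p_{t-1})-y_t}_1$, and $\sigma_t,\epsilon_t,\Delta_t$) — in particular neither $\norm{M_{r_t}(D)-M_{r_t}(\hat p_{t-1})}_1$ nor $\norm{Z}_1$ may remain, which is what dictates the exact triangle-inequality split and which quantity gets concentrated. Matching the advertised constants exactly — extracting a $\lambda_1\sigma_t\sqrt{n_{r_t}}$ term rather than a $w_{r_t}$-inflated one, and obtaining the precise $\tfrac{2\Delta_t}{\epsilon_t}$-scaled exponential-mechanism pieces — is the point where the choice of mechanism variant (standard with $\Delta_t=\max_r w_r$ versus generalized with per-candidate sensitivity) and of the matching utility-tail form must be made with care so that the failure probabilities combine to exactly $e^{-\lambda_1^2/2}+e^{-\lambda_2}$.
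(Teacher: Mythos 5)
Your first step (the exponential-mechanism utility bound with $\Delta_t=\max_{r\in C_t}w_r$, applied to all $r\in C_t$ simultaneously and then rearranged using $q_r = w_r(E_r - \sqrt{2/\pi}\,\sigma_t n_r)$ with $E_r=\norm{M_r(D)-M_r(\hat{p}_{t-1})}_1$) is exactly the paper's first step. The gap is in how you handle $E_{r_t}$. You bound it by the triangle inequality, $E_{r_t}\le \norm{M_{r_t}(\hat{p}_{t-1})-y_t}_1+\norm{Z}_1$, and then concentrate $\norm{Z}_1$ around its mean, getting $E_{r_t}\le \norm{M_{r_t}(\hat{p}_{t-1})-y_t}_1+\sqrt{2/\pi}\,\sigma_t n_{r_t}+\lambda_1\sigma_t\sqrt{n_{r_t}}$. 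When you substitute this into the exponential-mechanism inequality, the $+\sqrt{2/\pi}\,\sigma_t n_{r_t}$ from the noise mean cancels the $-w_{r_t}\sqrt{2/\pi}\,\sigma_t n_{r_t}$ penalty coming from $q_{r_t}$ (this is the ``absorption'' you describe), and what survives is $+\sqrt{2/\pi}\,\sigma_t\, w_r n_r$ alone --- \emph{not} the difference $\sqrt{2/\pi}\,\sigma_t(w_r n_r - w_{r_t}n_{r_t})$ that appears in the stated $B_r$. Your route therefore proves only a strictly weaker bound, larger than the claimed one by $\sqrt{2/\pi}\,\sigma_t\, w_{r_t} n_{r_t}$ (typically the dominant, order-$n_{r_t}$ term), so it does not establish the theorem as stated; the claim that the manipulation ``yields the stated bound'' is where the argument breaks.

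The paper keeps the negative penalty term because it does not go through the triangle inequality at all. Its key lemma (\cref{thm:hdline2}) is a \emph{lower-tail} bound on the folded-normal residual: for fixed $a,b$ and $c=b+z$ with $z\sim\mathcal{N}(0,\sigma^2\mathbb{I})$, $\Pr[\norm{a-c}_1\le\norm{a-b}_1-\lambda\sigma\sqrt{n}]\le e^{-\lambda^2/2}$, proved via the folded-normal moment generating function together with a Mills-ratio monotonicity lemma (\cref{conjecture}). Applied with $a=M_{r_t}(\hat{p}_{t-1})$, $b=M_{r_t}(D)$, $c=y_t$, this gives $E_{r_t}\le\norm{M_{r_t}(\hat{p}_{t-1})-y_t}_1+\lambda_1\sigma_t\sqrt{n_{r_t}}$ with \emph{no} additive $\sqrt{2/\pi}\,\sigma_t n_{r_t}$ term --- intuitively, the noise can only inflate the observed residual on average, since the folded normal has mean at least $|a_i-b_i|$ --- and that is precisely what preserves the penalty difference in $B_r$. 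Your Lipschitz-concentration instinct can be salvaged: apply the lower-tail Gaussian concentration to the $\sqrt{n_{r_t}}$-Lipschitz map $z\mapsto\norm{M_{r_t}(\hat{p}_{t-1})-M_{r_t}(D)-z}_1$ and use that its mean is at least $E_{r_t}$; but applied as you propose, to the upper tail of $\norm{Z}_1$ after a triangle inequality, the mean term cannot be avoided. (Your secondary worry about a $w_{r_t}$ factor on the $\lambda_1\sigma_t\sqrt{n_{r_t}}$ term is real, but it arises in the paper's own algebra as well and is orthogonal to the gap above.)
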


We can readily compute $B_r$ from the output of \aim, and use it to provide a bound on error in the form of a one-sided confidence interval that captures the true error with high probability.  While these error bounds are expressed with respect to $\hat{p}_{t-1}$, they can readily be extended to give a guarantee with respect to $\hat{D}$.  

\begin{corollary} \label{cor:unsupported1}
Let $\hat{D}$ be any synthetic dataset, and let $B_r$ be as defined in \cref{thm:unsupported1}. Then with probability at least $1 - e^{-\lambda_1^2/2} - e^{-\lambda_2}$:
\begin{align*}
&\norm{M_r(D) - M_r(\hat{D})}_1 \\ &\leq \norm{M_r(\hat{D}) - M_r(\hat{p}_{t-1})}_1 + w_r^{-1} \big(B_r + \lambda_1 \sigma_t \sqrt{n_{r_t}} + \lambda_2 \frac{2 \Delta_t}{\epsilon_t}\big)
\end{align*}
\end{corollary}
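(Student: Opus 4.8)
The plan is to derive this bound as an immediate consequence of \cref{thm:unsupported1} and the triangle inequality, in exact parallel with how \cref{cor:supported1} follows from \cref{thm:supported1}. First I would apply the triangle inequality for the $L_1$ norm to split the target quantity: for any synthetic dataset $\hat{D}$,
$$ \norm{M_r(D) - M_r(\hat{D})}_1 \leq \norm{M_r(D) - M_r(\hat{p}_{t-1})}_1 + \norm{M_r(\hat{p}_{t-1}) - M_r(\hat{D})}_1. $$
The second summand is deterministic (given the output of \aim) and, using $\norm{a-b}_1 = \norm{b-a}_1$, is exactly the term $\norm{M_r(\hat{D}) - M_r(\hat{p}_{t-1})}_1$ appearing in the claimed bound, so it needs no probabilistic treatment.

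Next I would control the first summand with \cref{thm:unsupported1}: for every $r \in C_t$, on an event of probability at least $1 - e^{-\lambda_1^2/2} - e^{-\lambda_2}$,
$$ \norm{M_r(D) - M_r(\hat{p}_{t-1})}_1 \leq w_r^{-1}\big(B_r + \lambda_1 \sigma_t \sqrt{n_{r_t}} + \lambda_2 \tfrac{2\Delta_t}{\epsilon_t}\big). $$
Plugging this into the decomposition above gives the stated inequality on the very same event, hence with the same probability. Because the term $\norm{M_r(\hat{D}) - M_r(\hat{p}_{t-1})}_1$ carries no randomness, no union bound, additional concentration, or loss in the failure probability is incurred, and the inequality holds simultaneously for all $\hat{D}$ on that fixed event---which justifies the phrase ``any synthetic dataset''.

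I expect no real obstacle: the mathematical content is entirely contained in \cref{thm:unsupported1}, and this corollary is a one-line deduction. The only point requiring a moment of care is that the high-probability event here is literally the event from \cref{thm:unsupported1} (the joint event over the Gaussian measurement noise, governed by $\lambda_1$, and the exponential-mechanism selection, governed by $\lambda_2$); appending a deterministic quantity to both sides leaves that event, and thus the probability bound $1 - e^{-\lambda_1^2/2} - e^{-\lambda_2}$, unchanged.
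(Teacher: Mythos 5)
Your proof is correct and matches the paper's (implicit) argument exactly: the corollary follows from \cref{thm:unsupported1} by the triangle inequality, with the term $\norm{M_r(\hat{D}) - M_r(\hat{p}_{t-1})}_1$ being deterministic given \aim's output, so the failure probability is unchanged. Nothing to add.
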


Again, the LHS is what we are interested in bounding, and we can compute the RHS from the output of \aim.  We expect $\hat{p}_{t-1}$ to be reasonably close to $\hat{D}$, especially when $t$ is larger, so this bound will often be comparable to the original bound on $\hat{p}_{t-1}$.  

\paragraph*{\textbf{Putting it Together}}

We've provided guarantees for both supported and unsupported marginals.  The guarantees for unsupported marginals also apply for supported marginals, although we generally expect them to be looser.  In addition, there is one guarantee \emph{for each round of \aim}.  It is tempting to use the bound that provides the smallest estimate, although unfortunately doing this invalidates the bound.  To ensure a valid bound, we must pick only one round, and that cannot be decided based on the value of the bound.  A natural choice is to use only the last round, for three reasons: (1) $\sigma_t$ is smallest and $\epsilon_t$ is largest in that round, (2) the error of $\hat{p}_{t}$ generally goes down with $t$, and (3) the distance between $\hat{p}_t$ and $\hat{D}$ should be the smallest in the last round.  However, there may be some marginal queries which were not in the candidate set for that round.  To bound the error on these marginals, we use the last round where that marginal query was in the candidate set.  


\section{Experiments} \label{sec:experiments}

In this section we empirically evaluate \aim, comparing it to a collection of state-of-the-art mechanisms and baseline mechanisms for a variety of workloads, datasets, and privacy levels.

\subsection{Experimental Setup}

\paragraph*{\textbf{Datasets}}

Our evaluation includes datasets with varying size and dimensionality.  We describe our exact pre-processing scheme in the full paper \cite{mckenna2022aim}, and summarize the pre-processed datasets and their characteristics in the table below.

\begin{table}[H]
\caption{ \label{table:datasets} Summary of datasets used in the experiments.}
\begin{tabular}{l|cccc}
\multirow{2}{*}{Dataset} & \multirow{2}{*}{Records} & \multirow{2}{*}{Dimensions} & Min/Max & Total \\
& & & Domains & Domain Size \\\hline
\adult \cite{kohavi1996scaling} & 48842 & 15 & 2--42  & $4 \times 10^{16}$  \\
\salary \cite{hay2016principled} & 135727 & 9 & 3--501 & $1 \times 10^{13}$ \\
\msnbc \cite{cadez2000visualization} & 989818 & 16 & 18 & $1 \times 10^{20}$ \\
\fire \cite{ridgeway2021challenge} & 305119 & 15 & 2--46 & $4 \times 10^{15}$ \\
\nltcs \cite{nltcs} & 21574 & 16 & 2 & $7 \times 10^4$ \\
\titanic \cite{titanic} & 1304 & 9 & 2--91 & $9 \times 10^{7}$ \\
\end{tabular}
\end{table}

\paragraph*{\textbf{Workloads}}

We consider 3 workloads for each dataset, \general, \target, and \weighted.  Each workload contains a collection of 3-way marginal queries. The \general workload contains queries for \emph{all} 3-way marginals.  The \target workload contains queries for all 3-way marginals involving some specified \emph{target} attribute.  For the \adult and \titanic datasets, these are the \textsc{income>50K} attribute and the \textsc{Survived} attribute, as those correspond to the attributes we are trying to predict for those datasets.  For the other datasets, the target attribute is chosen uniformly at random.  The \weighted workload contains a collection of 3-way marginal queries \emph{biased} towards certain attributes and attribute combinations.  In particular, each attribute is assigned a weight sampled from a squared exponential distribution.  256 triples of attributes are sampled with probability proportional to the product of their weights.  This results in workloads where certain attributes appear far more frequently than others, and is intended to capture the situation where analysts focus on a small number of interesting attributes.  \revision{In \cref{sec:2way}, we provide results on a fourth workload, \workload{all-2way} as well.} All randomness in the construction of the workload was done with a fixed random seed, to ensure that the workloads remain the same across executions of different mechanisms and parameter settings.

\paragraph*{\textbf{Mechanisms}}

We compare against both workload-agnostic and workload-aware mechanisms in this section.  The workload-agnostic mechanisms we consider are \privbayespgm, \mst, \privmrf.  The workload-aware mechanisms we consider are \mwempgm, \rap, \gem, and \aim.  We set the hyper-parameters of every mechanism to default values available in their open source implementations.
\revision{While these default hyper-parameters may be suboptimal, we conducted sensitivity experiments in \cref{sec:sensitivity} to evaluate the impact of hyper-parameters on the performance of competing mechanisms, and found that the improvement in utility from optimizing hyper-parameters is outweighed by the cost to privacy needed to run an appropriate DP hyper-parameter selection mechanism.}
We also consider baseline mechanisms: \independent and \gaussian.  The former measures all 1-way marginals using the Gaussian mechanism, and generates synthetic data using an independence assumption.  The latter answers all queries in the workload using the Gaussian mechanism (using the optimal privacy budget allocation described in \cite{zhang2020privsyn}).  Note that this mechanism \emph{does not} generate synthetic data, only query answers.  

\paragraph*{\textbf{Privacy Budgets}}

We consider a wide range of privacy parameters, varying $\epsilon \in [0.01, 100.0]$ and setting $\delta = 10^{-9}$.  The most practical regime is $\epsilon \in [0.1, 10.0]$, but mechanism behavior at the extremes can be enlightening so we include them as well.

\paragraph*{\textbf{Evaluation.}}

For each dataset, workload, and $\epsilon$, we run each mechanism for 5 trials, and measure the workload error from \cref{def:error}.  We report the average workload error across the five trials, along with error bars corresponding to the minimum and maximum workload error observed across the five trials.  \revision{In \cref{sec:metrics}, we evaluate error using $L_2$ and $L_{\inf}$ error metrics as well.}

\paragraph*{\textbf{Runtime Environment.}}

We ran most experiments on a single core of a compute cluster with a 4 GB memory limit and a 24 hour time limit.\footnote{These experiments usually completed in well under the time limit.}  These resources were not sufficient to run \privmrf or \rap, so we utilized different machines to run those mechanisms.  \privmrf requires a GPU to run, so we used one node a different compute cluster, which has a Nvidia GeForce RTX 2080 Ti GPU.  \rap required significant memory resources, so we ran those experiments on a machine with 16 cores and 64 GB of RAM.

\begin{figure*}[t!]
\centering
\includegraphics[width=\textwidth]{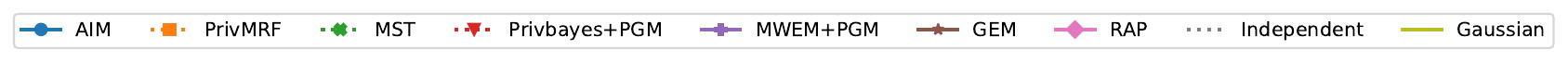}
\includegraphics[width=0.366\textwidth]{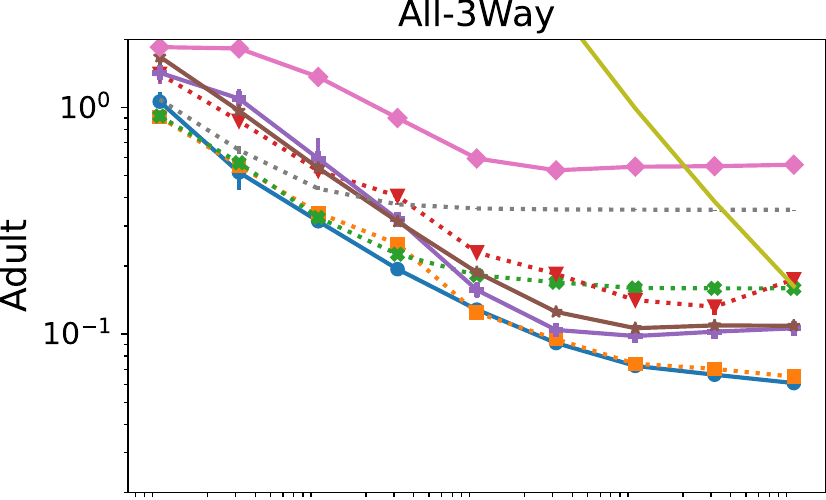}
\includegraphics[width=0.312\textwidth]{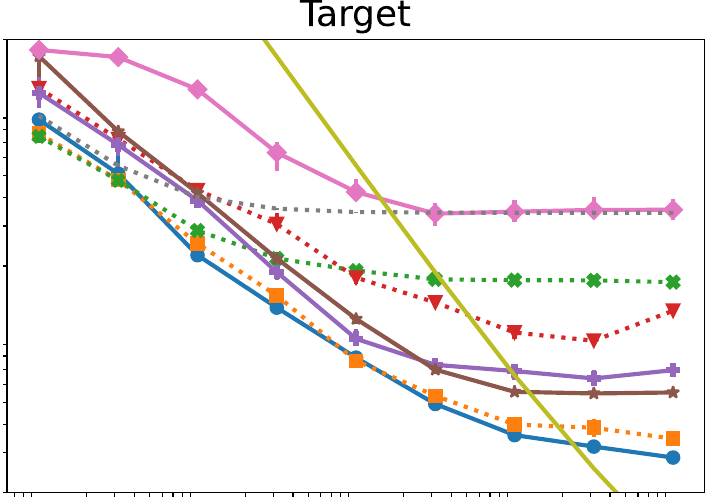}
\includegraphics[width=0.312\textwidth]{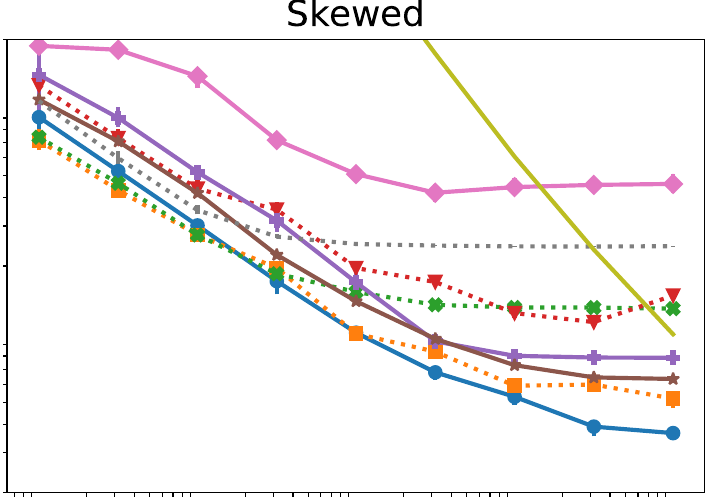}
\includegraphics[width=0.366\textwidth]{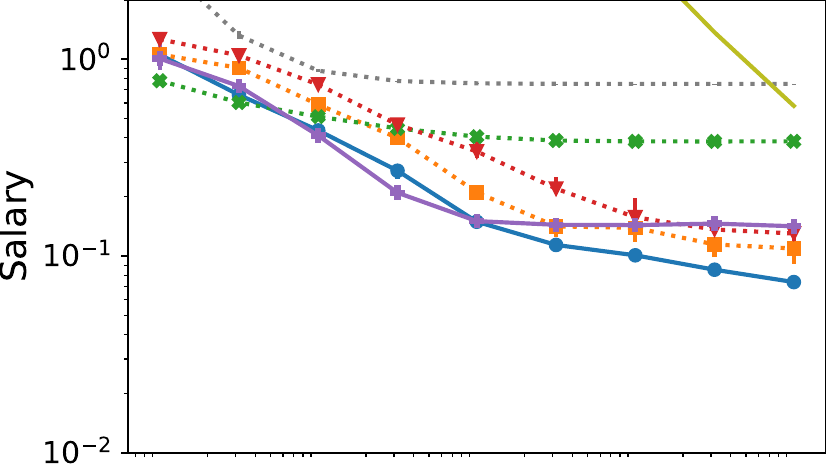}
\includegraphics[width=0.312\textwidth]{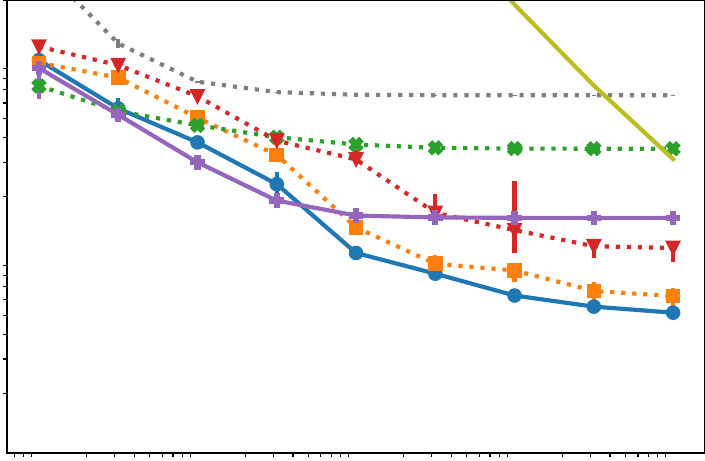}
\includegraphics[width=0.312\textwidth]{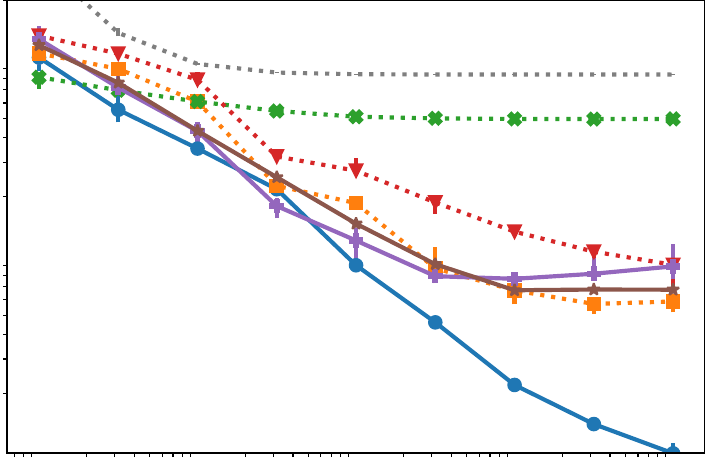}
\includegraphics[width=0.366\textwidth]{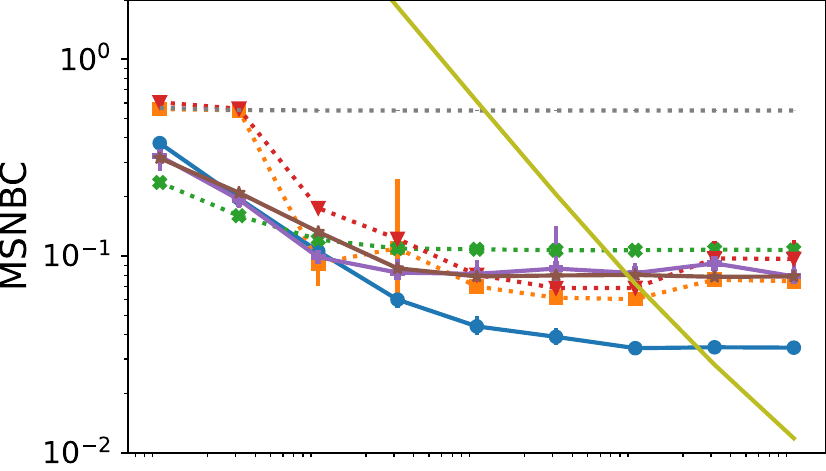}
\includegraphics[width=0.312\textwidth]{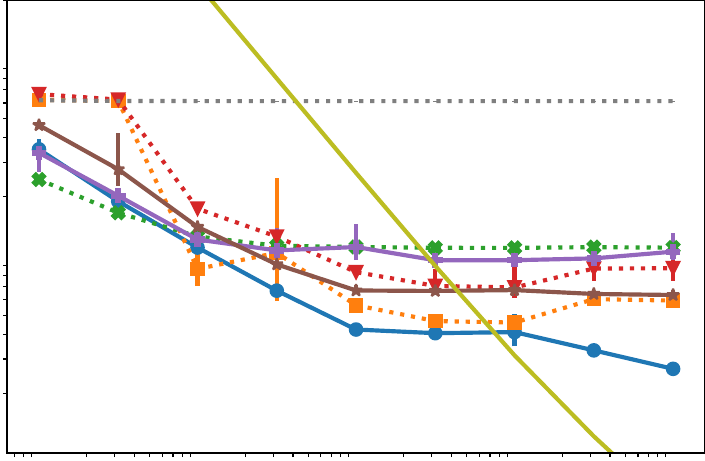}
\includegraphics[width=0.312\textwidth]{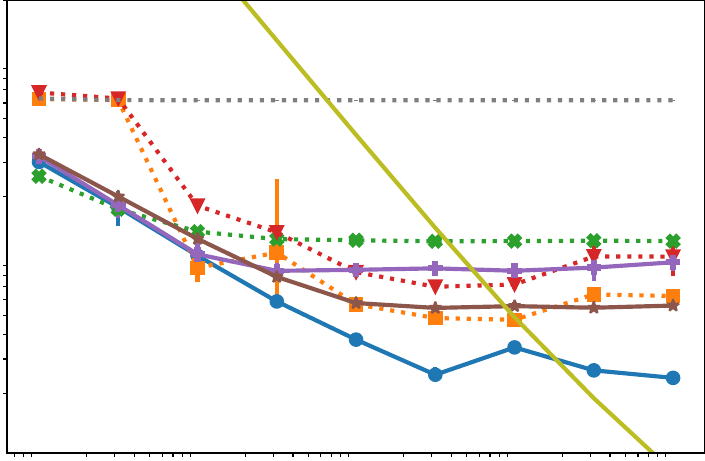}
\includegraphics[width=0.366\textwidth]{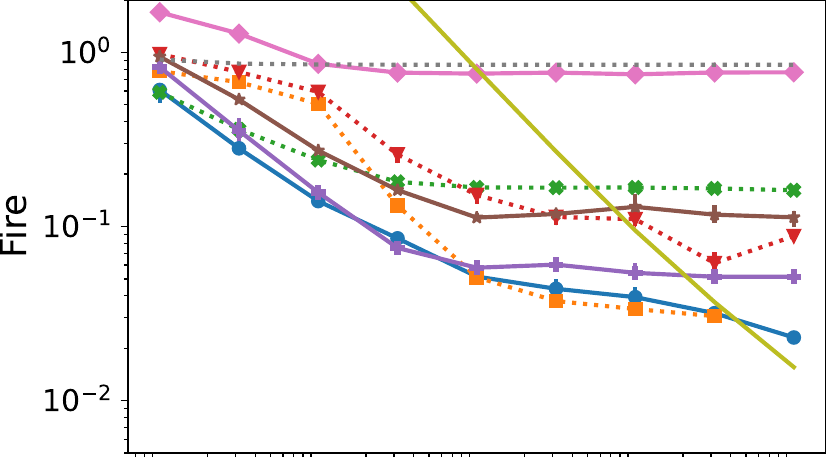}
\includegraphics[width=0.312\textwidth]{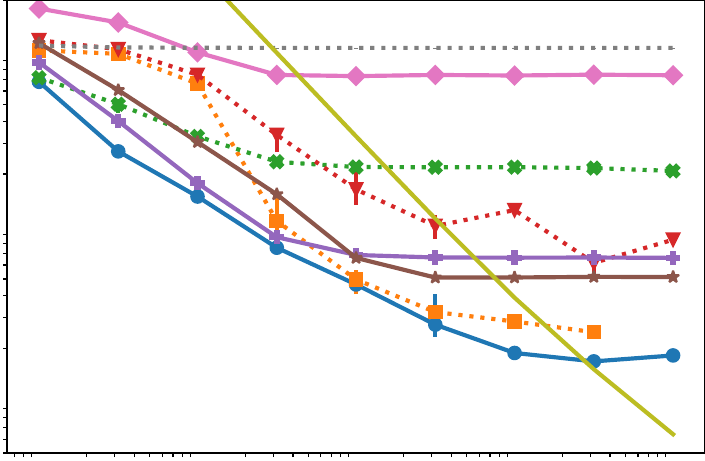}
\includegraphics[width=0.312\textwidth]{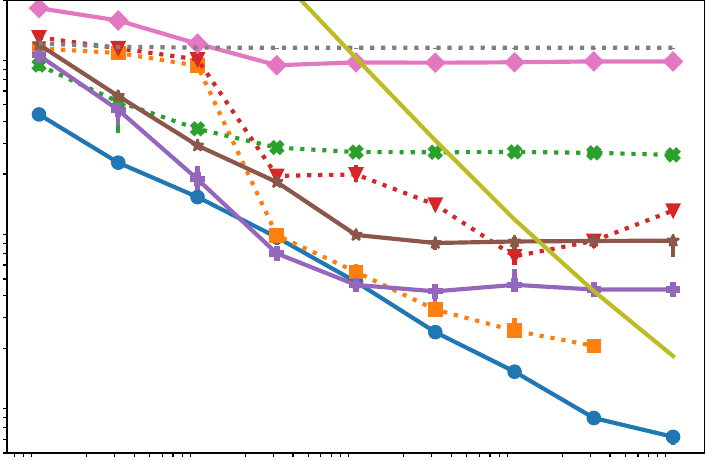}
\includegraphics[width=0.366\textwidth]{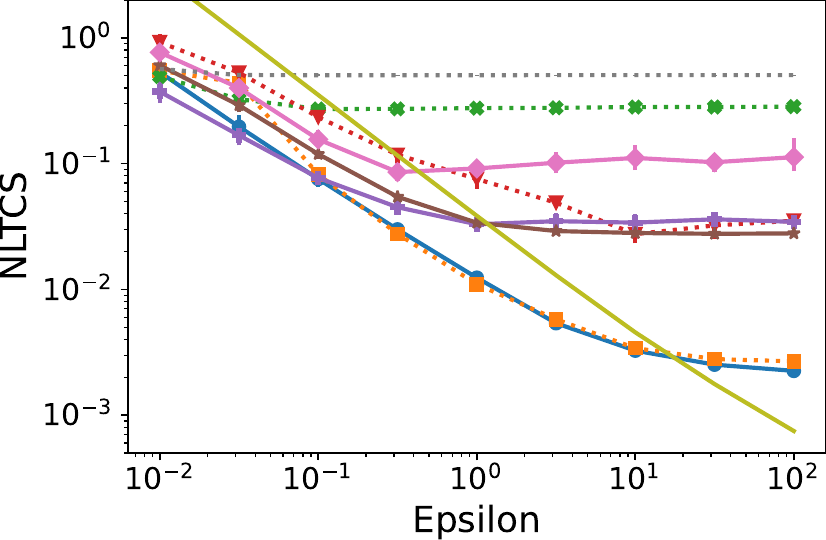}
\includegraphics[width=0.312\textwidth]{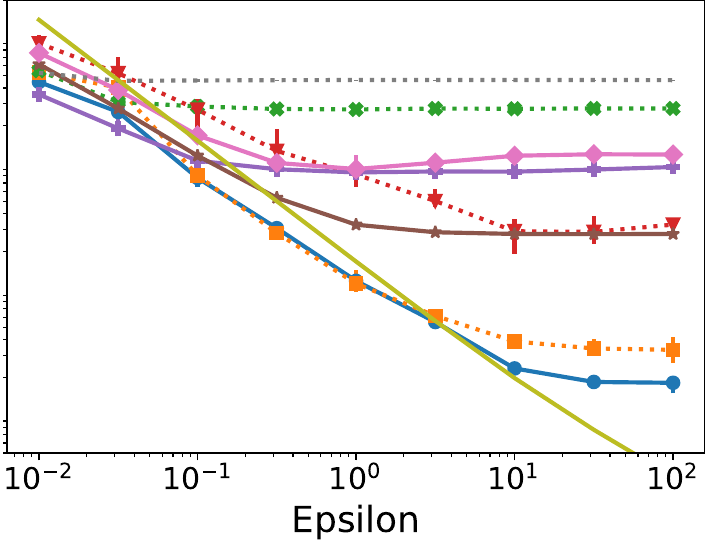}
\includegraphics[width=0.312\textwidth]{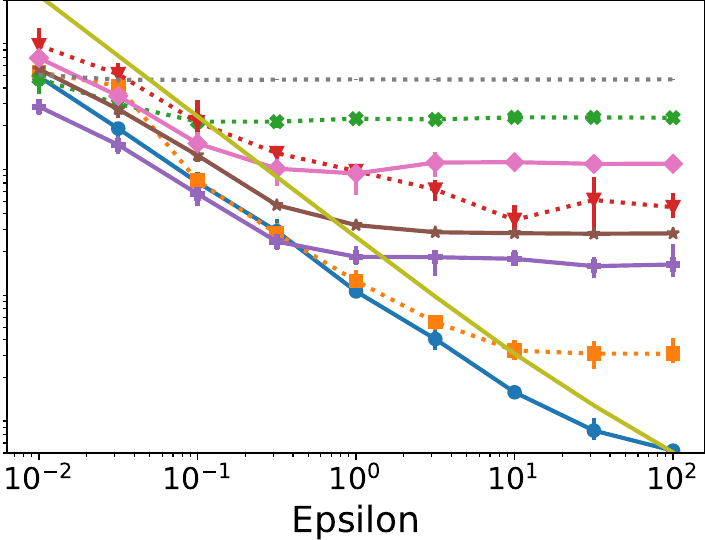}
\caption{\label{fig:main} Workload error (y-axis) vs Epsilon (x-axis) of competing mechanisms on the \general (left), \target (center), and \weighted (right) workloads for $\delta=10^{-9}$.}
\end{figure*}

\subsection{Experimental Results}

Experimental results are shown in \cref{fig:main}.  \revision{Results for the \titanic dataset are omitted due to space.}  Workload-aware mechanisms are shown by solid lines, while workload-agnostic mechanisms are shown with dotted lines.  \revision{Some points are missing from the plots, indicating a mechanism failed to complete in under the 24 hour time limit for that experimental setting.}  From these plots, we make the following observations:

\paragraph*{\textbf{\general Workload}}


\begin{enumerate}
\item \aim  consistently achieves competitive workload error, across all datasets and privacy regimes considered.  On average, across all six datasets and nine privacy parameters, \aim improved over \privmrf by a factor of $1.3\times$, \mst by a factor of $2.6\times$, \mwempgm by a factor of $1.5\times$, \privbayespgm by a factor $2.2\times$, \rap by a factor $5.6\times$, and \gem by a factor $2.0\times$.  In the most extreme cases, \aim improved over \privmrf by a factor $3.6\times$, \mst by a factor $118\times$, \mwempgm by a factor $16\times$, \privbayespgm by a factor $14.7\times$, \rap by a factor $47.1\times$, and \gem by a factor $11.7\times$.   
\item Prior to \aim, \privmrf was consistently the best performing mechanism, even outperforming all workload-aware mechanisms.  The \general workload is one we expect workload agnostic mechanisms like \privmrf to perform well on, so it is interesting, but not surprising that it outperforms workload-aware mechanisms in this setting.
\item Prior to \aim, the best \emph{workload-aware} mechanism varied for different datasets and privacy levels: \mwempgm was best in 72\% of settings, \gem was best in 28\% of settings \footnote{We compare against a variant of \gem that selects an entire marginal query in each round.  In results not shown, we also evaluated the variant of that measures a single counting query, and found that this variant performs significantly worse.}
, and \rap was best in 0\% of settings.  Including \aim, we observe that it is best in 76\% of settings, followed by \mwempgm in 18\% of settings and \gem in 5\% of settings.  Additionally, in the most interesting regime for practical deployment $(\epsilon \geq 1.0)$, \aim is best in 100\% of settings.
\end{enumerate}

\paragraph*{\textbf{\target Workload}}


\begin{enumerate}
\item All three high-level findings from the previous section are supported by these figures as well.
\item Somewhat surprisingly, \privmrf outperforms all workload-aware mechanisms prior to \aim on this workload.
This is an impressive accomplishment for \privmrf, and clearly highlights the suboptimality of existing workload-aware mechanisms like \mwempgm, \gem, and \rap.  Even though \privmrf is not workload-aware, it is clear from their paper that every detail of the mechanism was carefully thought out to make the mechanism work well in practice, which explains it's impressive performance.  While \aim did outperform \privmrf again, the relative performance did not increase by a meaningful margin --- offering a $1.4\times$ improvement on average and a $4.6\times$ improvement in the best case.
\end{enumerate}

\paragraph*{\textbf{\weighted Workload}}


\begin{enumerate}
\item All four high-level findings from the previous sections are generally supported by these figures as well, with the following interesting exception:
\item \privmrf did not score well on \salary, and while it was still generally the second best mechanism on the other datasets (again out-performing the workload-aware mechanisms in many cases), the improvement offered by \aim over \privmrf is much larger for this workload, averaging a $2\times$ improvement with up to a $5.7\times$ improvement in the best case.  We suspect for this setting, workload-awareness is essential to achieve strong performance.
\end{enumerate}

\revision{
\subsection{Ablations} \label{sec:ablation}
In this section, we systematically evaluate the components of \aim, by making modifications to the base mechanism and measuring their impact on workload error.
Specifically, the elements we study are enumerated in \cref{table:ablation}, and are labeled by B1, B2, and B3 for the basic elements of a good mechanism described in \cref{sec:basic}, A1, A2, and A3 for the new elements of \aim described in \cref{sec:aim}, and O1 for an additional relevent element.  For each element of \aim listed below, we run \aim with and without that element across the entire set of experimental configurations we considered in this work, i.e., 9 privacy budgets $ \times $ 6 datasets $\times$ 3 workloads $ \times $ 5 trials.  For each of the $162$ (privacy budget, dataset, workload) triples, we have $5$ measurements which we use to compute two things: (1) the ratio of average workload errors with and without the specified element, and (2) a p-value from a one sided t-test.  The former quantity provides a measure of \emph{practical significance}, while the latter quantity provides a measure of \emph{statistical significance}.

\cref{fig:ablation1} shows the distribution of error ratios for each element across experimental settings, visualized as a box-and-whisker plot;
ratios above $1$ indicate the element of \aim reduced error.  Aggregating the error ratios via geometric mean reveals the three basic elements improved error by a factor $1.18$ on average for Gaussian noise, $1.13$ for unbounded DP, and $1.08$ for a 10/90 budget split.  The new elements of \aim improved error by a factor of $1.03$ for initialization, $1.37$ for the new selection critera, and $1.48$ for adaptive rounds and budget split.  Finally, using \pgm in the generate step, rather than an alternative known as relaxed projection \cite{aydore2021differentially}, improved error by a factor of $2.36$ on average.  Among these elements, the improvement offered by using adaptive rounds + budget split, as well as \pgm, showed a clear depeendence on $\epsilon$, with improvements growing with increasing $\epsilon$.  The other elements showed no clear dependence on $\epsilon$.

Aggregating the $162$ p-values via Stouffer's Z-score method \cite{zaykin2011optimally}, we see that the combined p-value for every element tested ranges from $10^{-22}$ for A1 (initialization) all the way to $10^{-166}$ for A3 (adaptive rounds + budget split).  Thus, it is clear that all elements have a positive effect on the performance of \aim in a statistical sense.

In addition to the algorithmic elements of \aim we evaluate in this section, we conducted experiments varying the model capacity parameter of \aim in \cref{sec:capacity}.  Unsurprisingly, the utility of \aim increases with larger model capacities, at the cost of increased runtime.


\begin{figure*}[t!]
\vspace*{-5mm}
\centering
\subcaptionbox{\revision{Practical Significance \label{fig:ablation1}}}{\includegraphics[width=0.31\textwidth]{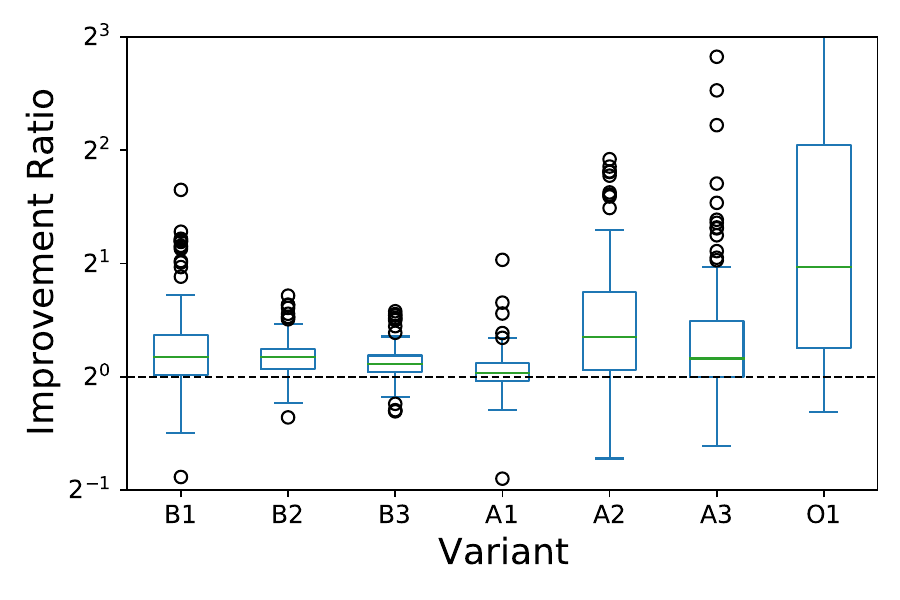}}
\subcaptionbox{\revision{Element codes and descriptions \label{table:ablation}}}{
\resizebox{0.36\textwidth}{!}{
\begin{tabular}{c|c|c}
\textbf{Variant} & \textbf{Element of \aim} & \textbf{Alternative} \\\hline
B1 & Gaussian Noise & Laplace Noise \\\hline
B2 & Unbounded DP & Bounded DP \\\hline
B3 & 10/90 budget split & 50/50 budget split \\\hline\hline
A1 & Independent initialization & Uniform initialization \\\hline
A2 & New selection criteria + & \mwempgm selection + \\
& candidate set & criteria + candidate set \\\hline
A3 & Adaptive rounds + & $d$ rounds + fixed \\
& budget split & budget per round \\\hline\hline
O1 & \pgm & Relaxed Projection \\\hline
\multicolumn{3}{c}{} \\
\multicolumn{3}{c}{} \\
\end{tabular}}}
\subcaptionbox{True Error vs. Error Bound \label{fig:uncertainty}}{\includegraphics[width=0.29\textwidth,height=0.22\textwidth]{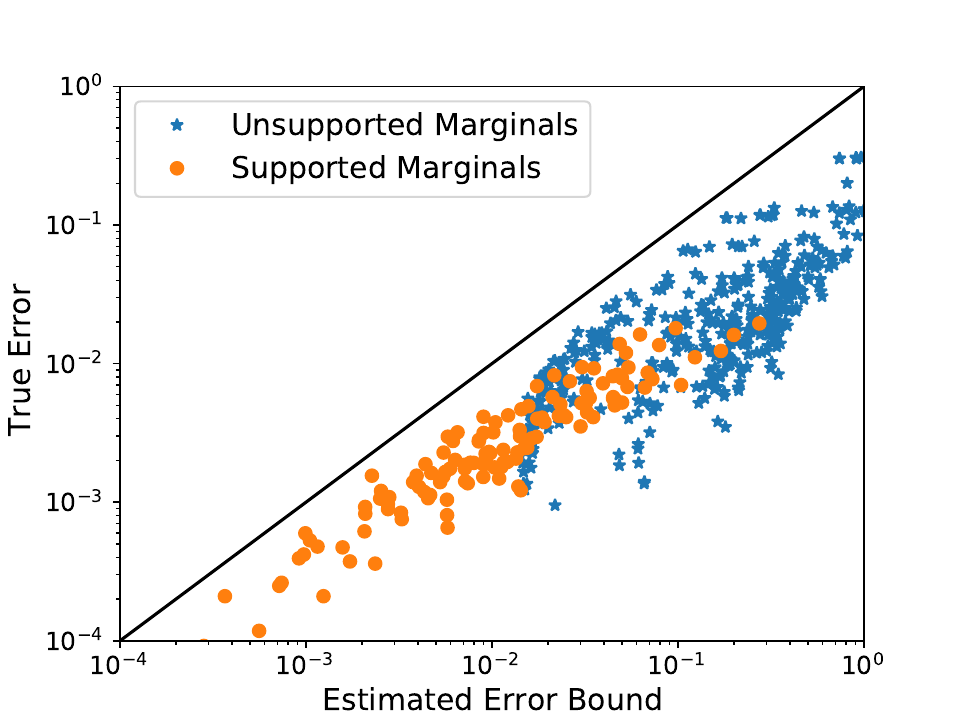}}
\vspace*{-3mm}
\caption{\label{fig:ablation} \revision{(a) Box plot of the ratio of errors with and without using an element of \aim across all experimental settings.  (b) Table describing elements of \aim removed and the alternatives used in this ablation study.}  (c) Accuracy of the uncertainty quantification estimates.}
\vspace*{-3mm}
\end{figure*}

}

\subsection{Uncertainty Quantification}

In this section, we demonstrate that our expressions for uncertainty quantification correctly bound the error, and evaluate how tight the bound is.  For this experiment, we ran \aim on the \fire dataset with the \general workload at $\epsilon=10$.  In \cref{fig:misc} (c), we plot the true error of \aim on each marginal in the workload against the error bound predicted by our expressions.  We set $\lambda=1.7$ in \cref{cor:supported1}, and $\lambda_1=2.7$, $\lambda_2=3.7$ in \cref{cor:unsupported1}, which provides 95\% confidence bounds.
Our main findings are listed below:

\begin{enumerate}
\item For all marginals in the (downward closure of the) workload, the error bound is always greater than true error.  This confirms the validity of the bound, and suggests they are safe to use in practice.  Note that even if some errors were above the bounds, that would not be inconsistent with our guarantee, as at a 95\% confidence level, the bound could fail to hold 5\% of the time.  The fact that it doesn't suggests there is some looseness in the bound.
\item The true errors and the error bounds vary considerably, ranging from $10^{-4}$ all the way up to and beyond $1$.  In general, the supported marginals have both lower errors, and lower error bounds than the unsupported marginals, which is not surprising.   The error bounds are also \emph{tighter} for the supported marginals.  The median ratio between error bound and observed error is $4.4$ for supported marginals and $8.3$ for unsupported marginals.  Intuitively, this makes sense because we know selected marginals should have higher error than non-selected marginals, but the error of the non-selected marginal can be far below that of the selected marginal (and hence the bound), which explains the larger gap between the actual error and our predicted bound.
\end{enumerate}



\section{Discussion and Limitations} \label{sec:open}

\revision{In this paper, we studied the problem of differentially private synthetic data generation, surveying the field and identifying strengths and weaknesses of prior work.  While much of the prior work is conceptually similar, details and specific design decisions differ from mechanism to mechanism, and these small differences can lead to large performance differences in practice.  In practical deployments of differential privacy, these details matter to obtain the best privacy-utility trade-off.  In this work, we propose \aim, a new mechanism where every detail is carefully thought out to maximize utility in practice.  These details allowed \aim to consistently and significantly outperform competitors in our empirical evaluation.  In addition, our uncertainty quantification guarantees enables analysts to understand which queries the synthetic data preserves well, and which it does not, which is important to know when performing downstream analyses on synthetic data.
While our work significantly improves over prior work, the problem of differentially private synthetic data remains far from solved, and there are a number of promising avenues for future work in this space.  We enumerate some of the limitations of \aim below, and identify potential future research directions.
}

\fancyPara{Handling More General Workloads}
In this work, we focused on weighted marginal query workloads.  Designing mechanisms that work for the more general class of linear queries (perhaps defined over the low-dimensional marginals) remains an important open problem.  While the prior work, \mwempgm, \rap, and \gem can handle workloads of this form, they achieve this by selecting a single counting query in each round, rather than a full marginal query, and thus there is likely significant room for improvement.  Beyond linear query workloads, other workloads of interest include more abstract objectives like machine learning efficacy and other non-linear query workloads.  These metrics have been used to evaluate the quality of workload-agnostic synthetic data mechanisms, but have not been provided as input to the mechanisms themselves.  

\fancyPara{Handling Mixed Data Types}
In this work, we assumed the input data was discrete, and each attribute had a finite domain with a reasonably small number of possible values.   Data with numerical attributes must be appropriately discretized before running \aim.  The quality of the discretization could have a significant impact on the quality of the generated synthetic data.  Designing mechanisms that appropriately handle mixed (categorical and numerical) data type is an important problem.  

\fancyPara{Utilizing Public Data}
A promising avenue for future research is to design synthetic data mechanisms that incorporate public data in a principled way.  There are many places in which public data can be naturally incorporated into \aim, and exploring these ideas is a promising way to boost the utility of \aim in real world settings where public data is available.  Early work on this problem includes \cite{liu2021leveraging, mckenna2021winning, liu2021iterative}, but \revision{it certainly warrants additional research.}

\revision{
\fancyPara{Uncertainty Quantification Guarantees}
In this paper, we initiated the study of formal and well-calibrated guarantees about the error of the synthetic data on different marginal queries.  These error estimates can be used to determine to what degree the synthetic data should be trusted. However, our guarantees only pertain to the $L_1$ error of each marginal, and we provide no guarantees on the error in each individual cell of the marginals. These finer-grained guarantees could be useful in some applications, and is an interesting technical challenge for future research.

\fancyPara{Small Workloads}
In our experimental evaluation, as in much of the current literature, we focus on workloads with a large number of marginal queries where privacy and scalability constraints prevent measuring them all.  For smaller workloads, simpler techniques like \gaussianpgm may achieve better performance than \aim, since it does not have to devote budget to the select step.

\fancyPara{High-cardinality attributes}
The scalability of \aim, and more generally any method that uses \pgm depends on the domain of attributes in the dataset.  The datasets we considered in \cref{sec:experiments} were preprocessed to have reasonable domain sizes (most attributes had a domain size $\leq 50$).  Datasets with high-cardinality attributes often have sparse marginals that may deserve special treatment not covered in this paper.
}


\begin{acks}
This work was supported by the National Science Foundation under grants IIS-1749854 and CNS-1954814, and by Oracle Labs, part of Oracle America, through a gift to the University of Massachusetts Amherst in support of academic research.
\end{acks}

\bibliographystyle{ACM-Reference-Format}
\bibliography{refs}
\balance


\clearpage
\appendix
\newpage
\section{Data Preprocessing} \label{sec:preprocessing}

We apply consistent preprocessing to all datasets in our empirical evaluation.  There are three steps to our preprocessing procedure, described below:
\paragraph*{\textbf{Attribute selection}}
For each dataset, we identify a set of attributes to keep.  For the \adult, \salary, \nltcs, and \titanic datasets, we keep all attributes from the original data source.  For the \fire dataset, we drop the 15 attributes relating to incident times, since after discretization, they contain redundant information.  The \msnbc dataset is a streaming dataset, where each row has a different number of entries.  We keep only the first 16 entries for each row.  

\paragraph*{\textbf{Domain identification}}
Usually we expect the domain to be supplied separately from the data file.  For example, the IPUMS website contains comprehensive documentation about U.S. Census data products.  However, for the datasets we used, no such domain file was available.  Thus, we ``cheat'' and look at the active domain to automatically derive a domain file from the dataset.  For each attribute, we identify if it is categorical or numerical.  For each categorical attribute, we list the set of observed values (including null) for that attribute, which we treat as the set of possible values for that attribute.  For each numerical attribute, we record the minimum and maximum observed value for that attribute.  

\paragraph*{\textbf{Discretization}}
We discretize each numerical attribute into $32$ equal-width bins, using the min/max values from the domain file.  This turns each numerical attribute into a categorical attribute, satisfying our assumption.  

\section{Uncertainty Quantification Proofs} \label{sec:proofs}

\subsection{The Easy Case: Supported Marginals}

\estimator*

\begin{proof}
For each $ r_i \supseteq r $, we observe $\tilde{y}_i \sim M_{r_i}(D) + \mathcal{N}(0, \sigma_i^2 \mathbb{I}) $.  We can use this noisy marginal to obtain an unbiased estimate $M_{r}(D)$ by marginalizing out attributes in the set $ r_i \setminus r$.  This requires summing up $n_{r_i} / n_r$ cells, so the variance in each cell becomes $n_{r_i} \sigma_i^2 / n_r$.  Moreover, the noise is still normally distributed, since the sum of independent normal random variables is normal.  We thus have such an estimate for each $i$ satisfying $r_i \supseteq r$, and we can combine these independent estimates using \emph{inverse variance weighting} \cite{hartung2008statistical}, resulting in an unbiased estimator with the stated variance.  For the same reason as before, the noise is still normally distributed.
\end{proof}

\supported*

\begin{proof}
Noting that $M_r(D) - \bar{y} \sim \mathcal{N}(0, \sigma^2 \mathbb{I})$, the statement is a direct consequence of \cref{thm:halfnorm}, below.
\end{proof}

\begin{theorem} \label{thm:halfnorm}
Let $x \sim N(0, \sigma^2)^n$, then:
$$ \mathbb{E}[\norm{x}_1] = \sqrt{2 / \pi} n \sigma $$
and
$$ \Pr[\norm{x}_1 \geq \sqrt{2 \log{2}} \sigma n + c \sigma \sqrt{2n}] \leq \exp{(-c^2)} $$
\end{theorem}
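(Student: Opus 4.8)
The two claims are essentially independent, so I would dispatch them in turn. For the expectation, write $\norm{x}_1 = \sum_{i=1}^n |x_i|$. Each $|x_i|$ is a half-normal random variable, and a standard computation (the first absolute moment of a Gaussian) gives $\mathbb{E}[|x_i|] = \sigma\sqrt{2/\pi}$; linearity of expectation then yields $\mathbb{E}[\norm{x}_1] = n\sigma\sqrt{2/\pi}$. Nothing subtle here.

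For the tail bound I would run a Chernoff/MGF argument. Fix $s \geq 0$. By independence, $\mathbb{E}[e^{s\norm{x}_1}] = \prod_{i=1}^n \mathbb{E}[e^{s|x_i|}]$. To bound a single factor, use the pointwise inequality $e^{s|t|} \leq e^{st} + e^{-st}$, valid for all real $t$, which gives $\mathbb{E}[e^{s|x_i|}] \leq \mathbb{E}[e^{s x_i}] + \mathbb{E}[e^{-s x_i}] = 2 e^{s^2\sigma^2/2}$ by the Gaussian MGF. Hence $\mathbb{E}[e^{s\norm{x}_1}] \leq 2^n e^{n s^2 \sigma^2 / 2}$, and Markov's inequality applied to $e^{s\norm{x}_1}$ gives, for any threshold $u \geq 0$,
$$ \Pr[\norm{x}_1 \geq u] \leq \exp\!\Big( n \log 2 + \tfrac{n s^2 \sigma^2}{2} - s u \Big). $$
Optimizing the quadratic in $s$ (minimizer $s = u/(n\sigma^2)$) collapses this to $\Pr[\norm{x}_1 \geq u] \leq \exp\!\big( n \log 2 - u^2 / (2 n \sigma^2) \big)$.

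It then remains to substitute $u = \sqrt{2 \log 2}\,\sigma n + c \sigma \sqrt{2 n}$ and check the exponent is at most $-c^2$. Expanding $u^2$, the cross term is nonnegative, so $u^2 \geq 2 n^2 \sigma^2 \log 2 + 2 n c^2 \sigma^2$ — this is exactly the super-additivity $(\sqrt{A}+\sqrt{B})^2 \geq A + B$ with $A = 2 n^2 \sigma^2 \log 2$ and $B = 2 n c^2 \sigma^2$. Dividing by $2 n \sigma^2$ gives $u^2/(2n\sigma^2) \geq n \log 2 + c^2$, so $n \log 2 - u^2/(2n\sigma^2) \leq -c^2$, and the bound $\exp(-c^2)$ follows.

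\textbf{Expected main obstacle.} There is no genuine difficulty here; the only two non-mechanical steps are the pointwise bound $e^{s|t|} \leq e^{st} + e^{-st}$ and the inequality $(\sqrt{A}+\sqrt{B})^2 \geq A+B$. The one thing to get right conceptually is the bookkeeping around the $2^n$ factor in the MGF estimate: it is precisely this factor that forces the linear-in-$n$ term in the threshold to be $\sqrt{2\log 2}\,\sigma n$ rather than the true mean $\sqrt{2/\pi}\,\sigma n$, and the gap between the constants $\sqrt{2\log 2}$ and $\sqrt{2/\pi}$ is exactly what the argument spends to absorb $n\log 2$.
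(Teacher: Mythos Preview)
Your proposal is correct and follows essentially the same route as the paper: half-normal mean plus linearity for the expectation, then a Chernoff bound on $\norm{x}_1$ using the per-coordinate MGF estimate $\mathbb{E}[e^{s|x_i|}] \leq 2 e^{s^2\sigma^2/2}$, optimized over $s$ and evaluated at the stated threshold. The only cosmetic difference is that the paper first computes the exact half-normal MGF and then bounds the $\Phi$-factor by $2$, whereas your pointwise inequality $e^{s|t|} \leq e^{st}+e^{-st}$ reaches the same $2e^{s^2\sigma^2/2}$ bound in one line; from that point the two arguments coincide (same optimizer $s=u/(n\sigma^2)$, same substitution, same implicit use of $c\geq 0$ to drop the nonnegative cross term).
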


\begin{proof}

First observe that $ | x_i | $ is a sample from a \emph{half-normal} distribution.  Thus, $ \mathbb{E}[x_i] = \sqrt{2/\pi} \sigma $.  From the linearity of expectation, we obtain $ \mathbb{E}[\norm{x}_1] = \sqrt{2/\pi} \sigma n $, as desired.
For the second statement, we begin by deriving the moment generating function of the random variable $ | x_i | $.  By definition, we have:
\begin{align*}
\mathbb{E}[\exp{(t \cdot | x_i |)}] &= \int_{-\infty}^{\infty} \phi(z) \exp{(t \cdot | z |)} dz \\
&= 2 \int_0^{\infty} \phi(z) \exp{(t \cdot z)} dz \\
&= 2 \int_0^{\infty} \frac{1}{\sigma \sqrt{2 \pi}} \exp{\Big(-\frac{z^2}{2 \sigma^2} \Big)} \exp{(t \cdot z)} dz \\
&= \frac{1}{\sigma} \sqrt{\frac{2}{\pi}} \int_0^{\infty} \exp{\Big(-\frac{z^2}{2 \sigma^2} + t \cdot z \Big)} dz \\
&= \exp{ \Big(\frac{\sigma^2 t^2}{2} \Big)} \Big(\Phi\Big(\frac{t \sigma}{\sqrt{2}}\Big) + 1 \Big) \\
\end{align*}
Moreover, since $\norm{x}_1 = \sum_{i=1}^n | x_i | $ is a sum of i.i.d random variables, the moment generating function of $\norm{x}_1$ is:
$$ \mathbb{E}[\exp{(t \cdot \norm{x}_1)}] = \exp{ \Big(\frac{\sigma^2 t^2}{2} \Big)}^n \Big(\Phi\Big(\frac{t \sigma}{\sqrt{2}}\Big) + 1 \Big)^n $$
From the Chernoff bound, we have
\begin{align*}
\Pr[\norm{x}_1 \geq a] &\leq \min_{t \geq 0} \frac{\mathbb{E}[\exp{(t \cdot \norm{x}_1 )}]}{\exp{(t a)}} \\
&= \min_{t \geq 0} \exp{ \Big(\frac{n \sigma^2 t^2}{2} - t a \Big)} \Big(\Phi\Big(\frac{t \sigma}{\sqrt{2}}\Big) + 1 \Big)^n \\
&\leq \min_{t \geq 0} 2^n \exp{ \Big(\frac{n \sigma^2 t^2}{2} - t a \Big)} \\
&\leq 2^n \exp{\Big( \frac{n \sigma^2 (a / n \sigma^2)^2}{2} - (a / n \sigma^2) a}\Big) \\
&= 2^n \exp{\Big( \frac{a^2}{2 n \sigma^2} - \frac{a^2}{n \sigma^2}\Big)} \\
&= 2^n \exp{\Big(- \frac{a^2}{2 n \sigma^2} \Big)} \\
&= \exp{\Big(- \frac{a^2}{2 n \sigma^2} + n \log{2} \Big)}
\end{align*}
With some further manipulation of the bound, we obtain:
\begin{align*}
&\Pr[\norm{x}_1 \geq d \sigma \sqrt{2 n} ] \leq \exp{\Big(-d^2 + n \log{2} \Big)} \tag{$a=d \sigma \sqrt{2n}$} \\
&\Pr[\norm{x}_1 \geq (c + \sqrt{n \log{2}}) \sigma \sqrt{2 n}] \leq \exp{(-c^2)} \tag{$d = c + \sqrt{n \log{2}}$}\\
&\Pr[\norm{x}_1 \geq \sqrt{2 \log{2}} \sigma n + c \sigma \sqrt{2 n}] \leq \exp{(-c^2)} \\
\end{align*}
\end{proof}

\subsection{The Hard Case: Unsupported Marginals}

\unsupported*
\begin{proof}
%
%



By the guarantees of the exponential mechanism, we know that, with probability at most $ e^{-\lambda_2} $, for all $r \in C_t$ we have:
$$ q_{r_t} \leq q_r - \frac{2 \Delta_t}{\epsilon_t} (\log{(| C_t |)} + \lambda_2) $$
Now define $E_r = \norm{M_r(D) - M_r(p_{t-1})}_1$.  Plugging in $q_r = w_r (E_r - \sqrt{2/\pi} \sigma_t n_r)$ and rearranging gives:

$$ E_r \geq \frac{w_{r_t} (E_{r_t} - \sqrt{2/\pi} \sigma_t n_{r_t}) + \frac{2\Delta_t}{\epsilon_t} (\log{(|C_t|) + \lambda_2)}}{w_r} + \sqrt{2/\pi} \sigma_t n_r $$

From \cref{thm:hdline2}, with probability at most $e^{-\lambda_1^2/2}$, we have:
$$ \norm{M_{r_t}(p_{t-1}) - y_t}_1 + \lambda_1 \sigma_t \sqrt{n_{r_t}} \leq E_{r_t} $$
Combining these two facts via the union bound, along with some algebraic manipulation, yields the stated result.  
\end{proof}


\begin{theorem} \label{thm:hdline2}
Let $a, b \in \mathbb{R}^k$ and let $c = b + z$ where $z \sim \mathcal{N}(0, \sigma^2)^n$.
$$ \Pr[\norm{a-c}_1 \leq \norm{a-b}_1 - \lambda \sigma \sqrt{n}] \leq \exp{\Big(-\frac{1}{2} \lambda^2\Big)} $$
\end{theorem}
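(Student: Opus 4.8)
Abbreviate $u = a - b$, so that $\norm{a-c}_1 = \sum_i |u_i - z_i|$ and $\norm{a-b}_1 = \sum_i |u_i|$ (here the two vectors are understood to have the same number $n$ of coordinates). The first thing to notice is that the obvious reverse triangle inequality $\norm{a-c}_1 \ge \norm{a-b}_1 - \norm{z}_1$ is hopelessly lossy: by \cref{thm:halfnorm}, $\mathbb{E}\norm{z}_1 = \sqrt{2/\pi}\,\sigma n$, which is of order $\sqrt{n}$ times larger than the slack $\lambda\sigma\sqrt{n}$ we are allowed. Instead I would linearize coordinatewise. Convexity of $t \mapsto |t|$ gives, for a subgradient $s_i \in \partial|u_i| \subseteq [-1,1]$ --- concretely $s_i = \operatorname{sign}(u_i)$, with $s_i = 0$ when $u_i = 0$ --- the inequality $|u_i - z_i| \ge |u_i| - s_i z_i$. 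Summing over $i$ yields $\norm{a-c}_1 \ge \norm{a-b}_1 - \langle s, z\rangle$.

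Therefore the event $\{\norm{a-c}_1 \le \norm{a-b}_1 - \lambda\sigma\sqrt{n}\}$ is contained in $\{\langle s, z\rangle \ge \lambda\sigma\sqrt{n}\}$, and it suffices to bound the probability of the latter. Since $s$ is a fixed vector and $z \sim \mathcal{N}(0,\sigma^2)^n$, the quantity $\langle s, z\rangle$ is a centered Gaussian with variance $\sigma^2 \norm{s}_2^2 = \sigma^2 \cdot \#\{i : u_i \ne 0\} \le n\sigma^2$. A standard Gaussian upper-tail bound (equivalently, the one-line Chernoff computation used in the proof of \cref{thm:halfnorm}) then gives, for $\lambda \ge 0$,
\[
\Pr\big[\langle s, z\rangle \ge \lambda\sigma\sqrt{n}\big] \le \exp\!\Big(-\frac{(\lambda\sigma\sqrt{n})^2}{2 n\sigma^2}\Big) = \exp\!\Big(-\tfrac{1}{2}\lambda^2\Big),
\]
which is exactly the claimed bound.

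The only genuinely nonobvious step is the first one --- recognizing that one should bound $\norm{a-c}_1$ from below not by subtracting $\norm{z}_1$ but by subtracting the single linear functional $\langle s, z\rangle$ pointing in the subgradient (``outward'') direction; everything after that is routine and dimension-free. An alternative that avoids the explicit subgradient is Gaussian Lipschitz concentration: the map $z \mapsto \norm{a-b-z}_1$ is $\sqrt{n}$-Lipschitz in the Euclidean norm (since $\norm{z-z'}_1 \le \sqrt{n}\,\norm{z-z'}_2$ by Cauchy--Schwarz) and has mean at least $\norm{a-b}_1$ by Jensen applied coordinatewise, so the Tsirelson--Ibragimov--Sudakov inequality yields the same $\exp(-\lambda^2/2)$ lower-tail bound. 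I would present the subgradient version, since it is self-contained and matches the flavor of the neighboring proofs in this section.
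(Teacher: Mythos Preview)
Your proof is correct and takes a genuinely different, more elementary route than the paper. The paper proceeds by writing down the exact moment generating function of each folded-normal summand $|a_i-b_i-z_i|$, then invokes a separate technical lemma (a Mills-ratio monotonicity argument, \cref{conjecture}) to establish the one-sided MGF bound $M_i(-t)\le \exp(\sigma^2 t^2/2 - |a_i-b_i|\,t)$, and finally runs Chernoff. Your subgradient linearization $|u_i-z_i|\ge |u_i|-s_i z_i$ collapses all of that into a single step: it directly implies the event inclusion and reduces the problem to one scalar Gaussian tail, bypassing both the folded-normal MGF and the Mills-ratio lemma entirely. In fact, your inequality, exponentiated, \emph{is} the MGF bound the paper works hard to prove: $\mathbb{E}[e^{-t|u_i-z_i|}]\le e^{-t|u_i|}\,\mathbb{E}[e^{t s_i z_i}] = \exp(\sigma^2 t^2/2 - |u_i|\,t)$. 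So the two arguments arrive at the same Chernoff input, but yours gets there in one line instead of a page. The paper's approach does give an exact MGF expression along the way, which could in principle be sharpened further, but for the stated bound your argument is strictly simpler and equally tight.
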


\begin{proof}

First note that $|a_i - c_i|=|a_i - b_i - z_i|$, which is distributed according to a folded normal distribution with mean $|a_i - b_i|$.  It is well known \cite{tsagris2014folded} that the moment generating function for this random variable is $M_i(t)$, where:
\begin{align*}
M_i(t) &= \exp{\Big(\frac{1}{2} \sigma^2 t^2 + | a_i - b_i | t \Big)} \Phi(|a_i - b_i|/\sigma + \sigma t) \\ 
&+ \exp{ \Big(\frac{1}{2} \sigma^2 t^2 - | a_i - b_i | t \Big)} \Phi(-|a_i - b_i|/\sigma + \sigma t). \\
\end{align*}
Moreover, the moment generating function of $\norm{a-c}_1$ is $M(t) = \prod_i M_i(t)$.  We will begin by focusing our attention on bounding $M_i(-t)$.  For simplicity, let $ \mu = |a_i - b_i|$.  We have:
\begin{align*}
M_i(-t) =& \exp{\Big( \frac{\sigma^2 t^2}{2} - \mu t\Big)} \Phi(\mu/\sigma - \sigma t) \\
&+ \exp{\Big(\frac{\sigma^2 t^2}{2} + \mu t \Big)} \Phi(-\mu/\sigma - \sigma t) \\
=& \exp{\Big( \frac{\sigma^2 t^2}{2} - \mu t\Big)} (1 - \Phi(-\mu/\sigma + \sigma t)) \\
&+ \exp{\Big(\frac{\sigma^2 t^2}{2} + \mu t \Big)} \Phi(-\mu/\sigma - \sigma t) \\
=& \exp{\Big( \frac{\sigma^2 t^2}{2} - \mu t\Big)} \\
&- \exp{\Big(\frac{\sigma^2 t^2}{2} - \mu t \Big)} \Phi(-\mu/\sigma + \sigma t) \\
&+ \exp{\Big(\frac{\sigma^2 t^2}{2} + \mu t \Big)} \Phi(-\mu/\sigma - \sigma t) \\
\leq& \exp{\Big( \frac{\sigma^2 t^2}{2} - \mu t\Big)} \tag{\cref{conjecture} below; $a=\sigma t, b=\mu/\sigma$}\\
\end{align*}
We are now ready to plug this result into the Chernoff bound, which states:
\begin{align*}
\Pr[\norm{a-c}_1 \leq r] &\leq \min_{t \geq 0} \exp{(t \cdot r)} M(-t) \\
&\leq \min_{t \geq 0} \exp{(t \cdot r)} \prod_i \exp{\Big(\frac{\sigma^2 t^2}{2} - | a_i - b_i | t \Big)} \\
&= \min_{t \geq 0} \exp{(t \cdot r + \frac{n \sigma^2 t^2}{2} - \norm{a-b}_1 t)} \\
\end{align*}
Setting $r = \norm{a-b}_1 - \lambda \sigma \sqrt{n}$ gives the desired result
\begin{align*}
\Pr&[\norm{a-c}_1 \leq \norm{a-b}_1 - \lambda \sigma \sqrt{n}] \\
&\leq \min_{t \geq 0} \exp{(t \cdot (\norm{a-b}_1 - \lambda \sigma \sqrt{n}) + \frac{n \sigma^2 t^2}{2} - \norm{a-b}_1 t)} \\
&= \min_{t \geq 0} \exp{\Big(-t \lambda \sigma \sqrt{n} + \frac{n \sigma^2 t^2}{2}\Big)} \\
&\leq \exp{(-\lambda^2 / 2)} \tag{set $t  = \lambda / \sigma \sqrt{n}$}\\
\end{align*}
\end{proof}

\begin{lemma} \label{conjecture}
Let $a,b \geq 0$, and let $\Phi$ denote the CDF of the standard normal distribution.  Then,
$$ \exp{\Big(\frac{1}{2} a^2 + ab \Big)} \Phi(-a-b) \leq \exp{\Big(\frac{1}{2}a^2 - ab \Big)} \Phi(a-b) $$
\end{lemma}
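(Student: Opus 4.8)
The plan is to reduce the claim to the monotonicity of a single one-variable function. Dividing both sides of the asserted inequality by $\exp(\tfrac12 a^2)$ and then multiplying through by $e^{ab}$, the statement is equivalent to $e^{2ab}\,\Phi(-a-b) \le \Phi(a-b)$. I would then substitute $u = a-b$ and $v = -(a+b)$; since $a,b\ge 0$ we have $v \le u$, and a one-line computation gives $v^2 - u^2 = (a+b)^2-(a-b)^2 = 4ab$, so $e^{2ab} = e^{(v^2-u^2)/2}$. Multiplying the desired inequality by $e^{u^2/2}$, it becomes $e^{v^2/2}\Phi(v) \le e^{u^2/2}\Phi(u)$, i.e.\ $\psi(v)\le\psi(u)$ for the function $\psi(x) := e^{x^2/2}\,\Phi(x)$. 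Because $v\le u$, it suffices to show that $\psi$ is nondecreasing on $\mathbb{R}$.

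To prove monotonicity of $\psi$, I would use the integral representation obtained by writing $\Phi(x) = \tfrac{1}{\sqrt{2\pi}}\int_{-\infty}^x e^{-s^2/2}\,ds$, pulling the factor $e^{x^2/2}$ inside the integral, and changing variables $s = x-w$:
$$\psi(x) = \frac{1}{\sqrt{2\pi}}\int_0^\infty \exp\!\Big(xw - \tfrac{w^2}{2}\Big)\,dw.$$
For every fixed $w\ge 0$ the integrand is nondecreasing in $x$, so $\psi$ is nondecreasing (in fact strictly increasing), which completes the argument. An alternative is to differentiate, $\psi'(x) = e^{x^2/2}\big(x\Phi(x)+\phi(x)\big)$ with $\phi$ the standard normal density, and to check that $\eta(x) := x\Phi(x)+\phi(x)\ge 0$ by observing $\eta'(x) = \Phi(x) > 0$ together with $\lim_{x\to-\infty}\eta(x) = 0$; I would prefer the integral representation since it avoids this asymptotic bookkeeping.

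I do not anticipate a real obstacle. The only nonroutine moves are (i) spotting the substitution $(u,v) = (a-b,\,-(a+b))$, which converts the mixed exponential factor $e^{2ab}$ precisely into the difference of Gaussian weights $e^{(v^2-u^2)/2}$, and (ii) recognizing that $x\mapsto e^{x^2/2}\Phi(x)$ is monotone — a standard fact closely tied to monotonicity of the Mills ratio, made transparent by the integral representation above. Everything else is elementary algebra, so the proof is short.
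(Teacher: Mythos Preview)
Your argument is correct and essentially the same as the paper's: both reduce the inequality to the monotonicity of $x\mapsto \Phi(x)/\phi(x)$ (equivalently, your $\psi(x)=e^{x^2/2}\Phi(x)$, which is $\sqrt{2\pi}$ times the Mills ratio), evaluated at $-a-b$ and $a-b$. The only cosmetic difference is in how that monotonicity is established --- the paper invokes the probabilistic identity $\mathbb{E}[X\mid X<\alpha]=-\phi(\alpha)/\Phi(\alpha)$ and observes that this conditional mean is increasing in $\alpha$, whereas your integral representation gives a cleaner, fully self-contained proof.
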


\begin{proof}
First observe that:
\begin{align*}
\exp{\Big(\frac{1}{2} a^2 + ab \Big)} \Phi(-a-b)  &= \exp{\Big(-\frac{1}{2} b^2 \Big)} \frac{\Phi(-a-b)}{\phi(-a-b)} \\
\exp{\Big(\frac{1}{2} a^2 - ab \Big)} \Phi(a-b)  &= \exp{\Big(-\frac{1}{2} b^2 \Big)} \frac{\Phi(a-b)}{\phi(a-b)} \\
\end{align*}
Since $a,b \geq 0$, we know that $-a-b \leq a-b$.  We will now argue that the function $\frac{\Phi(\alpha)}{\phi(\alpha)}$ is monotonically increasing in $\alpha$, which suffices to prove the desired claim.  To prove this, we will observe that this is this quantity is known as the \emph{Mills ratio} \cite{greene2003econometric} for the normal distribution.  We know that the Mills ratio is connected to a particular expectation; specifically, if $X \sim \mathcal{N}(0,1)$, then
$$ \mathbb{E}[X \mid X < \alpha] = -\frac{\phi(\alpha)}{\Phi(\alpha)} $$
Using this interpretation, it is clear that the LHS (and hence the RHS) is monotonically increasing in $\alpha$.  Since $-\frac{\phi(\alpha)}{\Phi(\alpha)}$ is monotonically increasing, so is $\frac{\Phi(\alpha)}{\phi(\alpha)}$.  
\end{proof}

\section{Interpretable Error Rate and Subsampling Mechanism}

\begin{figure*}[t!]
\centering
\subcaptionbox{\vspace{1em} General}{\includegraphics[width=0.335\textwidth]{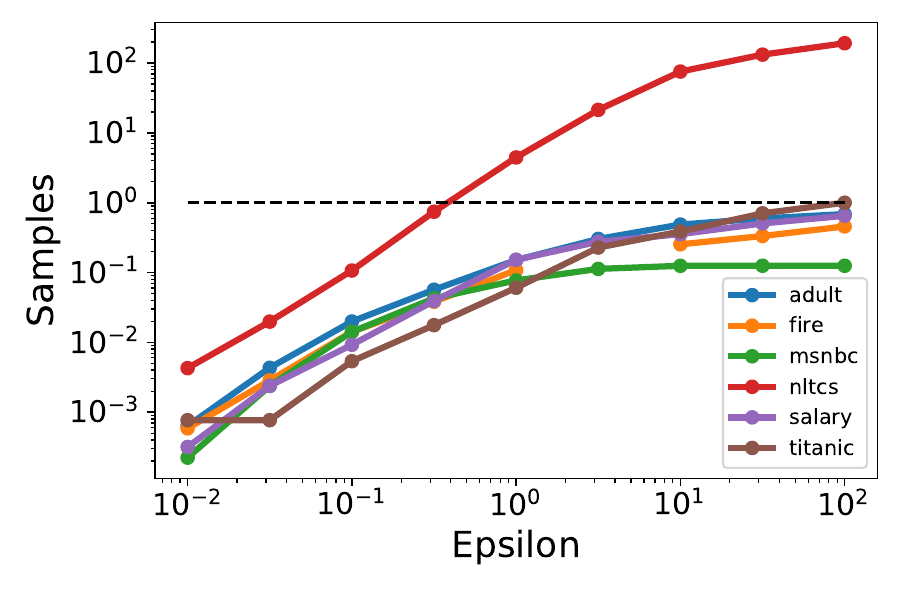}}
\subcaptionbox{Target}{\includegraphics[width=0.318\textwidth]{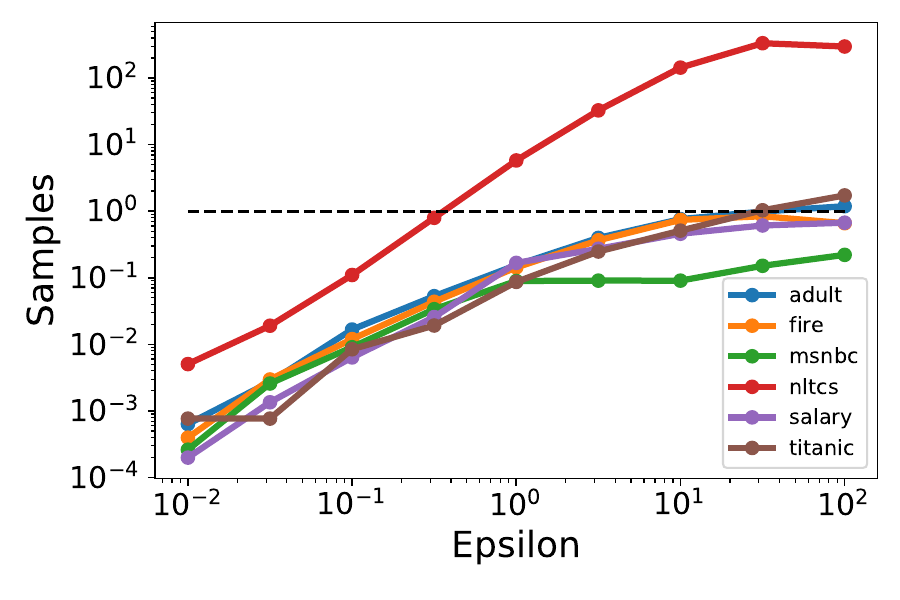}}
\subcaptionbox{Weighted}{\includegraphics[width=0.318\textwidth]{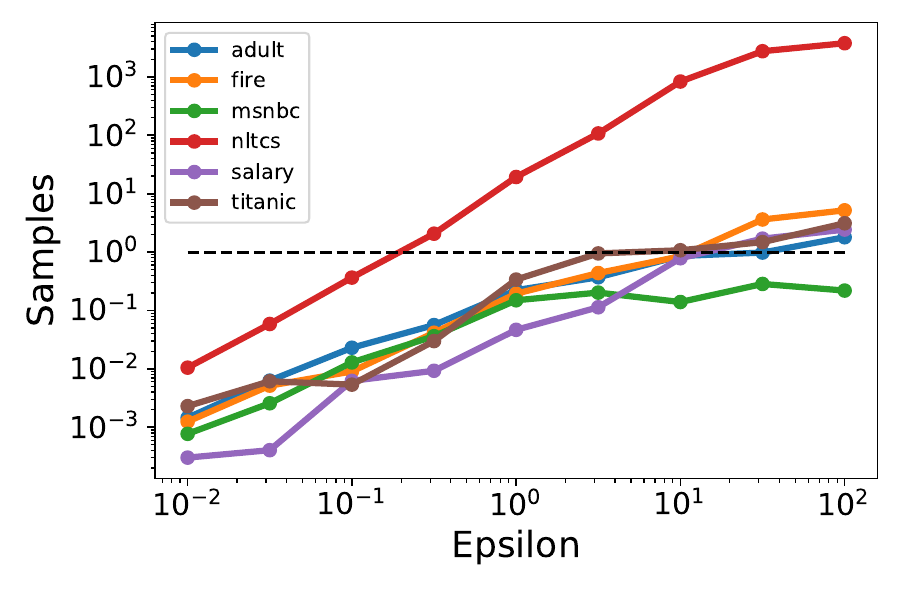}}
\caption{\label{fig:samples} Performance of \aim as measured by the number of samples needed to match the achieved workload error.}
\end{figure*}

In \cref{sec:experiments}, we saw that \aim offers the best error relative to existing synthetic data mechanisms, although it is not obvious whether a given $L_1$ error should be considered ``good''.  This is necessary for setting the privacy parameters to strike the right privacy/utility tradeoff.  We can bring more clarity to this problem by comparing \aim to a (non-private) baseline that simply resamples $K$ records from the dataset.  Then, if \aim achieves the same error as resampling $K = \frac{N}{2}$ records, this provides a clear interpretation: that the price of privacy is losing about half the data.  Due to the simplicity of this baseline, we can compute the expected workload error in closed form, without actually running the mechanism.  We provide details of these calculations in the next section.  

\cref{fig:samples} plots the performance of \aim on each dataset, epsilon, and workload considered, measured using the \emph{fraction} of samples needed for the subsampling mechanism to match the performance of \aim.  These plots reveal that at $\epsilon=10$, the median subsampling fraction is about $0.37$ for the \textsc{general} workload, $0.62$ for the \textsc{target} workload, and $0.85$ for the \textsc{weighted} workload.  At $\epsilon=1$, these numbers are $0.13$, $0.15$, and $0.21$, respectively. The results are comparable across five out of six datasets, with \textsc{nltcs} being a clear outlier.  For that dataset, a subsampling fraction of 1.0 was reached by $\epsilon = 0.31$ for all workload.  This could be an indication of overfitting to the data; a possible reason for this behavior is that the domain size of the \textsc{nltcs} data is small compared to the number of records.  \textsc{mnsbc} is also an outlier to a lesser extent, with worse performance than the other datasets for larger $\epsilon$.  A possible reason for this behavior is that \textsc{msnbc} has the most data points, so subsampling with the same fraction of points has much lower error.  \aim may not be able to match that low error due to the computational constraints imposed on the model size, combined with the fact that this dataset has a large domain.

\subsection{Mathematical Details of Subsampling}
We begin by analyzing the expected workload error of the (non-private) mechanism that randomly samples $K$ items with replacement from $D$.  Then, we will connect that to the error of \aim, and determine the value of $K$ where the error rates match.  \cref{thm:sampling} gives a closed form expression for the expected $L_1$ error on a single marginal as a function of the number of sampled records.  

\begin{theorem} \label{thm:sampling}
Let $\hat{D}$ be the dataset obtained by sampling $K$ items with replacement from $D$.  Further, let $\vec{\mu} = \frac{1}{N} M_r(D)$ and $\vec{s} = \ceil{K \vec{\mu}}$.
\begin{align*}
\mathbb{E} &\Big[\norm{ \frac{1}{N} M_r(D) - \frac{1}{K} M_r(\hat{D})} \Big] =\\
& \frac{2}{K} \sum_{x \in \dom_r} s(x) \binom{K}{s(x)} \mu(x)^{s(x)} (1-\mu(x))^{K-s(x)+1} \\
\end{align*}
\end{theorem}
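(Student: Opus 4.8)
The plan is to reduce the claim to the classical mean-absolute-deviation formula for a binomial random variable (De Moivre's formula) applied cell by cell. First I would note that $\vec{\mu} = \frac{1}{N} M_r(D)$ is a fixed (non-random) vector, while $M_r(\hat{D})$ is random: for each cell $x \in \Omega_r$, the count $M_r(\hat{D})[x]$ is the number of the $K$ independent draws that land in cell $x$, hence marginally $B_x \sim \mathrm{Bin}(K, \mu(x))$ with $\mu(x) = M_r(D)[x]/N$. The per-cell counts are not independent across cells, but we never need that: the $L_1$ norm decomposes as a sum over cells, so by linearity of expectation
\[
\mathbb{E}\Big[\norm{\tfrac{1}{N} M_r(D) - \tfrac{1}{K} M_r(\hat{D})}\Big] = \sum_{x \in \Omega_r} \mathbb{E}\big|\mu(x) - \tfrac{1}{K} B_x\big| = \frac{1}{K}\sum_{x \in \Omega_r} \mathbb{E}\big|B_x - K\mu(x)\big|.
\]
Thus the whole statement follows once we evaluate $\mathbb{E}|B - np|$ for $B \sim \mathrm{Bin}(n,p)$.

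Next I would establish the identity $\mathbb{E}|B - np| = 2 m \binom{n}{m} p^m (1-p)^{n-m+1}$ with $m = \lceil np \rceil$. Write $q = 1-p$. Splitting $\sum_k |k - np| \Pr[B=k]$ at $k = m$ and using $\sum_k (k-np)\Pr[B=k] = 0$ gives $\mathbb{E}|B-np| = 2\sum_{k \geq m} (k - np)\Pr[B=k]$. The standard identity $k\binom{n}{k}p^k q^{n-k} = np\binom{n-1}{k-1}p^{k-1}q^{n-k}$ rewrites $\sum_{k \geq m} k\,\Pr[B=k] = np\,\Pr[B' \geq m-1]$ for $B' \sim \mathrm{Bin}(n-1,p)$, while $\sum_{k \geq m} np\,\Pr[B=k] = np\,\Pr[B \geq m]$. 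Using $\Pr[B \geq m] = p\,\Pr[B' \geq m-1] + q\,\Pr[B' \geq m]$ collapses the difference to $np\cdot q\,\Pr[B' = m-1] = npq\binom{n-1}{m-1}p^{m-1}q^{n-m}$, and $n\binom{n-1}{m-1} = m\binom{n}{m}$ yields the claimed closed form. Substituting $n = K$, $p = \mu(x)$, $m = s(x) = \lceil K\mu(x)\rceil$ and plugging back into the cellwise sum reproduces exactly the stated expression.

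I expect the mean-absolute-deviation identity to be the only real work; the cellwise reduction is routine. The one subtlety there is the choice $m = \lceil np\rceil$ and checking it is correct even when $np$ is an integer — this is handled by observing that $m\binom{n}{m}p^m q^{n-m+1}$ is unchanged under $m \mapsto m+1$ precisely when $p = m/n$, so $\lceil np\rceil$ and $\lfloor np\rfloor + 1$ agree on the formula in that case. The degenerate cases $\mu(x) = 0$ (then $s(x)=0$ and the summand vanishes) and $\mu(x) = 1$ (then $K - s(x) + 1 = 1$ so the factor $(1-\mu(x))^{K-s(x)+1}$ vanishes) are consistent with the formula and match the obvious fact that such cells incur zero sampling error. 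If one prefers not to reprove a classical result, the alternative is to cite De Moivre's formula directly and present only the cellwise reduction and the substitution.
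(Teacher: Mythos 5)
Your proof is correct and follows essentially the same route as the paper: reduce the $L_1$ error by linearity of expectation to the per-cell mean absolute deviation of a $\mathrm{Binomial}(K,\mu(x))$ count (the marginals of the multinomial in \cref{lem:sampling,lem:multinomial}), then apply the closed-form mean-deviation formula with $s(x)=\lceil K\mu(x)\rceil$. The only difference is that you reprove De Moivre's mean-deviation identity from scratch (correctly, including the integer-$np$ and degenerate cases), whereas the paper simply cites it as \cref{lem:binomial} --- exactly the alternative you note at the end.
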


\begin{proof}
The theorem statement follows directly from \cref{lem:sampling} and \cref{lem:multinomial}.
\end{proof}

\begin{lemma}[Mean Deviation \cite{frame1945mean,johnson2005univariate}] \label{lem:binomial}
Let $k \sim Binomial(n, p)$, then:
$$ \mathbb{E}\Big[ \Big| p - \frac{k}{n} \Big| \Big] =  \frac{2}{n} s \binom{n}{s} p^s (1-p)^{n-s+1}, $$
where $s = \ceil{n \cdot p}$.
\end{lemma}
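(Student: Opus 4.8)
The plan is to recognize this as the classical de~Moivre formula for the mean absolute deviation of a binomial and to prove it via a telescoping identity. First I would pull out the factor $\frac1n$, reducing the claim to showing $\mathbb{E}\big[|k-np|\big] = 2 s \binom{n}{s} p^s (1-p)^{n-s+1}$ with $s=\lceil np\rceil$. Since $\mathbb{E}[k-np]=0$, the positive and negative parts of $k-np$ have equal expectation, so
$$\mathbb{E}\big[|k-np|\big] = 2\!\!\sum_{k \,\geq\, np}\!\! (k-np)\binom{n}{k}p^k(1-p)^{n-k} = 2\sum_{k=s}^{n}(k-np)\binom{n}{k}p^k(1-p)^{n-k}.$$
When $np$ happens to be an integer the term $k=np$ contributes zero, so taking $s=\lceil np\rceil$ is consistent in every case; this is the one spot that needs a sentence of care.

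The key step is to show that the summand telescopes. Writing $q=1-p$ and using the decomposition $k-np = kq-(n-k)p$ together with the standard identities $k\binom{n}{k}=n\binom{n-1}{k-1}$ and $(n-k)\binom{n}{k}=n\binom{n-1}{k}$, one checks that
$$(k-np)\binom{n}{k}p^k q^{n-k} = f(k)-f(k+1), \qquad f(k) := n\binom{n-1}{k-1}p^k q^{n-k+1}.$$
Hence the sum collapses to $f(s)-f(n+1)$. The boundary term vanishes because $\binom{n-1}{n}=0$, and the identity $n\binom{n-1}{s-1}=s\binom{n}{s}$ gives $f(s)=s\binom{n}{s}p^s q^{n-s+1}$. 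Combining, $\mathbb{E}\big[|k-np|\big] = 2 s\binom{n}{s}p^s q^{n-s+1}$, and dividing by $n$ yields the stated formula.

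The only real content is spotting the decomposition $k-np = kq-(n-k)p$, which is what turns the summand into a difference of consecutive terms; everything after that is routine verification of the two binomial-coefficient identities and the symmetrization of $|k-np|$ about its mean. I would therefore expect no genuine obstacle — the ``hard part'' is purely the algebraic observation, and the edge case $np\in\mathbb{Z}$ is handled in passing.
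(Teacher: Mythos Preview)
Your proof is correct: the telescoping identity $(k-np)\binom{n}{k}p^kq^{n-k}=f(k)-f(k+1)$ with $f(k)=n\binom{n-1}{k-1}p^kq^{n-k+1}$ checks out, the boundary term $f(n+1)$ vanishes, and the symmetrization $\mathbb{E}|k-np|=2\,\mathbb{E}[(k-np)_+]$ together with the handling of the integer-$np$ edge case is sound.

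However, the paper itself does not prove this lemma at all; its ``proof'' consists of a single sentence pointing to the references \cite{frame1945mean,johnson2005univariate} where the de~Moivre formula is established. So your approach is not the same as the paper's --- you supply a complete, self-contained derivation where the paper simply defers to the literature. What your argument buys is that a reader can verify the identity without chasing citations; what the paper's approach buys is brevity, since the result is classical and not a contribution of the paper.
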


\begin{proof}
This statement appears and is proved in \cite{frame1945mean,johnson2005univariate}.
\end{proof}

\begin{lemma}[$L_1$ Deviation] \label{lem:multinomial}
Let $\vec{k} \sim Multinomial(n, \vec{p})$, then:
$$ \mathbb{E}[ \norm{\vec{p} - \vec{k}/n}_1 ] = \frac{2}{n} \sum_x s(x) \binom{n}{s(x)} p(x)^{s(x)} (1-p(x))^{n-s(x)+1}, $$
where $s(x) = \ceil{n \cdot p(x)}$.
\end{lemma}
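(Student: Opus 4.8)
The plan is to reduce the multivariate claim to the univariate Mean Deviation result in \cref{lem:binomial} by exploiting the additive structure of the $L_1$ norm. First I would write $\norm{\vec{p} - \vec{k}/n}_1 = \sum_x |p(x) - k(x)/n|$ and apply linearity of expectation to obtain $\mathbb{E}[\norm{\vec{p} - \vec{k}/n}_1] = \sum_x \mathbb{E}[|p(x) - k(x)/n|]$. The crucial observation is that, although the coordinates $k(x)$ of a multinomial vector are \emph{not} independent, each summand depends only on the marginal law of a single coordinate, so the joint dependence structure is irrelevant.

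Second, I would invoke the standard fact that each coordinate of $\vec{k} \sim \text{Multinomial}(n, \vec{p})$ is marginally $\text{Binomial}(n, p(x))$ (obtained by collapsing all other categories into a single complementary outcome). Hence $\mathbb{E}[|p(x) - k(x)/n|]$ is precisely the quantity evaluated in \cref{lem:binomial} with parameters $n$ and $p(x)$, which equals $\frac{2}{n} s(x) \binom{n}{s(x)} p(x)^{s(x)} (1-p(x))^{n-s(x)+1}$ with $s(x) = \lceil n p(x) \rceil$.

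Finally, substituting this expression back into the sum over $x$ gives the claimed formula. There is essentially no real obstacle here; the only point requiring a moment's care is the first step — recognizing that linearity of expectation lets us bypass the correlations among the multinomial coordinates, since the $L_1$ deviation is a sum of functions each depending on only one coordinate, and each such expectation is then handed off verbatim to the binomial Mean Deviation lemma.
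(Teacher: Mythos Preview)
Your proposal is correct and matches the paper's own argument almost exactly: the paper simply states that the result follows immediately from \cref{lem:binomial} together with the fact that each coordinate $k(x)$ is marginally $\text{Binomial}(n, p(x))$. Your write-up is a slightly more detailed version of the same reasoning, explicitly noting that linearity of expectation makes the multinomial dependence irrelevant.
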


\begin{proof}
The statement follows immediately from \cref{lem:binomial} and the fact that $k(x) \sim Binomial(n, p(x))$.  
\end{proof}

\begin{lemma} \label{lem:sampling}
Let $\hat{D}$ be the dataset obtained by sampling $K$ items with replacement from $D$.  Then,

$$ M_r(\hat{D}) \sim Multinomial\Big(K, \frac{1}{N} M_r(D)\Big) $$
\end{lemma}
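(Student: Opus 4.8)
The plan is to unwind the definitions of ``sampling $K$ items with replacement'' and of the multinomial distribution, and observe that they match cell-by-cell. First I would note that drawing $K$ records with replacement from $D$ means generating $K$ independent draws $\hat{x}^{(1)}, \dots, \hat{x}^{(K)}$, each uniformly distributed over the $N$ records of the multiset $D$. Since the marginal query $M_r$ depends on a record $x$ only through its restriction $x_r \in \Omega_r$, and since the groups $\set{x \in D : x_r = t}$ for $t \in \Omega_r$ partition $D$, it suffices to track which cell $t$ each draw lands in.

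Next I would compute, for a single draw $\hat{x}^{(j)}$ and a fixed cell $t \in \Omega_r$, the probability $p(t) := \Pr[\hat{x}^{(j)}_r = t] = \frac{1}{N} \sum_{x \in D} \mathbbm{1}[x_r = t] = \frac{1}{N} M_r(D)[t]$, using the uniformity of the draw and the definition of $M_r$. These probabilities are nonnegative and sum to $1$ over $t \in \Omega_r$, so $\vec{p} = \frac{1}{N} M_r(D)$ is a valid probability vector over $\Omega_r$.

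Finally, by definition $M_r(\hat{D})[t] = \sum_{j=1}^K \mathbbm{1}[\hat{x}^{(j)}_r = t]$ is the number of the $K$ independent draws landing in cell $t$. The joint law of the vector of counts obtained by classifying $K$ i.i.d.\ categorical trials into the disjoint categories $\set{t : t \in \Omega_r}$ with per-trial probabilities $\vec{p}$ is, by definition, $\mathrm{Multinomial}(K, \vec{p})$; substituting $\vec{p} = \frac{1}{N} M_r(D)$ yields the claim. There is essentially no obstacle here: the only points requiring care are that ``with replacement'' delivers genuinely i.i.d.\ (not merely exchangeable) draws, and that the cells partition $D$ so each draw is classified into exactly one cell — both immediate from the setup.
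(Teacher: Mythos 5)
Your proposal is correct and follows essentially the same route as the paper, whose proof is a one-line appeal to the definition of the multinomial distribution; you simply spell out that appeal in full (uniform i.i.d.\ draws, per-cell probability $\frac{1}{N}M_r(D)[t]$, counts over disjoint cells). No gaps.
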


\begin{proof}
The statement follows from the definition of the multinomial distribution.
\end{proof}

\section{Structural Zeros}

In this section, we describe a simple and principled method to specify and enforce \emph{structural zeros} in the mechanism.  These capture attribute combinations that cannot occur in the real data.  Without specifying this, synthetic data mechanisms will usually generate records that violate these constraints that hold in the real data as the process of adding noise can introduce spurious records, especially in high privacy regimes.   These spurious records can be confusing for downstream analysis of the synthetic data and can lead the analyst to distrust the quality of the data.  By imposing known structural zero constraints, we can avoid this problem while also improving the quality of the synthetic data on the workload of interest.

Structural zeros, if they exist, can usually be enumerated by a domain expert.  We can very naturally incorporate these into our mechanism with only one minor change to the underlying \pgm library.  These structural zeros can be specified as input as a list of pairs $(r, \mathcal{Z}_r)$ where $ \mathcal{Z}_r \subseteq \Omega_r$. The first entry of the pair specifies the set of attributes relevant to the structural zeros, while the second entry enumerates the attribute combinations whose counts should all be zero.  The method we propose can be used within any mechanism that builds on top of \pgm and is hence more broadly useful outside the context of \aim.

To understand the technical ideas in this section, please refer to the background on \pgm \cite{mckenna2019graphical}.  Usually \pgm is initialized by setting $\theta_r(x_r) = 0$ for all $r$ in the model and all $x_r \in \Omega_r$.  This corresponds to a model where $\mu_r(x_r)$ is uniform across all $x_r$.  Our basic observation is that by initializing \pgm by setting $ \theta_r(x_r) = -\infty$ for each $x_r \in Z_r$ the cell of the associated marginal will be $\mu_r(x_r) = 0$, as desired.  Moreover, each update within the \pgm estimation procedure will try to update $\theta_r(x_r)$ by a finite amount, leaving it unchanged.  Thus, $\mu_r(x_r)$ will remain $0$ during the entire estimation procedure.  We conjecture that the estimation procedure solves the following modified convex optimization problem:

$$ \hat{\mu} = \min_{\substack{\mu \in \mathcal{M} \\ \mu_r(Z_r) = 0}} L(\mu)$$

This approach is appealing because other simple approaches that discard invalid tuples can inadvertently bias the distribution, which is undesirable.

Note that for each clique in the set of structural zeros, we must include that clique in our model, which increases the size of that model.  Thus, we should treat it as we would treat a clique selected by \aim.  That is, when calculating \textsc{JT-SIZE} in line 12 of \aim, we need to include both the cliques selected in earlier iterations as well as the cliques included in the structural zeros.

\subsection{Experiments}

In this section, we empirically evaluate this structural zeros enhancement, showing that it can reduce workload error in some cases.  For this experiment, we consider the \textsc{general} workload on the \textsc{fire} dataset and compare the performance of \aim with and without imposing structural zero constraints.  This dataset contains several related attributes, like ``Zipcode of Incident`` and ``City''.  While these attributes are not perfectly correlated, significant numbers of attribute combinations are impossible.  We identified a total of nine attribute pairs which contain some structural zeros and a total of 2696 structural zero constraints within these nine marginals.

The results of this experiment are shown in \cref{table:zeros}.  On average, imposing structural zeros improves the performance of the mechanism, although the improvement is not universal across all values of epsilon we tested.   Nevertheless, it is still useful to impose these constraints for data quality purposes.

\begin{table}[H]
\caption{\label{table:zeros} Error of \aim on the \textsc{fire} dataset, with and \\ without imposing structural zero constraints.}
\begin{tabular}{c|ccc}
\toprule
$\epsilon$ &    AIM & AIM+Structural Zeros & Ratio \\
\midrule
0.010   &  0.613 &     0.542 &        1.130 \\
0.031   &  0.303 &     0.263 &        1.151 \\
0.100   &  0.141 &     0.153 &        0.924 \\
0.316   &  0.087 &     0.077 &        1.124 \\
1.000   &  0.052 &     0.053 &        0.979 \\
3.162   &  0.044 &     0.045 &        0.964 \\
10.00  &  0.038 &     0.032 &        1.170 \\
31.62  &  0.029 &     0.026 &        1.149 \\
100.0 &  0.025 &     0.025 &        1.004 \\
\bottomrule
\end{tabular}
\end{table}

\section{Runtime Experiments}

Our primary focus in the main body of the paper was mechanism utility, as measured by the workload error.  In this section we discuss the runtime of \aim, which is an important consideration when deploying it in practice.  Note that we do not compare against runtime of other mechanisms here, because different mechanisms were executed in different runtime environments.  \cref{fig:runtime} below shows the runtime of \aim as a function of the privacy parameter.  As evident from the figure, runtime increases drastically with the privacy parameter.  This is not surprising because \aim is budget-aware: it knows to select larger marginals and run for more rounds when the budget is higher, which in turn leads to longer runtime.  For large $\epsilon$, the constraint on \jtsize is essential to allow the mechanism to terminate at all.  Without it, \aim may try to select marginal queries that exceed memory resources and result in much longer runtime.  For small $\epsilon$, this constraint is not active, and could be removed without affecting the behavior of \aim.  

Recall that these experiments were conducted on one core of a compute cluster with 4 GB of memory and a CPU speed of 2.4 GHz.  These machines were used due to the large number of experiments we needed to conduct, but in real-world scenarios we only need to run one execution of \aim, for a single dataset, workload, privacy parameter, and trial.  For this, we can use machines with much better specs, which would improve the runtime significantly.  

\begin{figure}[H]
\includegraphics[width=\linewidth]{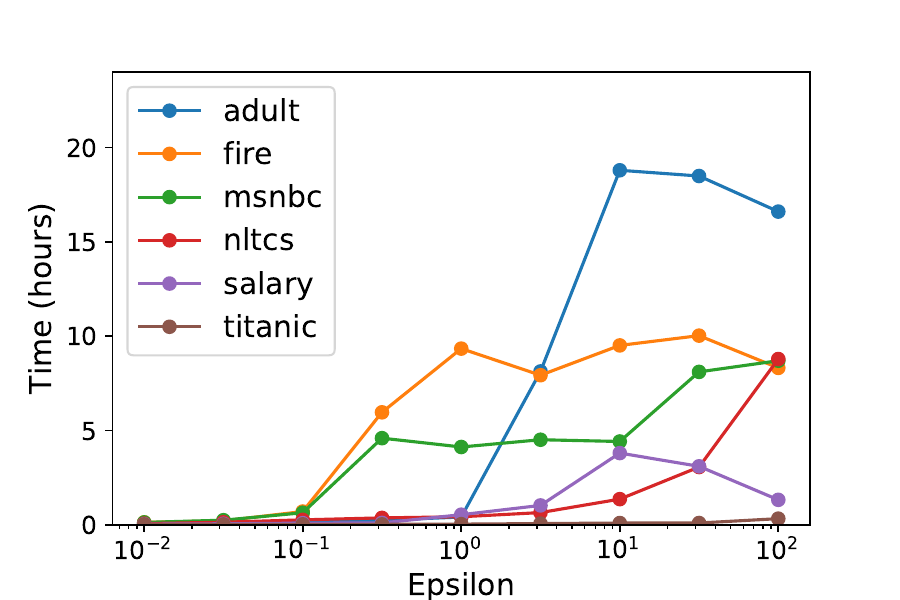}
\caption{Runtime of \aim on the \general workload.} \label{fig:runtime}
\end{figure}

\section{\pgm vs. Relaxed Projection}

\begin{figure*}[t!]
\centering
\subcaptionbox{\vspace{1em}$\epsilon = 0.1$}{\includegraphics[width=0.33\textwidth]{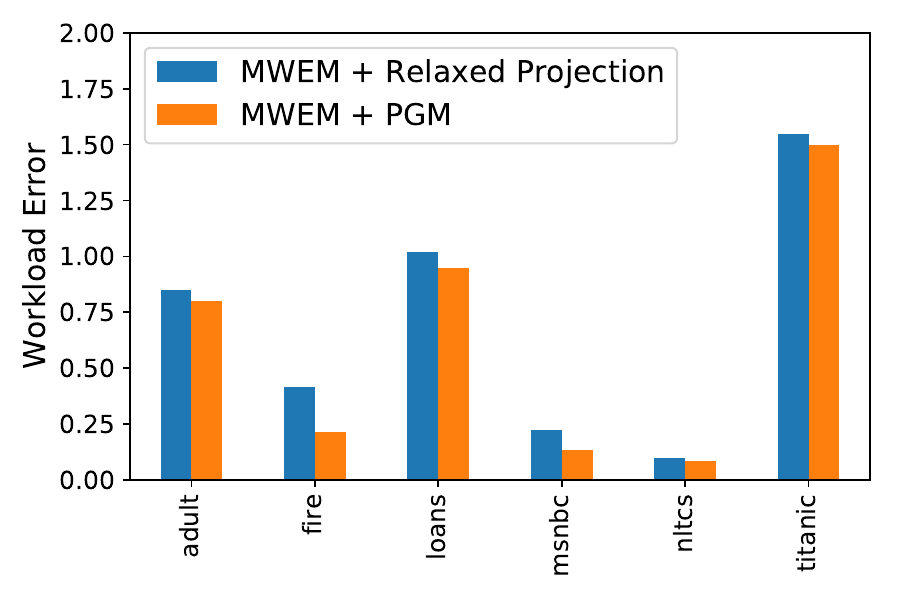}}
\subcaptionbox{$\epsilon = 1.0$}{\includegraphics[width=0.33\textwidth]{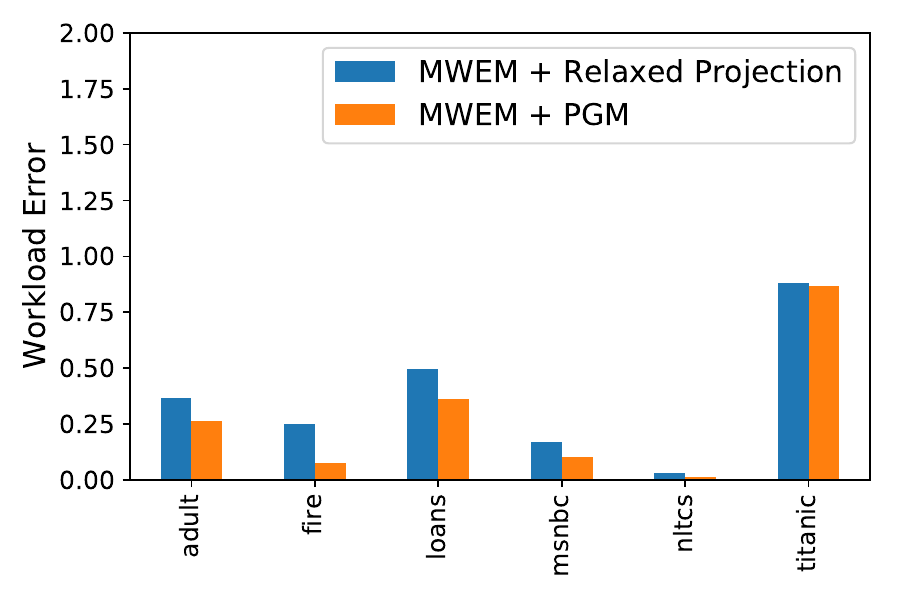}}
\subcaptionbox{$\epsilon = 10.0$}{\includegraphics[width=0.33\textwidth]{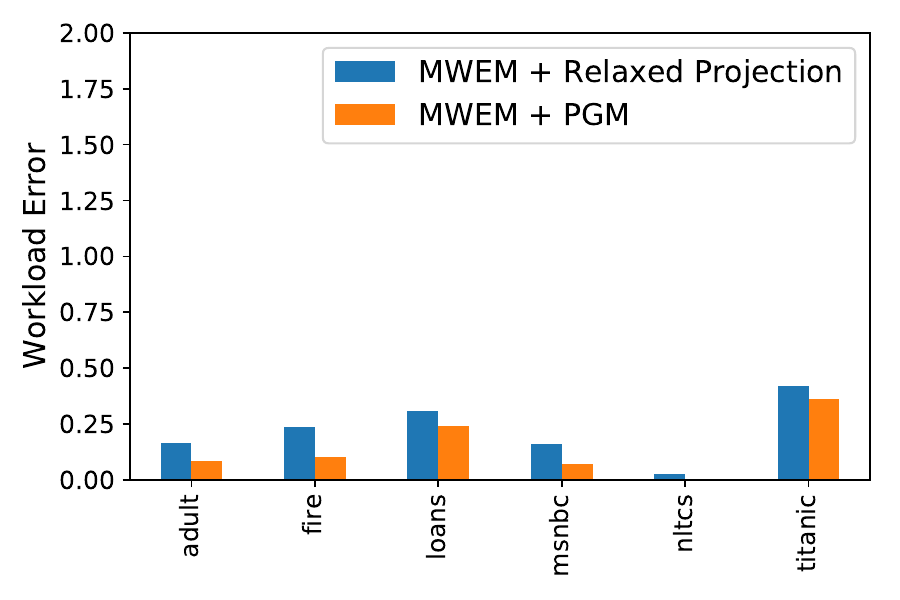}}
\caption{\label{fig:pgmmp} \mech{MWEM+Relaxed Projection} vs. \mwempgm on the \general workload.}
\end{figure*}

In this paper, we built \aim on top of \pgm, leveraging prior work for the \textbf{generate} step of the select-measure-generate paradigm.  \pgm is not the only method in this space, although it was the first general purpose and scalable method to our knowledge.  ``Relaxed Projection'' \cite{aydore2021differentially} is another general purpose and scalable method that solves the same problem, and could be used in place of \pgm if desired.  \rap, the main mechanism that utilizes this technique, did not perform well in our experiments.  However, it is not clear from our experiments if the poor performance can be attributed to the relaxed projection algorithm, or some other algorithmic design decisions.  In this section, we attempt to precisely pin down the differences between these two related methods, taking care to fix possible confounding factors.  We thus consider two mechanisms: \mwempgm, which is defined in \cref{alg:mwem}, and \mech{MWEM+Relaxed Projection} which is identical to \mwempgm in every way, except the call to \pgm is replaced with a call to the relaxed projection algorithm of Aydore et al. 

For this experiment, we consider the \general workload, and we run each algorithm for $T = 5, 10, \dots, 100$, with five trials for each hyper-parameter setting.  We average the workload error across the five trials, and report the minimum workload error across hyper-parameter settings in \cref{fig:pgmmp}.  Although the algorithms are conceptually very similar, \mwempgm consistently outperforms \mech{MWEM+Relaxed Projection}, across every dataset and privacy level considered.  The performance difference is modest in many cases, but significant on the \fire dataset.  

\mech{AP-PGM} \cite{mckenna2021relaxed} offers another alternative to \pgm for the generate step, and while it was shown to be an appealing alternative to \pgm in some cases, within the context of an \mwem-style algorithm, their own experiments demonstrate the superiority of \pgm.  

Generator networks \cite{liu2021iterative} offer yet another alternative to \pgm for the generate step.  To the best of our knowledge, no direct comparison between this approach and \pgm has been done to date, where confounding factors are controlled for. Conceptually, this approach is most similar to the relaxed projection approach, so we conjecture the results to look similar to those shown in \cref{fig:pgmmp}.

\revision{\section{Tuning Model Capacity} \label{sec:capacity}

\begin{figure*}[t!]
\centering
\subcaptionbox{Workload Error vs. Model Capacity}{\includegraphics[width=0.34\textwidth]{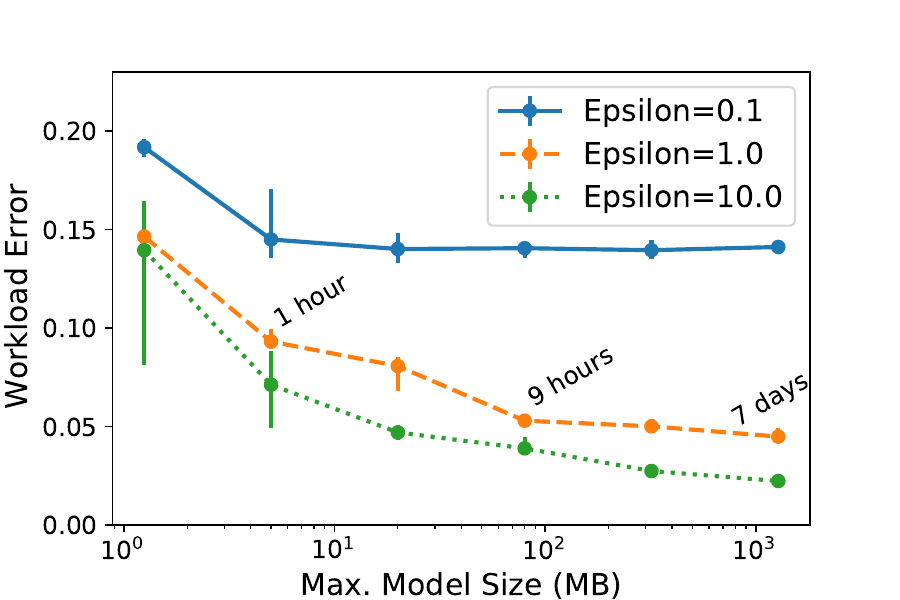}}
\subcaptionbox{Runtime vs. Model Capacity}{\includegraphics[width=0.34\textwidth]{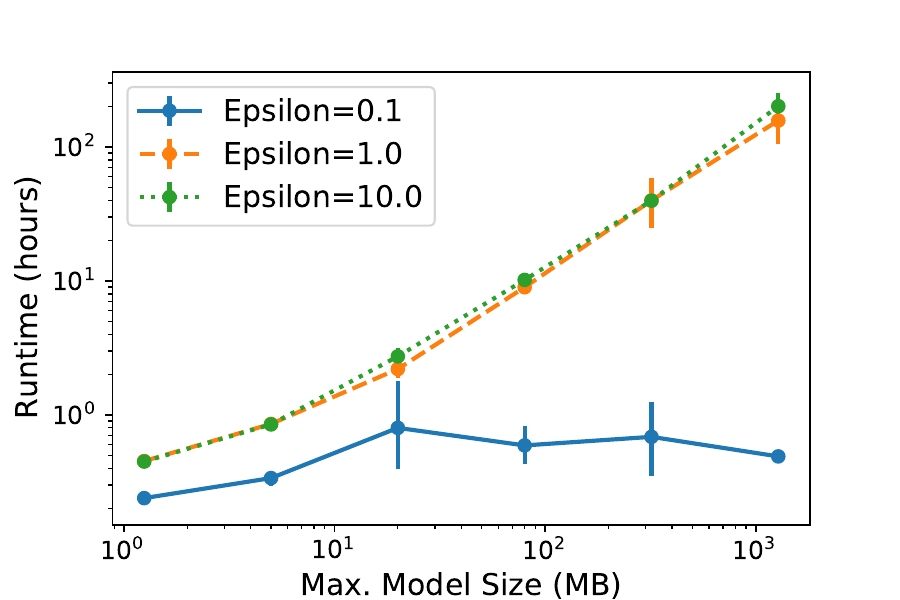}}
\caption{\label{fig:misc} Impact of the model capacity hyper-parameter on the error and runtime of \aim.}
\end{figure*}

In Line 12 of \aim (\cref{alg:aim}), we construct a set of candidates to consider in the current round based on an upper limit on \jtsize.  80 MB was chosen to match prior work,\footnote{Cai et al. \cite{cai2021data} limit the size of the \emph{largest} clique in the junction tree to have at most $10^7$ cells (80 MB with 8 byte floats), while we limit the \emph{overall} size of the junction tree.} but in general we can tune it as desired to strike the right accuracy / runtime trade-off.  Unlike other hyper-parameters, there is no ``sweet spot'' for this one: setting larger model capacities should always make the mechanism perform better, at the cost of increased runtime.  We demonstrate this trade-off empirically in \cref{fig:misc} (a-b).  For $\epsilon = 0.1, 1,$ and $10$, we considered model capacities ranging from 1.25 MB to 1.28 GB, and ran \aim on the \fire dataset with the \general workload.  Results are averaged over five trials, with error bars indicating the min/max runtime and workload error across those trials.  Our main findings are listed below:

\begin{enumerate}
\item As expected, runtime increases with model capacity, and workload error decreases with capacity.  The case $\epsilon=0.1$ is an exception, where both the plots level off beyond a capacity of $20 MB$.  This is because the capacity constraint is not active in this regime: \aim already favors small marginals when the available privacy budget is small by virtue of the quality score function for marginal query selection, so the model remains small even without the model capacity constraint.
\item Using the default model capacity and $\epsilon=1$ resulted in a 9 hour runtime.  We can slightly reduce error further, by about $13\%$, by increasing the model capacity to $1.28 GB$ and waiting 7 days.  Conversely, we can reduce the model capacity to $5 MB$ which increases error by about $75\%$, but takes less than one hour.  The law of diminishing returns is at play.
\end{enumerate}

Ultimately, the model capacity to use is a policy decision.  In real-world deployments, it is certainly reasonable to spend additional computational time for even a small boost in utility.
}

\revision{
\section{Adaptive Rounds Experiments} \label{sec:adaptive}

In \cref{sec:ablation}, we saw that the ``Adaptive Rounds + Budget Split'' element described in \cref{sec:aim} had an important role in the strong performance of \aim, offering $1.48\times$ improvement over the default number of rounds on average.  In this section, we expand that experiment, by comparing against alternatives that use a different number of rounds.  In particular, we run remove this element of \aim, and instead run the mechanism for fixed rounds $T$, where privacy budget is split evenly across rounds, and $T$ is varied from $\set{2,4,8,16,32,64,128,256,d}$.  For each choice of $T$, we run the mechanism across all experimental settings ($6$ datasets $\times$ $3$ workloads $\times$ $9$ privacy levels) and $5$ trials for each setting. We then compute the mean workload error across the five trials, and compare that to the mean workload error of \aim with adaptive rounds + budget split to calculate an ``improvement ratios''.  The distribution of improvment ratios for each $T$ across the $162$ experimental settings is shown in \cref{fig:adaptive}.  The geometric mean of improvement ratios is $5.07$ for $T=2$, $3.87$ for $T=4$, $2.36$ for $T=8$, $1.46$ for $T=16$, $1.13$ for $T=32$, $1.06$ for $T=64$, $1.17$ for $T=128$, $1.27$ for $T=256$, and $1.48$ for the defualt $T=d$.  Thus, it is better to use adaptive rounds + budget split than any fixed $T$.  If we chose the best $T$ in hindsight for each experimental setting, that is on average $1.1\times$ better than using adaptive rounds + budget split.  However, in general it would require spending significant privacy budget to optimize $T$ in practice \cite{liu2019private}, and this small improvement in utility would likely not be enough to warrant the high privacy cost of hyper-parameter optimization.  

\begin{figure}[H]
\includegraphics[width=0.9\linewidth]{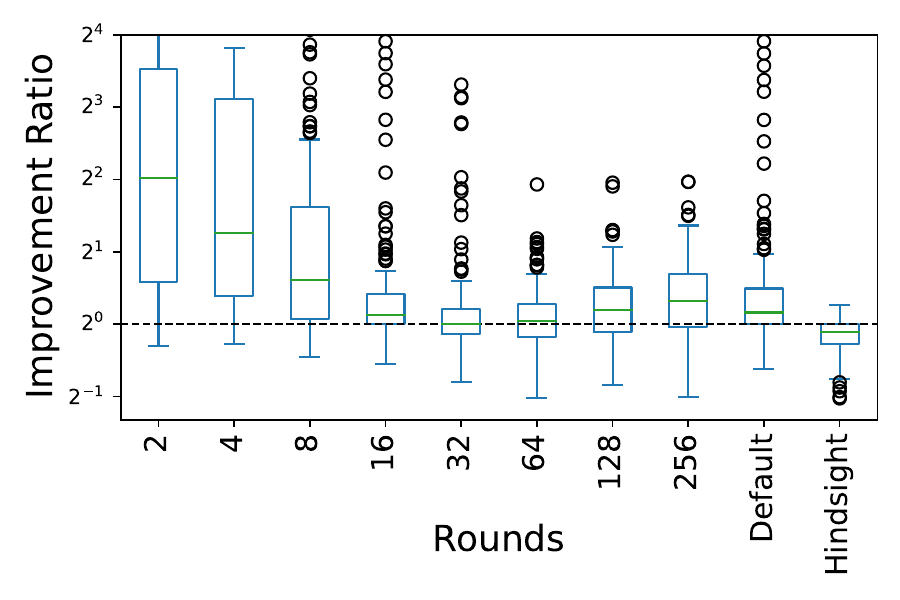}
\caption{\label{fig:adaptive} Adaptive rounds experiment}
\end{figure} }
\revision{\section{Sensitivity to Hyper-parameters} \label{sec:sensitivity}

In \cref{sec:experiments}, we evaluated both \aim and existing mechanisms using default hyper-parameter settings.  A natural question is to what extent the results would have changed if the hyper-parameters of competing mechanisms had been near-optimally tuned for each mechanism.  In this section, we explore this question, by evaluating three mechanisms (\mwempgm, \gem, and \rap) across a variety of hyper-parameter settings.  \gem and \rap were chosen because the reported results in their experimental evaluation was best performance on a grid of hyper-parameter settings.  \mwempgm was chosen because it is most similar to \aim.  The other mechanisms, like \privmrf, \privbayes, and \mst are not included in this experiment since these mechanisms were all evaluated with fixed hyper-parameter settings in their own experimental evaluation (and the effect of hyper-parameters were investigated in separate experiments, if applicable).  

For this experiment, we focus on the \adult dataset and the \general workload, but vary $\epsilon \in [0.01, 100]$.  We run each mechanism for $T = \set{2,4,8,16,32,64,75,128,256}$ rounds.  By default, \mwempgm runs for $d=15$ rounds, \gem runs for $75$ rounds, and \rap runs for $10$ rounds.   We used the same compute resources described in \cref{sec:experiments}, and used a $24$ hour time limit for $\mwempgm$ and $\gem$.  We used a smaller $4$ hour time limit for \rap, since that required significantly more RAM (64 GB) than the other mechanisms so only 2 executions of the mechanism could be run in parallel on the compute cluster used in experiments.  

\cref{fig:rounds} shows the performance of these mechanisms with varying numbers of rounds.  Note that none of the mechanisms completed for $256$ rounds, \gem did not complete for $128$ rounds and \rap did not complete for $64$ rounds.  This is because these mechanisms surpassed the time limit for the experiment.  For \mwempgm, the best hyper-parameter setting varied by $\epsilon$, with $T=8$ working well for $\epsilon \leq 0.3$, and $T \in \set{64, 75,128}$ working well for $\epsilon \geq 1$.  There is some loss in performance for using $T=16$ (the default is $T=15$), although the difference is small.  Note that state-of-the-art DP hyper-parameter selection algorithms come at a multiplicative $3 \times$ cost to the privacy parameter \cite{liu2019private, papernot2021hyperparameter}.  Thus, it is clear from \cref{fig:mwemablation} that it is better to use the default hyper-parameter setting than using a DP mechanism to optimize the hyper-parameter, especially for lower values of $\epsilon$.  For example, at $\epsilon=0.1$, \mwempgm with $T=16$ achieved a workload error of $0.62$.  The best workload error achieved at $\epsilon=0.032$ was only $0.86$ (for $T=8$), which would be worse than using the full private budget for the default hyper-parameter setting.  

The behavior of \gem demonstrated a similar trend, with $T=32$ working well for $\epsilon\leq 0.1$ and $T \in {64, 75}$ working well for $\epsilon \geq 0.32$.  In this case, the default setting was actually best in many cases, and in the settings where it was not optimal, it is still better than the alternative of optimizing the hyper-parameter with a DP mechanism, which comes at a $3 \times$ multiplicative privacy cost.  

The behavior of \rap  demonstrated the same trend, with larger values of $T$ working better for larger $\epsilon$, and vice-versa.

In short, while the default hyper-parameter settings are not optimal in hindsight, the amount of utility gained by optimizing hyper-parameters is not enough to warrant the high cost to privacy.  Moreover, even if we ignored the privacy cost of this hyper-parameter optimization, the utility gained would not be enough to substantially change our main findings from \cref{sec:experiments}.

\begin{figure*}[t!]
\centering
\subcaptionbox{\mwempgm \label{fig:mwemablation}}{\includegraphics[width=0.33\textwidth]{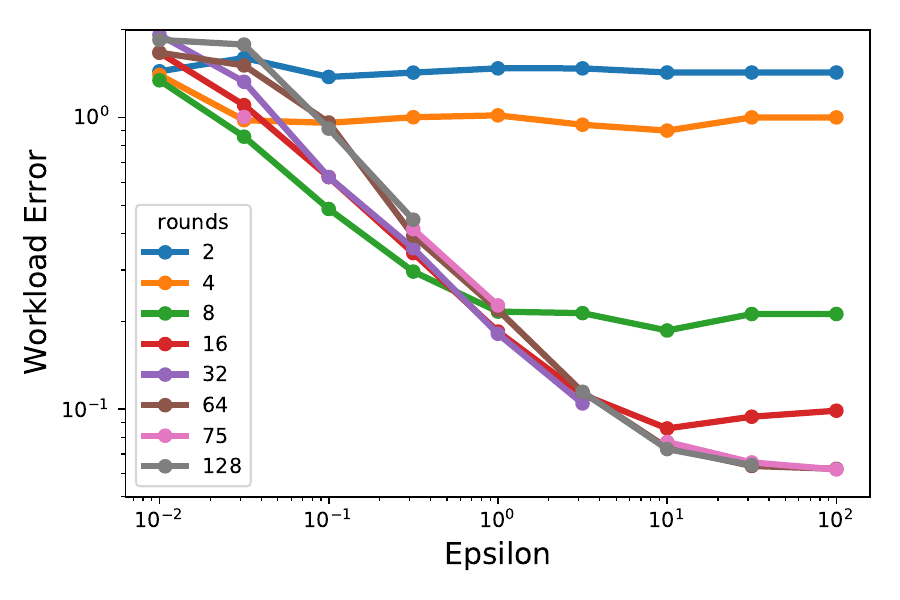}}
\subcaptionbox{\gem}{\includegraphics[width=0.33\textwidth]{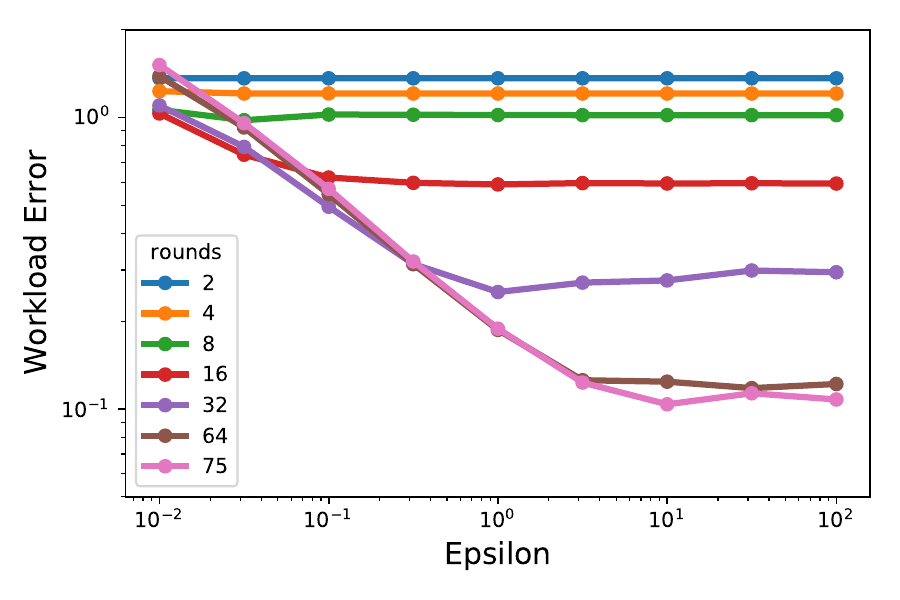}}
\subcaptionbox{\rap}{\includegraphics[width=0.33\textwidth]{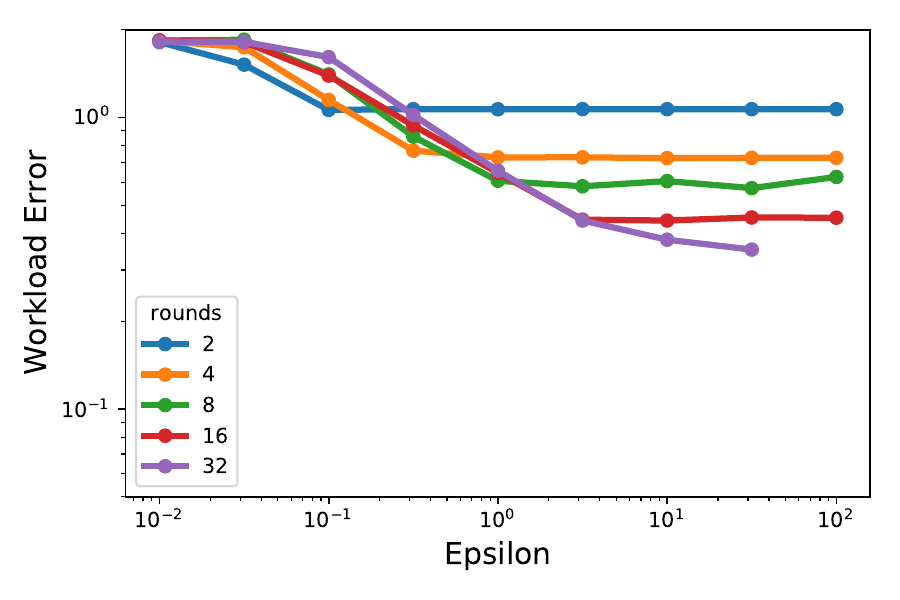}}
\caption{ \label{fig:rounds} Effect of the number of rounds on the performance of \mwempgm, \gem, and \rap. }
\end{figure*}
}

\revision{\section{Other Error Metrics} \label{sec:metrics}

In this work, we primarily focused on the $L_1$ workload error metric defined in \cref{def:error}, although other error metrics could have been used instead.  In \cref{fig:l2,fig:max} we visualize the error of \aim and other mechanism using an $L_2$ error metric as well as a max error metric.  The results for the $L_2$ error metric are similar to the standard $L_1$ metric, except \mwempgm outperforms \aim on the \salary dataset at $\epsilon \leq 1$.  On other datasets, the performance of \aim relative to competitors is roughly the same between the $L_1$ and $L_2$ metrics.  \aim is also often teh best-performing mechanism on the $L_{\inf}$ (max) error metric, although \gem does outperform it on both the \fire and \msnbc datasets.  It is not surprising that \gem outperforms \aim in some cases, since it specifically targets the $L_{\inf}$ error metric.  It is interesting that \aim still outperforms \gem and the other mechanisms on this error metric in other cases.  

\begin{figure*}[t!]
\centering
\includegraphics[width=\textwidth]{fig/legend}
\subcaptionbox{\vspace{1em}Adult}{\includegraphics[width=0.335\textwidth]{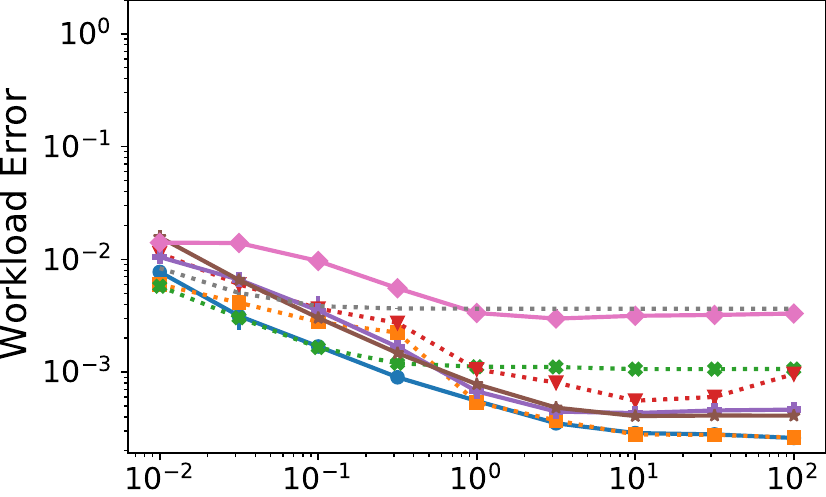}}
\subcaptionbox{Salary}{\includegraphics[width=0.318\textwidth]{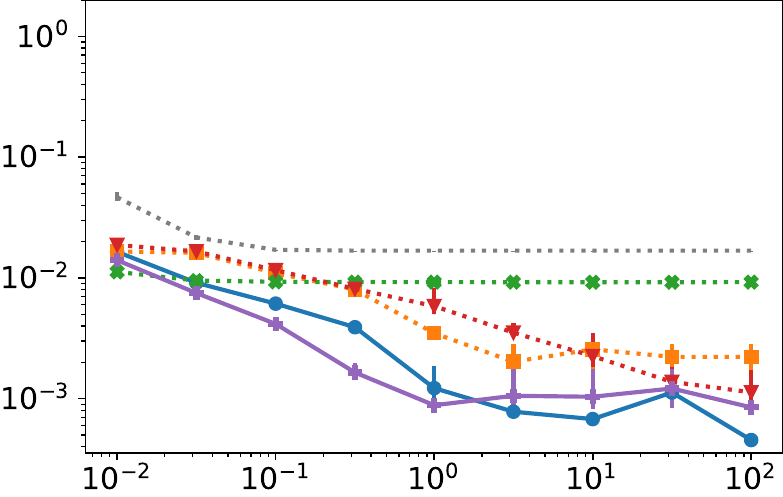}}
\subcaptionbox{MSNBC}{\includegraphics[width=0.318\textwidth]{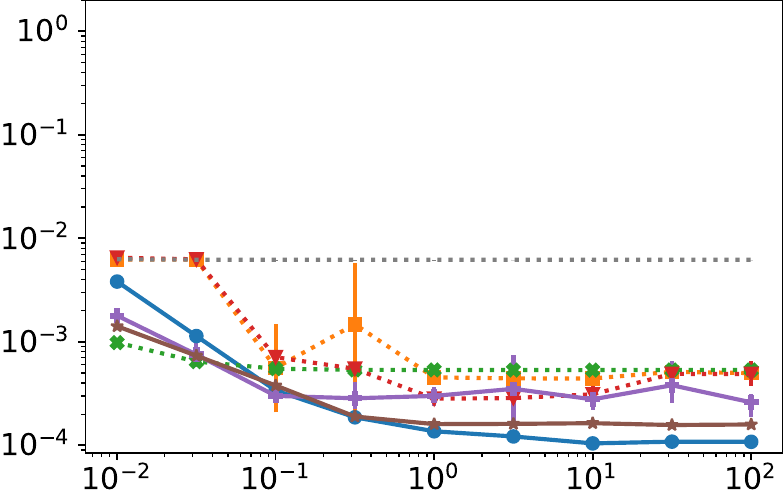}}
\subcaptionbox{Fire}{\includegraphics[width=0.335\textwidth]{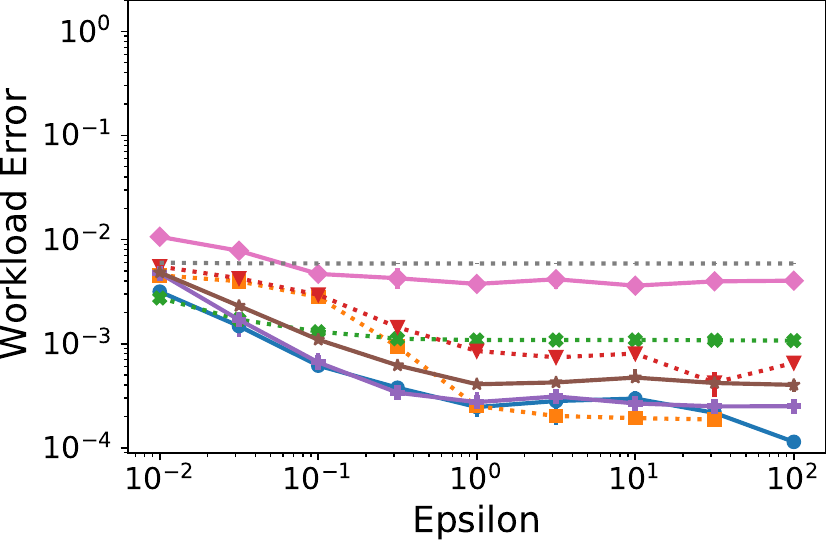}}
\subcaptionbox{NLTCS}{\includegraphics[width=0.318\textwidth]{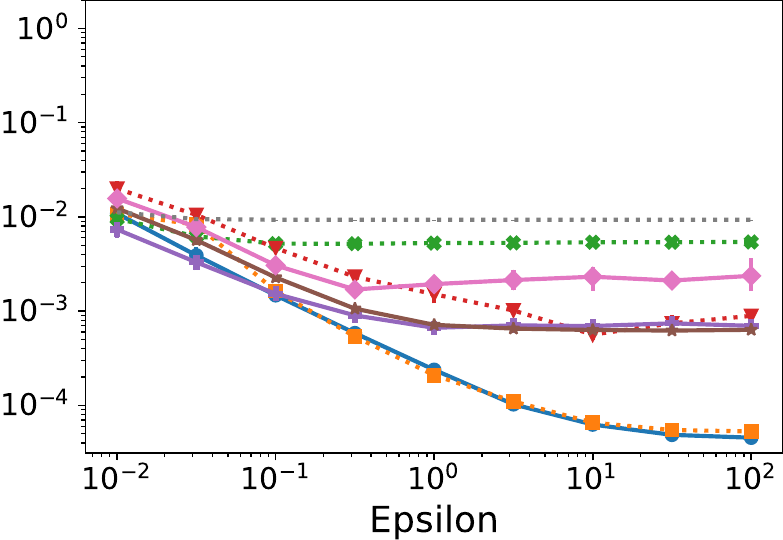}}
\subcaptionbox{Titanic}{\includegraphics[width=0.312\textwidth]{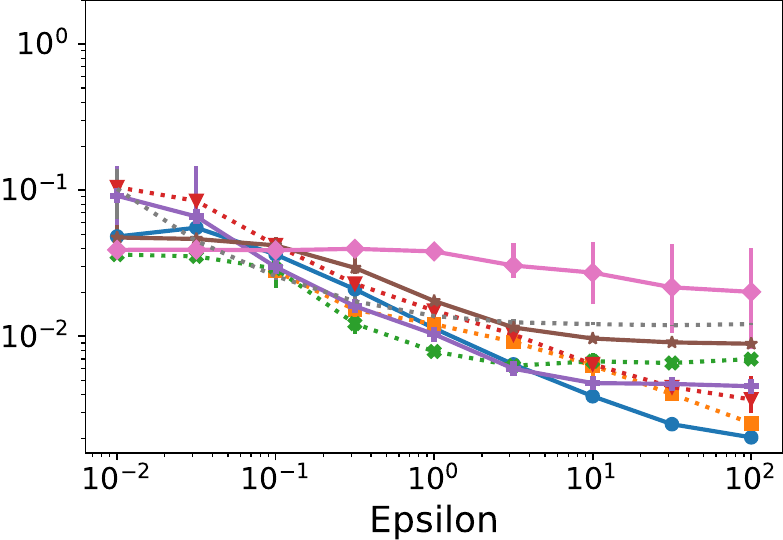}}
\caption{\label{fig:l2} $L_2$ workload error of competing mechanisms on the \general workload for $\epsilon = 0.01, \dots, 100$.}
\end{figure*}

\begin{figure*}[t!]
\centering
\includegraphics[width=\textwidth]{fig/legend}
\subcaptionbox{\vspace{1em}Adult}{\includegraphics[width=0.335\textwidth]{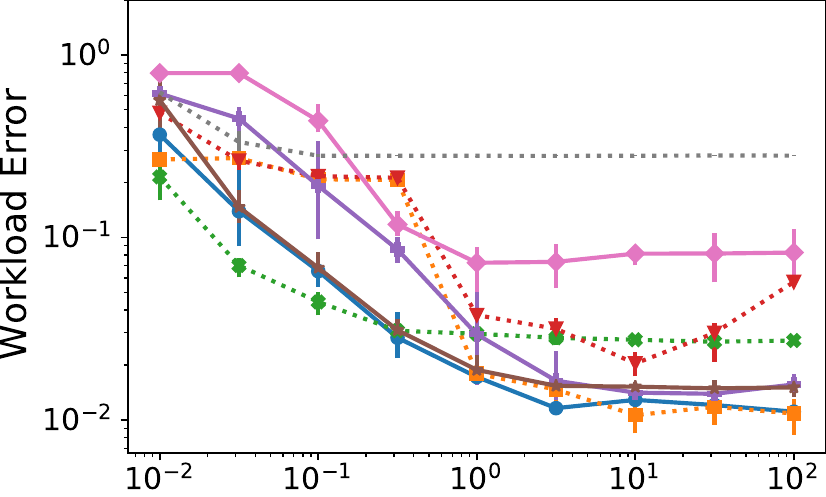}}
\subcaptionbox{Salary}{\includegraphics[width=0.318\textwidth]{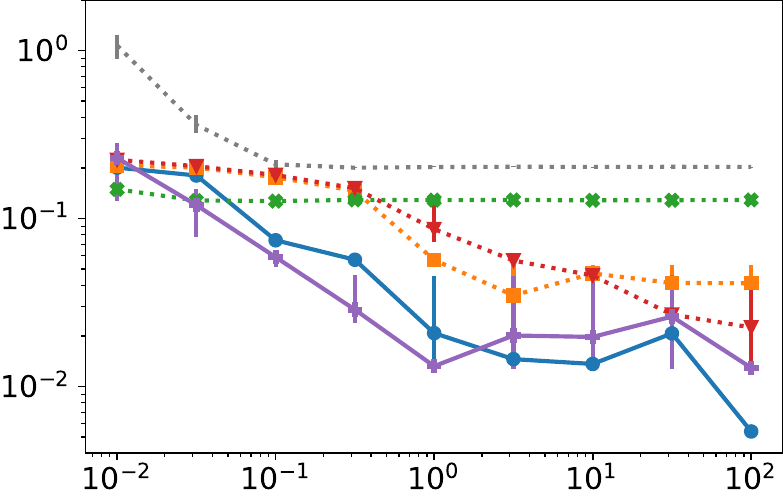}}
\subcaptionbox{MSNBC}{\includegraphics[width=0.318\textwidth]{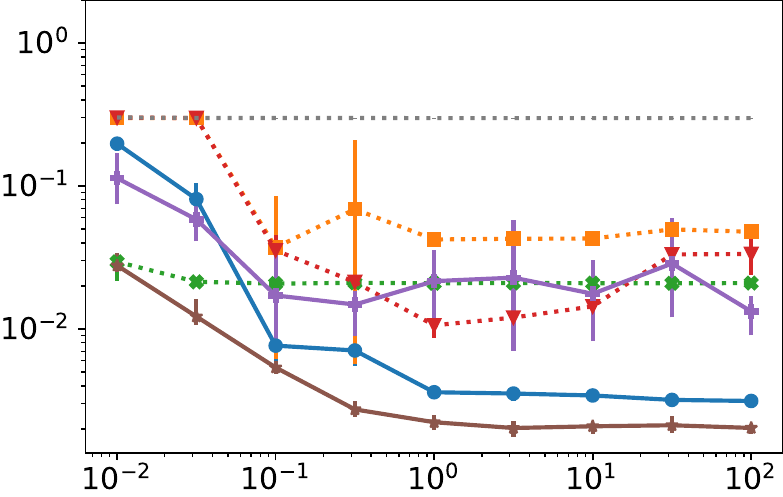}}
\subcaptionbox{Fire}{\includegraphics[width=0.335\textwidth]{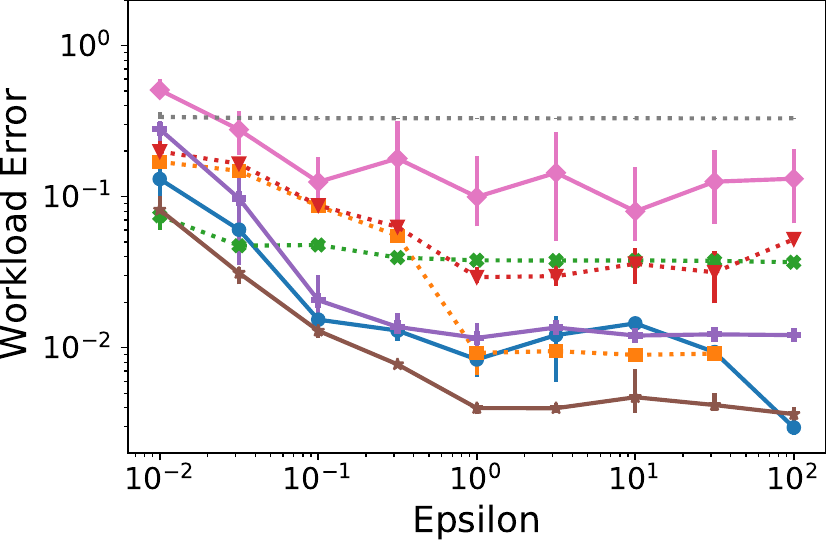}}
\subcaptionbox{NLTCS}{\includegraphics[width=0.318\textwidth]{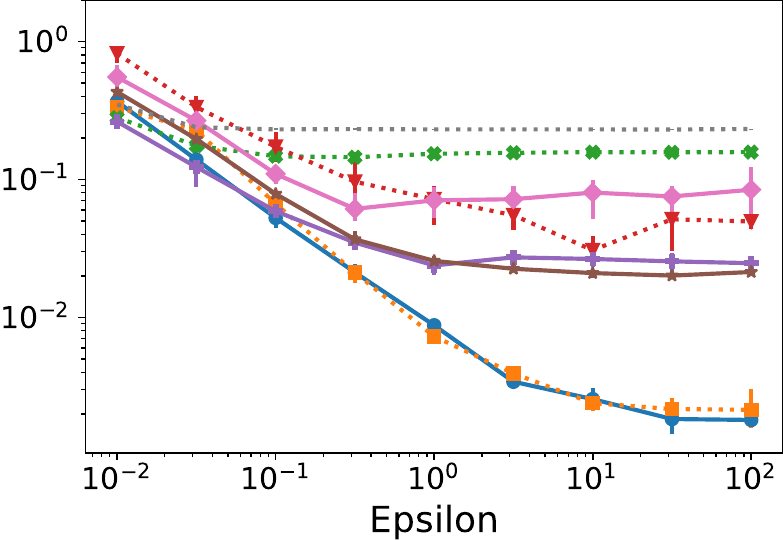}}
\subcaptionbox{Titanic}{\includegraphics[width=0.312\textwidth]{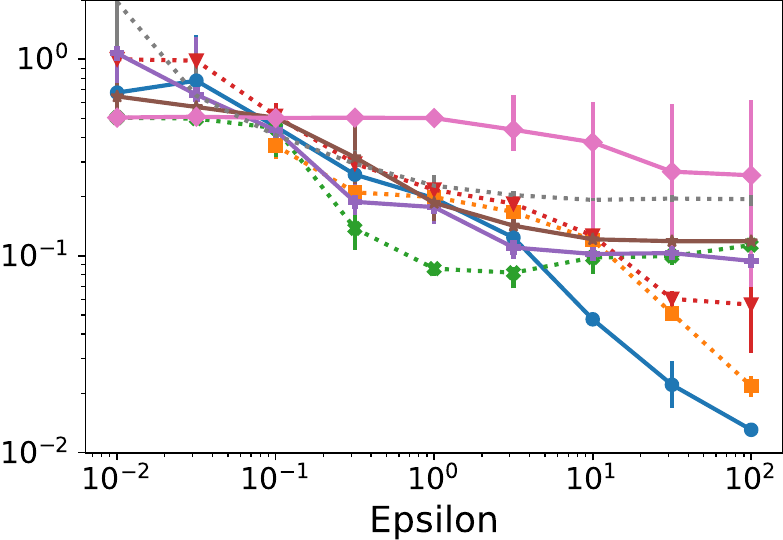}}
\caption{\label{fig:max} $L_{\inf}$ (max) workload error of competing mechanisms on the \general workload for $\epsilon = 0.01, \dots, 100$.}
\end{figure*}
}

\revision{
\section{Results on 2-way marginals} \label{sec:2way}

\begin{figure*}[t!]
\centering
\includegraphics[width=\textwidth]{fig/legend}
\subcaptionbox{\vspace{1em}Adult}{\includegraphics[width=0.335\textwidth]{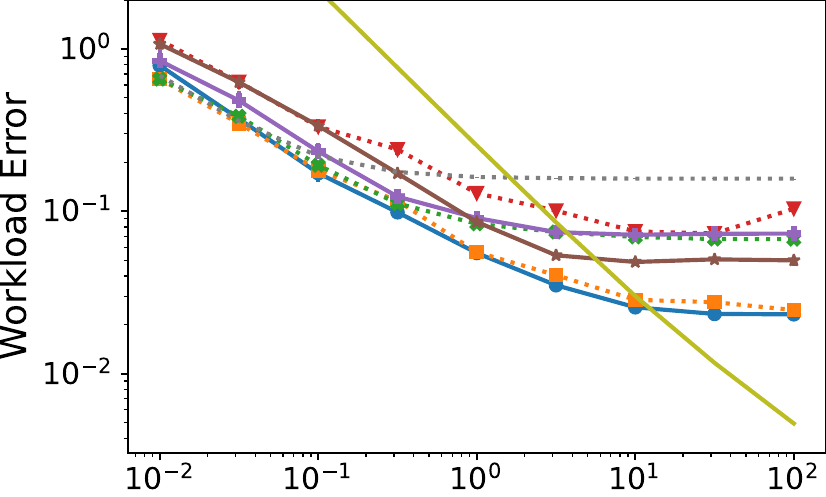}}
\subcaptionbox{Salary}{\includegraphics[width=0.318\textwidth]{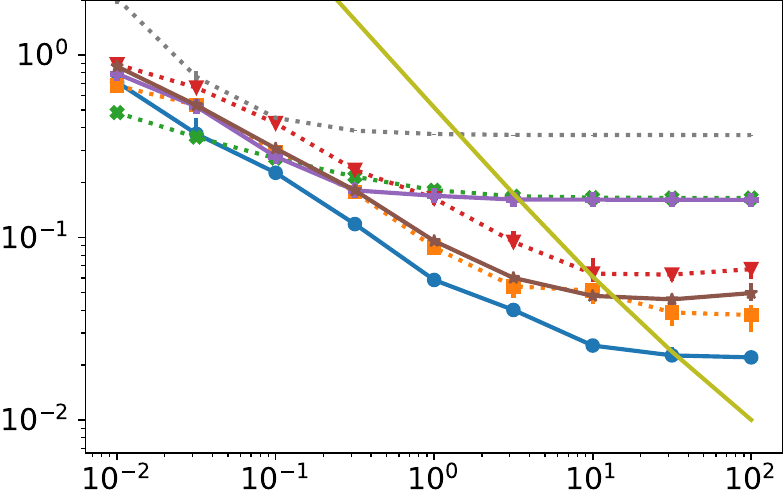}}
\subcaptionbox{MSNBC}{\includegraphics[width=0.318\textwidth]{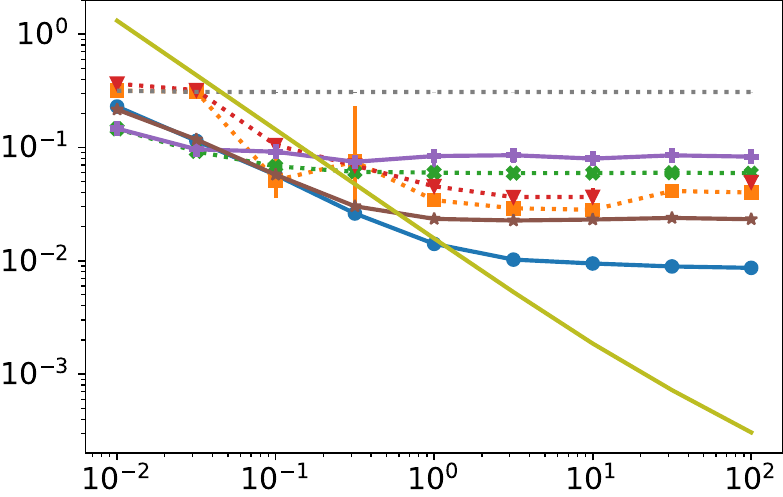}}
\subcaptionbox{Fire}{\includegraphics[width=0.335\textwidth]{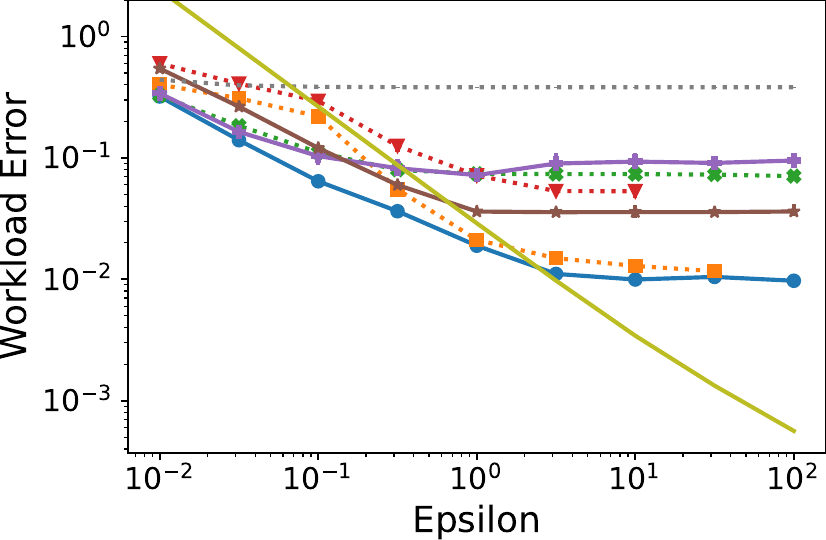}}
\subcaptionbox{NLTCS}{\includegraphics[width=0.318\textwidth]{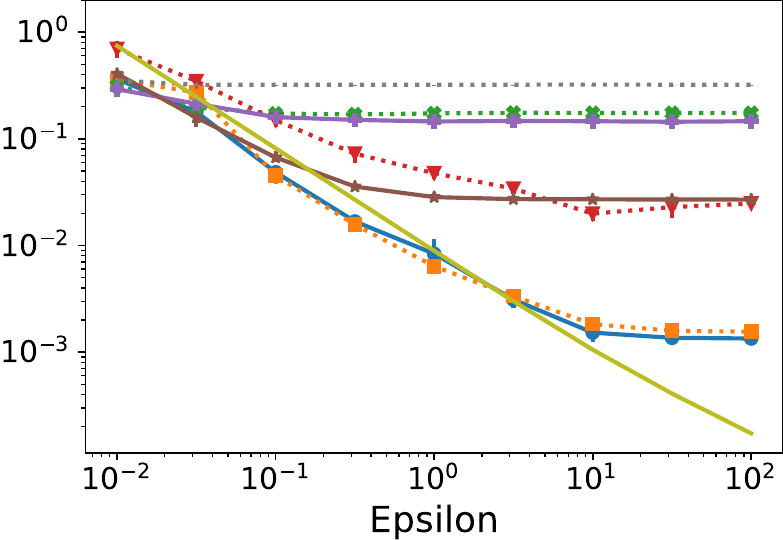}}
\subcaptionbox{Titanic}{\includegraphics[width=0.312\textwidth]{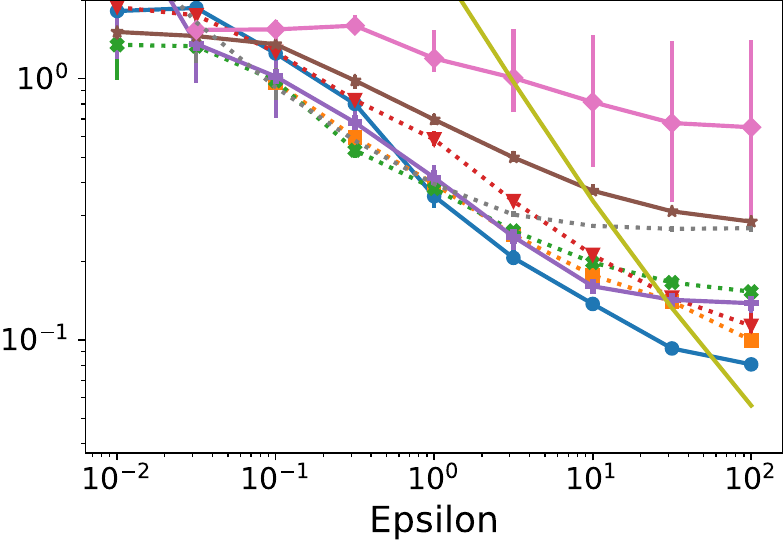}}
\caption{\label{fig:2way} workload error of competing mechanisms on the \workload{all-2way} workload for $\epsilon = 0.01, \dots, 100$.}
\end{figure*}

The workloads considered in our main experiments (\general, \target, and \weighted) consisted of different subsets of $3$-way marginal queries.  In this section, we provide additional experiments on the \workload{all-way} workload, which contains all 2-way marginal queries.   The experimental setup is exactly the same as described in \cref{sec:experiments}.  Results are shown in \cref{fig:2way}.  As before, \aim consistently outperforms all competing mechanisms on this workload.  Interestingly, \gaussian exhibits strong performance on this workload, especiall for higher values of $\epsilon$, and the \msnbc dataset.  While \gaussian was also strong in this regime for the other workloads, here it is surpassing \aim at $\epsilon=1$, whereas on the other workloads it was generally not competitive until $\epsilon \geq 31$.  This behavior can be attributed to the fact that \workload{all-2way} is a smaller workload than the others considered, and therefore the noise magnitude required to measure every query in the workload is not excessively large.  We remind the reader that unlike the other mechanisms, \gaussian is a baseline that \emph{does not} produce synthetic data --- it only produces noisy query answers.  Thus, even though it achieves lower workload error than the synthetic data mechanisms, it would not be a viable mechanism to use if synthetic data was needed.  

}

\end{document}